\title{How to Reduce Temporal Cliques to Find Sparse Spanners} 
\author{Sebastian Angrick}{University of Potsdam, Hasso Plattner  Institute, Faculty of Digital Engineering}{sebastian.angrick@student.hpi.de}{https://orcid.org/0009-0007-9840-9611}{}
\author{Ben Bals}{University of Potsdam, Hasso Plattner  Institute, Faculty of Digital Engineering}{ben.bals@student.hpi.de}{}{}
\author{Tobias Friedrich}{University of Potsdam, Hasso Plattner  Institute, Faculty of Digital Engineering}{tobias.friedrich@hpi.de}{}{}
\author{Hans {\defaultGraphName}awendowicz}{University of Potsdam, Hasso Plattner  Institute, Faculty of Digital Engineering}{hans.gawendowicz@hpi.de}{https://orcid.org/0000-0001-8394-4469}{}
\author{Niko Hastrich}{University of Potsdam, Hasso Plattner  Institute, Faculty of Digital Engineering}{niko.hastrich@student.hpi.de}{https://orcid.org/0000-0002-3753-145X}{}
\author{Nicolas Klodt}{University of Potsdam, Hasso Plattner  Institute, Faculty of Digital Engineering}{nicolas.klodt@hpi.de}{https://orcid.org/0000-0002-5406-7441}{}
\author{Pascal Lenzner}{University of Potsdam, Hasso Plattner  Institute, Faculty of Digital Engineering}{pascal.lenzner@hpi.de}{}{}
\author{Jonas Schmidt}{University of Potsdam, Hasso Plattner  Institute, Faculty of Digital Engineering}{jonas.schmidt@student.hpi.de}{https://orcid.org/0000-0002-1115-3868}{}
\author{George Skretas}{University of Potsdam, Hasso Plattner  Institute, Faculty of Digital Engineering}{georgios.skretas@hpi.de}{}{}
\author{Armin Wells}{University of Potsdam, Hasso Plattner  Institute, Faculty of Digital Engineering}{armin.wells@student.hpi.de}{https://orcid.org/0000-0002-5459-7899}{}
\authorrunning{Angrick et al.} 
\keywords{temporal graphs, temporal clique, temporal spanner, reachability, graph connectivity, graph sparsification} 
\DeclareMathOperator{\bigO}{\mathcal{O}}
\DeclareMathOperator{\smallO}{o}
\DeclareMathOperator{\N}{\mathbb{N}}
\DeclareMathOperator{\Z}{\mathbb{Z}}
\DeclareMathOperator*{\argmin}{arg\,min}
\DeclareMathOperator*{\argmax}{arg\,max}
\let\originalleft\left
\let\originalright\right
\renewcommand{\left}{\mathopen{}\mathclose\bgroup\originalleft}
\renewcommand{\right}{\aftergroup\egroup\originalright}
\newcommand{\biclique}[3]{\left({#1}, {#2}, {#3}\right)}
\newcommand{\cliqueEdges}[1]{{#1} \otimes {#1}}
\newcommand{\bicliqueEdges}[2]{{#1} \otimes {#2}}
\newcommand{\undirEdge}[2]{\left\{{#1}, {#2}\right\}}
\DeclareMathOperator{\edgeLabel}{\lambda}
\newcommand{\edgeLabele}[1]{\edgeLabel\left({#1}\right)}
\newcommand{\edgeLabelv}[2]{\edgeLabele{\undirEdge{{#1}}{{#2}}}}
\DeclareMathOperator{\cliqueSpannerSizeOp}{\mathcal{C}}
\DeclareMathOperator{\bicliqueBispannerSizeOp}{\mathcal{D}}
\newcommand{\cliqueSpannerSize}[1]{\cliqueSpannerSizeOp\left({#1}\right)}
\newcommand{\bicliqueBispannerSize}[1]{\bicliqueBispannerSizeOp\left({#1}\right)}
\newcommand{\defaultBiclique}{D = \biclique{A}{B}{\edgeLabel}}
\newcommand{\defaultGraphName}{G}
\DeclareMathOperator{\earlyMatching}{\pi^{\text{--}}}
\newcommand{\earlyPartner}[1]{\earlyMatching\left({#1}\right)}
\DeclareMathOperator{\lateMatching}{\pi^+}
\newcommand{\latePartner}[1]{\lateMatching\left({#1}\right)}
\newcommand{\lateMatchingG}[1]{\pi^+_{{#1}}}
\newcommand{\lessOp}[1]{\prec_{#1}}
\newcommand{\leqAt}[3]{{#1} \preceq_{#3} {#2}}
\newcommand{\geqAt}[3]{{#1} \succeq_{#3} {#2}}
\newcommand{\lessAt}[3]{{#1} \prec_{#3} {#2}}
\newcommand{\greaterAt}[3]{{#1} \succ_{#3} {#2}}
\newcommand{\orderingAt}[1]{\sigma_{#1}}
\newcommand{\edgeIndex}[2]{\orderingAt{#1}\left({#2}\right)}
\DeclareMathOperator{\inSetOp}{In}
\DeclareMathOperator{\outSetOp}{Out}
\newcommand{\inSet}[1]{\inSetOp\left({#1}\right)}
\newcommand{\outSet}[1]{\outSetOp\left({#1}\right)}
\newcommand{\inSetTime}[2]{\inSetOp_{#1}^{#2}}
\newcommand{\outSetTime}[2]{\outSetOp_{#1}^{#2}}
\DeclareMathOperator{\ringShiftOp}{SM}
\newcommand{\ringShift}[1]{\ringShiftOp\left({#1}\right)}
\DeclareMathOperator{\xmod}{mod}
\newcommand{\setInc}[1]{[{#1}]}
\newcommand{\setExc}[1]{\{0, \ldots, {#1}-1\}}
\DeclareMathOperator{\maliciousOp}{NotRev}
\newcommand{\malicious}[1]{\maliciousOp_{{#1}}}
\newif\iflong
\newif\ifshort
\DeclareDocumentCommand{\set}{m g o}%
{%
    \IfNoValueTF{#3}{\left}{#3}\{#1
            \IfNoValueTF{#2}{}{\ \IfNoValueTF{#3}{\left}{#3}\vert\ \vphantom{#1}#2\IfNoValueTF{#3}{\right.}{}}
                \IfNoValueTF{#3}{\right}{#3}\}%
}
\DeclareDocumentCommand{\abs}{m o}%
{%
    \IfNoValueTF{#2}{\left}{#2}\vert#1
                \IfNoValueTF{#2}{\right}{#2}\vert%
}
\newcommand{\caseIf}{\text{if }}
\newcommand*{\ie}[1]{(i.e., #1)}
\newcommand*{\eg}[1]{(e.g., #1)}
\DeclareMathOperator{\exGraphOp}{\times}
\newcommand{\exGraph}[2]{{#1} \exGraphOp {#2}}
\DeclareMathOperator{\bagMapOp}{\phi}
\newcommand{\bagMap}[1]{\bagMapOp\left({#1}\right)}
\newcommand{\smsmbg}[2]{\ringShiftOp(#1, #2)}
\newcommand{\nodes}{vertices}
\begin{document}

\maketitle

\begin{abstract}
	Many real-world networks, such as transportation or trade networks, are
dynamic in the sense that the edge-set may change over time, but these changes
are known in advance. This behavior is captured by the temporal graphs model,
which has recently become a trending topic in theoretical computer science. A
core open problem in the field is to prove the existence of linear-size temporal
spanners in temporal cliques, i.e., sparse subgraphs  of complete temporal
graphs that ensure all-pairs reachability via temporal paths. So far, the best
known result is the existence of temporal spanners with $\bigO(n\log n)$ many edges.
We present significant progress towards proving whether linear-size temporal
spanners exist in all temporal cliques.

We adapt techniques used in previous works and heavily expand and generalize
them. This allows us to show that the
existence of a linear spanner in cliques and bi-cliques is equivalent and using this, we provide a simpler and more intuitive proof of the $\bigO(n\log n)$ bound by giving an efficient algorithm for finding linearithmic spanners.
Moreover, we use our novel and efficiently computable approach to show that a large class of temporal
cliques, called edge-pivotable graphs, admit linear-size temporal spanners. To contrast this, we investigate other
classes of temporal cliques that do not belong to the class of edge-pivotable
graphs. We introduce two such graph classes and we develop novel algorithmic techniques for
establishing the existence of linear temporal spanners in these graph classes as
well.


\end{abstract}


\newcommand{\george}[1]{{\color{purple}[George: #1]}}

\newcommand{\pascal}[1]{{\color{blue}[Pascal: #1]}}

\newpage
\section{Introduction}
Many real-world networks, like transportation networks with scheduled train connections or social networks with repeating meeting schedules, are dynamic, i.e., their edge-set can change over time. To address this, \emph{temporal graphs} have gained significant attention in recent research in theoretical computer science ~\cite{kempe_connectivity,Michail2015}. In these graphs it is assumed that the edge-set can change in every time step but the schedule of the availability of edges is known in advance. 
Various problems have been studied in this model, including network redesign \cite{DBLP:conf/ijcai/DeligkasES23,DBLP:journals/jcss/EnrightMMZ21,DBLP:conf/stacs/FuchsleMNR22}, vertex cover \cite{AkridaTemporalVertex2020,DondiTimeline2023,HammTheComplexity2022}, and influence maximization \cite{DBLP:conf/atal/DeligkasEGS23}. A general theme is that insights from the static graph setting are only of limited value in temporal graphs, since the temporal availability of the edges makes these problems much harder or even infeasible to solve. For example, even fundamental statements like Menger's Theorem do not hold for temporal graphs~\cite{MertziosTemporalNetwork2013}. Also, one of the most essential problems in graphs, that of finding a sparse spanner, is much harder in temporal graphs. A sparse spanner is a small subset of the edge-set that ensures all-pairs connectivity. In static graphs, it is well-known that linear-size spanners exist and can be computed efficiently, e.g., minimum spanning trees~\cite{cormen2022introduction}.

Finding sparse \emph{temporal spanners} is much more intricate in temporal graphs, since there paths do not necessarily compose \ie{concatenating two paths where the endpoint of the first is the starting point of the second does not necessarily yield a temporal path}.
More specifically, Kempe, Kleinberg, and
Kumar~\cite{kempe_connectivity} showed that there is a class of temporal graphs
having \(\Omega(n \log n)\) many edges, such that no single edge can be removed
while preserving temporal connectivity. In the same work, the authors asked,
what is the minimum number of edges \(c(n)\), such that any temporal graph with
\(n\) vertices has a temporal spanner using at most \(c(n)\) many edges.  Almost
15 years later, this question was answered by Axiotis and
Fotakis~\cite{axiotis_size_approx_spanner} by providing a class of dense but non-complete
temporal graphs having \(\Theta(n^{2})\) edges, such that any subgraph
preserving temporal connectivity necessarily needs to contain \(\Omega(n^{2})\)
edges as well.

This negative result raises the question if there even exists a natural class of temporal graphs that
can be sparsified without losing temporal connectivity. In particular, the question if every temporal clique has a spanner of linear size has
remained open for over 20 years now. In a 2019 breakthrough paper, Casteigts, Peters, and Schoeters~\cite{casteigts_fireworks_conf} proved that temporal cliques always admit a
spanner of size \(O(n\log n)\). Not much progress has been made since then, with most work on the topic veering off towards the study of temporal spanners in random temporal graphs \cite{casteigts_threshold}, or focusing on low stretch \cite{BiloBlackout2022,BiloSparse2022} or a game-theoretic setting~\cite{BiloTemporalNetwork2023}. 

In this paper, we present significant progress on the long-standing open problem by showing that a large class of temporal cliques does admit linear-size temporal spanners.

\subsection{Our Contribution}
In the first part of the paper, we adapt techniques used in previous works and heavily expand and generalize them. In particular, in~\cite{casteigts_fireworks_conf} a temporal spanner in cliques is computed by a reduction from a clique to a bi-clique. We deepen this connection 
by showing that the existence of a linear spanner in cliques and bi-cliques actually is equivalent.
Furthermore, in the reduction of Casteigts, Peters, and Schoeters~\cite{casteigts_fireworks_conf}, the authors exploit the property that the set containing the earliest edge of each vertex is a perfect matching of the bi-clique and that this also holds for the set of latest edges.  
We show
that this property can also be assumed when searching for bi-spanners in bi-cliques directly. This yields a simple and more intuitive algorithm for computing linearithmic spanners, thus proving the existence of size $O(n \log n)$ temporal spanners in temporal cliques.

Moreover, we reconsider the notion of a \emph{pivot-vertex}~\cite{casteigts_fireworks_conf,casteigts_threshold}. This is a vertex $u$  that can be reached by every vertex in the graph until some time step $t$, and then $u$ can reach every other vertex in the graph by a temporal path starting at or after time $t$. 
We transfer this notion to bi-cliques, extend it to \emph{$c$-pivot-edges}, and we strengthen it significantly to find small spanners. If a bi-clique contains a $c$-pivot-edge, we can iteratively reduce the size of the graph, while maintaining a linear-size temporal spanner, until the instance is solved, or the instance no longer has a $c$-pivot-edge.
This reduction rule is widely applicable and yields linear-size temporal spanners in many cases.

In the second part of the paper, we investigate temporal cliques that are not edge-pivotable graphs.
We identify two such graph classes, called \emph{shifted matching graphs} and \emph{product graphs}, and for both we provide novel techniques that allow us to efficiently compute linear-size spanners. Thus, we extend the class of temporal cliques that admit linear-size temporal spanners even further. It is open if our classes cover every temporal clique. However,  
finding other graphs that do not belong to our graph classes seems to be challenging.

\ifshort
Statements where proofs or details are omitted due to space constraints are marked with~"$\star$". A full version containing all proofs and details is provided in the appendix. 
\fi

\subsection{Detailed Discussion of Related Work}

The study of temporal spanners was initiated by Kempe, Kleinberg and Kumar \cite{kempe_connectivity}. In their seminal paper, they introduce the temporal graphs model and show that spanners on temporal hypercubes have \(\Omega(n \log n)\) edges. Axiotis and Fotakis \cite{axiotis_size_approx_spanner} improve on this by providing a class of graphs with \(\Theta(n^{2})\) edges, such that no edge can be removed while maintaining temporal connectivity. Later, Casteigts, Peters, and Schoeters~\cite{casteigts_fireworks_conf} study temporal cliques and show that temporal spanners of size \(O(n\log n)\) always exist. Their "fireworks" algorithm combines different techniques, such as dismountability and delegation, to reduce the clique to a bipartite graph, and then compute a spanner on that graph, that can then be translated to a spanner of the original graph.

Casteigts, Raskin, Renken and Zamaraev \cite{casteigts_threshold} take a different approach. 
They consider random temporal graphs, where the host graph is a clique, and each edge is added to the temporal graph with probability $p$. They show a sharp threshold for the generated temporal graph being temporally connected. Even more surprisingly, they show that as soon as this threshold is reached, the generated graph will contain an almost optimal spanner of size $2n+o(n)$. On a different note, Bilò, D'Angelo, Gualà, Leucci, and Rossi~\cite{BiloSparse2022} investigate whether temporal cliques admit small temporal spanners with low stretch. Their main result shows that they can always compute a temporal spanner with stretch $O(\log n)$ and $O(n \log^2 n)$ edges.


\section{Preliminaries}\label{sec:prelims}
\newcommand\restr[2]{{
  \left.\kern-\nulldelimiterspace 
  #1 
  \vphantom{\big|} 
  \right|_{#2} 
  }}

For two sets $A, B$, we write $\bicliqueEdges{A}{B} \coloneqq \{\undirEdge{a}{b}
\mid a \in A, b \in B, \text{ and } a \ne b\}$. If $A$ and $B$ are disjoint, we
write $A \sqcup B$ for their union. For $n \in \N^+$, let $[n] = \{1, \dots,
n\}$.

Let $(V, E)$ be an undirected graph. This together with a labeling function\footnote{In general each edge could have more than one label. We restrict ourselves to the case of \emph{simple} temporal graphs where each edge has exactly one label. Both formulations are equivalent for our use-case~\cite{casteigts_fireworks_conf}.}
$\edgeLabel \colon E \to \N$ forms the \emph{temporal graph} $\defaultGraphName
= (V, E, \edgeLabel)$. Semantically, an edge $e \in E$ is present at time
$\edgeLabele{e}$. For a vertex-set $U \subseteq V$, write $\defaultGraphName[U]$
for the subgraph of $\defaultGraphName$ induced by $U$. For $v \in V$, let
$N(v)$ denote the set of neighbors of $v$, that is, all vertices~$u$ with
$\undirEdge{v}{u} \in E$. For $v \in V$ and $u, u' \in N(v)$, we define the
order $\leqAt{u}{u'}{v}$, if the edge $\undirEdge{u}{v}$ is not later than the
edge $\undirEdge{u'}{v}$, i.e., if $\edgeLabelv{u}{v} \le \edgeLabelv{u'}{v}$.
We define $\prec_v$, $\succeq_v$, and $\succ_v$ analogously.

If $P = v_1 \dots v_k$ is a path in $(V, E)$ and for all $i \in [k-2]$, we have
$\leqAt{v_{i}}{v_{i+2}}{v_{i+1}}$, then $P$ is a \emph{temporal path} in
$G$.\footnote{This is also referred to as non-strict
temporal paths, where strict temporal paths ask that
$\lessAt{v_i}{v_{i+1}}{v_{i+1}}$.} 
We will also
write $P = v_1 \to v_2 \overset{S}{\leadsto} v_k$, where, $v_1 \to v_2$ describes
a direct edge and $v_2 \overset{S}{\leadsto} v_k$ is a temporal path
starting after $\edgeLabelv{v_1}{v_2}$ using only edges of $S \subseteq E$.
Since only reachability is our focus, we will also construct non-simple paths. 
Note that such paths can be shortcut to
vertex-disjoint paths while remaining temporal.

Vertex $u$ can \emph{reach} vertex $v$ if there is a temporal path from $u$ to $v$. Also, $G$ is called \emph{temporally connected}, if every vertex can reach every other vertex.
A \emph{spanner} for $G$ is a set $S \subseteq E$, such that $(V, S, \restr{\edgeLabel}{S})$ is temporally connected, where $\restr{\edgeLabel}{S}$ is $\edgeLabel$ restricted to~$S$.

The edges with the smallest and largest label take a special
role.
For $v \in V$, let $\earlyPartner{v} \coloneqq \argmin_{u \in N(v)}
\edgeLabelv{v}{u}$ be $v$'s earliest neighbor and let $\latePartner{v} \coloneqq \argmax_{u
\in N(v)} \edgeLabelv{v}{u}$ be $v$'s latest neighbor.
We do not consider graphs with isolated
vertices and we justify in \Cref{sec:cliques-and-bi-cliques} that we only use injective labelings, so this is well-defined. Denote the set of all earliest
edges as $\earlyMatching \coloneqq \{\undirEdge{v}{\earlyPartner{v}}\mid v \in
V\}$ and the set of latest edges as $\lateMatching \coloneqq \{\undirEdge{v}{\latePartner{v}}\mid v \in V\}$. If both sets $\earlyMatching$ and 
$\lateMatching$ form a perfect matching
for $V$, then $G$ is \emph{extremally matched}. Also, for
\(v,w\in V\), and \(e = \undirEdge{v}{w} \in E \), we write \(\edgeIndex{v}{e}\) or simply $\edgeIndex{v}{w}$ for the index of \(e\) in $\setInc{n}$, when ordering all edges incident to
\(v\) ascendingly by their label \eg{if \(w=\earlyPartner{v}\), then
\(\edgeIndex{v}{e}=1\)}.

Let $v \in V$ be a vertex and $t \in \N$ be a time step. We define
the set $\inSetTime{v}{t} \subseteq V$ to be the vertices reaching $v$ via a
temporal path with every label being at most $t$. Respectively, we define
the set $\outSetTime{v}{t} \subseteq V$ to be the vertices that $v$ can reach via any
temporal path starting at or after $t$.  As a shorthand, for an edge $e =
\undirEdge{u}{v} \in E$, we define $\inSet{e} =
\inSetTime{u}{\edgeLabel(e)} \cup \inSetTime{v}{\edgeLabel(e)}$ and $\outSet{e}
= \outSetTime{u}{\edgeLabel(e)} \cup \outSetTime{v}{\edgeLabel(e)}$. Note, that
we always have $u,v \in \inSet{e} \cap \outSet{e}$.

We focus only on temporal graphs where the underlying graph is a clique or a complete bipartite graph. In these cases, the set of edges is self-evident and for cliques we simply omit it and write $C = (V, \edgeLabel)$. We, similarly, denote complete bipartite temporal graphs by a triple $\defaultBiclique$, where $A$ and $B$ are the two parts and $\edgeLabel \colon \bicliqueEdges{A}{B} \to \N$ is the corresponding labeling function. When necessary, we refer to the set of edges as $E(C) \coloneqq \cliqueEdges{V}$ or $E(D) \coloneqq \bicliqueEdges{A}{B}$. We call $C$ a \emph{temporal clique} and $D$ a \emph{temporal bi-clique}. 
Finally, we define an asymmetric analogue of spanners for temporal bi-cliques. We call a set $S \subseteq \bicliqueEdges{A}{B}$ a \emph{bi-spanner} for $D$, if every $a \in A$ can reach every $b \in B$ through a temporal path that is contained in $S$. Note that here, only one side needs to reach the other.

For $e = \undirEdge{u}{v}$ in temporal cliques or bi-cliques, notice that all vertices must be in $\inSet{e}$ or $\outSet{e}$, since all vertices have an edge to $u$ or $v$, which puts this vertex in $\inSet{e}$ or $\outSet{e}$, depending on the time-label. See \Cref{fig:in-out-set} for a visualization of $\inSetOp$ and $\outSetOp$.
\begin{figure}[t]
    \centering
    \begin{subfigure}[t]{0.45\textwidth}
        \centering
        \includegraphics[height=2cm]{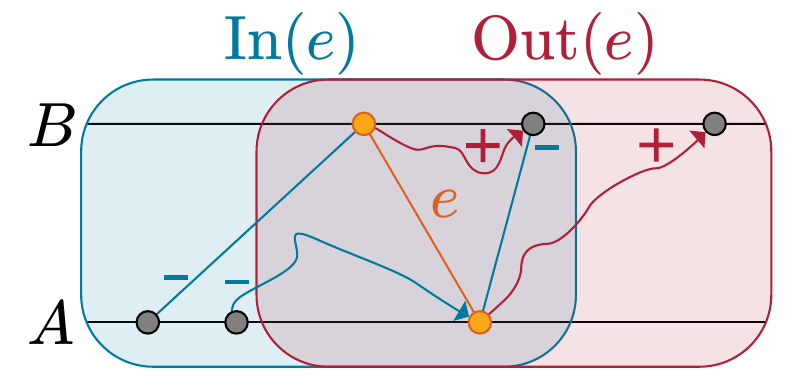}
        \caption{Blue edges and paths have labels earlier than $\edgeLabele{e}$, red ones are later.}
        \label{fig:in-out-set}
    \end{subfigure}
    \hfill
    \begin{subfigure}[t]{0.51\textwidth}
    	\centering
    	\includegraphics[height=2cm]{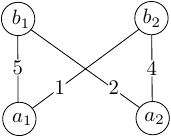}
    	\caption{A bi-clique where $a_1$ cannot reach $a_2$ using a path starting after time $2$ and $a_1$ can not be reached by $a_2$ using a path arriving before $5$, so $\inSetTime{a_1}{3} \cup \outSetTime{a_1}{3} \neq V$.}
    	\label{fig:node-pivot-counterexample}
    \end{subfigure}
    \caption{Examples for the sets $\inSetOp$ and $\outSetOp$.}
    \label{fig:in-out-examples}
\end{figure}

\begin{observation}\label{lem:in-cup-out-eq-all-vertices}
  	Let \(G = (V, E, \edgeLabel)\) be a temporal clique or bi-clique.
	For all $e \in E$, we have $\inSet{e} \cup \outSet{e} = V$.
\end{observation}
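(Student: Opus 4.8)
The plan is a pointwise argument: fix an arbitrary edge $e = \undirEdge{u}{v} \in E$, write $t \coloneqq \edgeLabele{e}$, and show that every vertex $w \in V$ lies in $\inSet{e}$ or in $\outSet{e}$. The endpoints $u, v$ are already handled by the remark preceding the statement (namely $u, v \in \inSet{e} \cap \outSet{e}$), so I would assume $w \notin \{u, v\}$.

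The first step is to observe that $w$ is adjacent to at least one endpoint of $e$. In the clique case this is immediate, since the underlying graph is complete and $w \ne u, v$. In the bi-clique case $\defaultBiclique$, the edge $e$ has one endpoint in $A$ and one in $B$; if $w \in A$, it is adjacent to the endpoint of $e$ lying in $B$, and symmetrically if $w \in B$. Either way, fix an endpoint $x \in \{u, v\}$ of $e$ with $\undirEdge{w}{x} \in E$.

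The second step is a case split on $\edgeLabelv{w}{x}$, using that the single edge $\undirEdge{w}{x}$ is (vacuously) a temporal path. If $\edgeLabelv{w}{x} \le t$, then $w$ reaches $x$ via a temporal path all of whose labels are at most $t$, hence $w \in \inSetTime{x}{t} \subseteq \inSet{e}$. Otherwise $\edgeLabelv{w}{x} > t$, so $x$ reaches $w$ via a temporal path starting at time $\edgeLabelv{w}{x} \ge t$, hence $w \in \outSetTime{x}{t} \subseteq \outSet{e}$. In both cases $w \in \inSet{e} \cup \outSet{e}$; since $w$ was arbitrary and $\inSet{e}, \outSet{e} \subseteq V$, this yields $\inSet{e} \cup \outSet{e} = V$.

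I do not expect a genuine obstacle here — the statement is essentially a formalization of the sentence preceding it. The only point needing a little care is the bi-clique case, where a vertex need not be adjacent to \emph{both} endpoints of $e$; it suffices that it is adjacent to the one on the opposite side. It is also worth recording explicitly that a walk consisting of a single edge trivially satisfies the temporal-path condition, so nothing about the relative order of $\edgeLabelv{w}{x}$ and neighboring labels is required.
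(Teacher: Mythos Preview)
Your argument is correct and matches the paper's own justification, which is simply the sentence preceding the observation: every vertex has a direct edge to $u$ or $v$, and that edge places it in $\inSet{e}$ or $\outSet{e}$ depending on whether its label is at most or greater than $\edgeLabele{e}$. You have faithfully formalized exactly this reasoning, including the one point the paper stresses afterward --- that in the bi-clique case one must use the endpoint on the opposite side.
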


Note that for the case of bi-cliques considering both endpoints of $e$ is crucial to ensure that every vertex has a direct edge to an endpoint. Otherwise, there exist simple instances with a vertex $v$ and a time $t$ with $\inSetTime{v}{t} \cup \outSetTime{v}{t} \neq A \sqcup B$. See~\Cref{fig:node-pivot-counterexample}.

\begin{remark}
  For \(v \in A \sqcup B\) and \(t\in \N\), the analogue to
\autoref{lem:in-cup-out-eq-all-vertices}
\ie{\(\inSetTime{v}{t}\cup \outSetTime{v}{t} = A\sqcup B\)} does not necessarily hold, as one can see in \Cref{fig:node-pivot-counterexample}.
\end{remark}

Finally, for $n \in \N$, let $\cliqueSpannerSize{n}$ be the minimum
number such that every temporal clique with $n$ vertices admits a spanner with
at most $\cliqueSpannerSize{n}$ edges. Analogously, let
\(\bicliqueBispannerSize{n}\) be the minimum number such that any bi-clique in which both
sides have \(n\) vertices admits a bi-spanner with at most
\(\bicliqueBispannerSize{n}\) edges. We study the open problem whether $\cliqueSpannerSize{n} \in
\bigO(n)$ and whether $\bicliqueBispannerSize{n} \in
\bigO(n)$ holds.

\section{Cliques and Bi-Cliques}\label{sec:cliques-and-bi-cliques}
In this section, we prove that the asymptotic behavior of spanners in cliques
differs from the asymptotic behavior of bi-spanners in bi-cliques only by a
constant factor \ie{we prove \(\cliqueSpannerSize{n} \in
\Theta(\bicliqueBispannerSize{n})\)}.  First, we show that
\(\cliqueSpannerSize{n} \in \bigO(\bicliqueBispannerSize{n})\). After this, we
show \(\cliqueSpannerSize{n} \in \Omega(\bicliqueBispannerSize{n})\) by proving each inequality of
\(\cliqueSpannerSize{n} \geq \bicliqueBispannerSize{\frac{n}{2}} \geq
\frac{1}{4}\bicliqueBispannerSize{n}\) separately. Bi-cliques and bi-spanners were already
discussed in \cite{casteigts_fireworks_conf}, when deriving \(\cliqueSpannerSize{n} \in
\bigO(n \log n)\). They called these instances \emph{residual} as they remained
once their other techniques were not applicable anymore. Our result can be
seen as evidence that the connection between cliques and bi-cliques is fundamental, beyond being a step in their algorithm. First, we state the main result of this section.

\iflong
\begin{theorem}
\fi
\ifshort
\begin{theorem}
\fi
\label{thm:bi-cliques-are-cliques}
  The worst-case minimum size of a spanner for temporal cliques and temporal bi-cliques only differs by a constant factor, 
  that is, we have \(\cliqueSpannerSize{n} \in \Theta(\bicliqueBispannerSize{n})\).
\end{theorem}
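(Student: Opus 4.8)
The plan is to prove the two asymptotic bounds $\cliqueSpannerSize{n} \in \bigO(\bicliqueBispannerSize{n})$ and $\cliqueSpannerSize{n} \in \Omega(\bicliqueBispannerSize{n})$ separately, with the latter broken into the chain $\cliqueSpannerSize{n} \geq \bicliqueBispannerSize{n/2} \geq \tfrac14 \bicliqueBispannerSize{n}$ as announced in the section preamble.

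For the upper bound $\cliqueSpannerSize{n} \in \bigO(\bicliqueBispannerSize{n})$, I would take an arbitrary temporal clique $C = (V, \edgeLabel)$ and build a bi-clique whose bi-spanner can be pulled back to a spanner of $C$. The natural construction follows the reduction idea of~\cite{casteigts_fireworks_conf}: associate to each vertex $v \in V$ two copies, an ``in-copy'' and an ``out-copy'', placing the in-copies on one side $A$ and the out-copies on the other side $B$, and label the bi-clique edge between $v$'s in-copy and $w$'s out-copy by $\edgeLabelv{v}{w}$ (using the injectivity of $\edgeLabel$, which the section says is justified here, to keep everything well-defined and to break ties). A bi-spanner $S'$ of this bi-clique ensures every $a \in A$ reaches every $b \in B$, i.e.\ every vertex of $C$ reaches every other vertex; mapping each bi-clique edge back to its clique edge yields a set $S \subseteq E(C)$ of size $|S'|$, and one checks that a temporal path in the bi-clique translates to a temporal path in $C$ because the labels were copied verbatim and the in/out split only restricts, never enlarges, the set of admissible walks. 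This gives $\cliqueSpannerSize{n} \leq \bicliqueBispannerSize{n} + \bigO(n)$ or so, which is $\bigO(\bicliqueBispannerSize{n})$ since $\bicliqueBispannerSize{n} = \Omega(n)$ trivially (a bi-spanner must touch every vertex).

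For the lower bound, the first inequality $\cliqueSpannerSize{n} \geq \bicliqueBispannerSize{n/2}$ goes the other direction: given a bi-clique $D = \biclique{A}{B}{\edgeLabel}$ with $|A| = |B| = n/2$ that is hard to bi-span, I would construct a temporal clique on $n$ vertices (the union $A \sqcup B$, now made complete) by keeping the existing $A$--$B$ labels and assigning the new $A$--$A$ and $B$--$B$ edges labels in ``buffer'' time windows — all $A$--$A$ edges strictly before every $A$--$B$ label, all $B$--$B$ edges strictly after every $A$--$B$ label — so that any temporal path from an $A$-vertex to a $B$-vertex is forced to use exactly one $A$--$B$ edge in the middle, preceded only by $A$--$A$ edges and followed only by $B$--$B$ edges. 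Then a spanner $S$ of this clique, restricted to the $A$--$B$ edges, must already realize all $A$-to-$B$ reachability (the $A$--A and $B$--B edges can be used for free but they alone can never cross the bipartition), hence it is a bi-spanner of $D$, giving $\bicliqueBispannerSize{n/2} \leq \cliqueSpannerSize{n}$. The second inequality $\bicliqueBispannerSize{n/2} \geq \tfrac14\bicliqueBispannerSize{n}$ is a padding/monotonicity argument: a hard bi-clique on $n/2 + n/2$ vertices can be blown up into one on $n+n$ vertices by duplicating vertices (copying all incident labels, again fixing ties via injectivity), so that a bi-spanner of the big instance induces one of the small instance of at most $4$ times the size — the factor $4$ because each of the (at most) $4$ pairs of copy-classes must be spanned, or more carefully because $\bicliqueBispannerSize{\cdot}$ is superadditive and $\bicliqueBispannerSize{n} \leq \bicliqueBispannerSize{n/2 + n/2}$ expands by a constant.

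The main obstacle I expect is the upper-bound direction: verifying that the in/out-copy translation genuinely preserves temporal paths in both directions requires care, because a temporal path in $C$ that alternates freely between ``being an endpoint'' and ``being an intermediate vertex'' has to be shown to correspond to a legal alternating walk across the bipartition of $D$ — this is exactly the subtlety flagged in \autoref{lem:in-cup-out-eq-all-vertices} and the following remark, namely that in a bi-clique one must track both endpoints of an edge. I would handle this by making the reduction map a path $v_1 v_2 \dots v_k$ in $C$ to the alternating sequence (out-copy of $v_1$, in-copy of $v_2$, out-copy of $v_2$, in-copy of $v_3$, \dots) through $D$, inserting the ``internal'' in/out edges of each $v_i$, and arguing these internal edges are cheap (only $\bigO(n)$ of them total, or they can be absorbed without cost). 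The reverse simulation, from a bi-clique path back to $C$, is then the easy direction. Everything else — the buffer-window labelings and the duplication argument — is routine once one commits to injective labelings, which the paper has already promised to justify in this section.
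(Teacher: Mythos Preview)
Your overall plan matches the paper's, and the upper-bound direction is essentially the paper's construction. (The paper handles the missing $v$-to-$v'$ edge by assigning it a label $\mu$ larger than all others and then dropping that edge if it occurs at the end of a bi-spanner path; your worry about simulating clique paths inside the bi-clique is misplaced, since only the direction bi-clique $\to$ clique is needed, and that is immediate because labels are copied verbatim.)

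The lower bound, however, has two genuine gaps.

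First, your buffer-window construction has the roles reversed. With $A$--$A$ edges early and $B$--$B$ edges late, a temporal $a$-to-$b$ path in the clique is free to begin with $A$--$A$ edges and end with $B$--$B$ edges; restricting a spanner $S$ to $\bicliqueEdges{A}{B}$ then only yields a path from some $a^{*}$ to some $b^{*}$ with possibly $a^{*}\neq a$ and $b^{*}\neq b$, so $S\cap(\bicliqueEdges{A}{B})$ need not be a bi-spanner of $D$. The paper's construction is the opposite --- $B$--$B$ edges at time $0$ and $A$--$A$ edges at a time $\mu$ larger than all others --- and the point is that then an $a$-to-$b$ path can use neither an $A$--$A$ edge (it would be trapped in $A$ thereafter) nor a $B$--$B$ edge (already too late once the path has entered the middle range), forcing the entire path into $\bicliqueEdges{A}{B}$.

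Second, your argument for $\bicliqueBispannerSize{n/2}\geq\tfrac14\bicliqueBispannerSize{n}$ goes in the wrong direction. Blowing up a hard $n/2$-instance and invoking superadditivity would at best yield $\bicliqueBispannerSize{n}\geq c\cdot\bicliqueBispannerSize{n/2}$, which is the inequality you already have from the first step, not its converse. What is actually needed is a \emph{sub}additivity statement: every bi-clique on $n+n$ vertices admits a bi-spanner of size at most $4\,\bicliqueBispannerSize{n/2}$. The paper obtains this by partitioning each of $A$ and $B$ into two halves, taking a minimum bi-spanner on each of the four induced $(n/2)+(n/2)$ sub-bi-cliques, and observing that their union is a bi-spanner of the whole (together with a monotonicity lemma to absorb the rounding).
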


We start with a technical lemma that allows us to focus on globally injective labeling functions for the rest of the paper. This slightly extends a result by \cite{casteigts_fireworks_conf} which proved the generality of locally injective labelings.
Note that the statement holds for all temporal graphs. 

\iflong
\begin{lemma}
\fi
\ifshort
\begin{lemma}[$\star$]
\fi
\label{lem:injective-labeling}
  Let $G=(V, E, \lambda)$ be a temporal graph. Then there is an injective labeling function $\lambda'$ s.t.\ every spanner $S \subseteq E$ for $(V, E, \lambda')$ is also a spanner for the original graph~$G$. In case the original graph $\defaultBiclique$ is a bi-clique, the same holds for a bi-spanner $S$.
\end{lemma}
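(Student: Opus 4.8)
The plan is to obtain $\lambda'$ as a \emph{strict refinement} of the order that $\lambda$ induces on $E$. Concretely, I would enumerate the edges $E = \{e_1, \dots, e_m\}$ so that $\lambda(e_1) \le \lambda(e_2) \le \dots \le \lambda(e_m)$, breaking ties among edges of equal $\lambda$-label arbitrarily, and set $\lambda'(e_i) \coloneqq i$. This $\lambda'$ is injective, maps into $\N$, and leaves the underlying graph $(V,E)$ untouched, so a clique stays a clique and a bi-clique stays a bi-clique. The one property that matters is the monotonicity $\lambda(e) < \lambda(f) \implies \lambda'(e) < \lambda'(f)$ for all $e,f \in E$, and hence also $\lambda(e) > \lambda(f) \implies \lambda'(e) > \lambda'(f)$.

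The core step is then to show that every temporal path in $(V, E, \lambda')$ is also a temporal path in $(V, E, \lambda)$. Given a $\lambda'$-temporal path $P = v_1 \dots v_k$ and an index $i \in [k-2]$, write $e = \undirEdge{v_i}{v_{i+1}}$ and $f = \undirEdge{v_{i+1}}{v_{i+2}}$; temporality under $\lambda'$ gives $\lambda'(e) \le \lambda'(f)$, and if $\lambda(e) > \lambda(f)$ held, monotonicity would force $\lambda'(e) > \lambda'(f)$, a contradiction, so $\lambda(e) \le \lambda(f)$. (If $v_i = v_{i+2}$ then $e=f$ and this is trivial, which also covers the non-simple paths that the model explicitly allows.) Consequently, if $S \subseteq E$ is a spanner for $(V, E, \lambda')$, then for any $u,v \in V$ a temporal path from $u$ to $v$ in $(V, S, \restr{\lambda'}{S})$ is, by the previous sentence, a temporal path in $(V, S, \restr{\lambda}{S})$, so $S$ is a spanner for $G$. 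The bi-clique case is verbatim the same, quantifying only over pairs $(a,b) \in A \times B$, which yields the bi-spanner statement.

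I do not expect a real obstacle here; the proof is short and the subtlety is purely conceptual — one must require $\lambda'$ to \emph{refine} $\lambda$ (so that no new temporal transition is ever created by the relabeling), and realize that breaking ties among equal-$\lambda$ edges is harmless exactly because the non-strict definition of temporal path already allows two consecutive edges to carry the same label. I would also add one sentence noting that shortcutting a non-simple temporal path to a vertex-disjoint one preserves temporality, as remarked in the preliminaries, so restricting to simple witnessing paths in the conclusion is unproblematic.
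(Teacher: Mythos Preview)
Your proposal is correct and follows essentially the same approach as the paper: both construct an injective $\lambda'$ that strictly refines the preorder induced by $\lambda$ (the paper via the explicit formula $\lambda'(e)=\lambda(e)\cdot(|E|+1)+f(e)$ for an arbitrary numbering $f$, you via ranks after a $\lambda$-respecting sort), and both then observe that $\lambda'(e)\le\lambda'(f)$ implies $\lambda(e)\le\lambda(f)$, so every $\lambda'$-temporal path is $\lambda$-temporal and the spanner/bi-spanner conclusion follows.
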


\iflong
\begin{proof}
  Assume $\lambda$ is not injective.
  To fix that, we simply arbitrarily order edges with the same label.
  For a simple construction, assume $f \colon E \to [|E|]$ is an arbitrary numbering of the edges and for all $e \in E$ set $\lambda'(e) \coloneqq \lambda(e) \cdot (|E|+1) + f(e)$.
  To see that $\lambda'$ is injective, first take distinct $e_1, e_2 \in E$ with $\lambda(e_1) = \lambda(e_2)$.
  In this case, $\lambda'(e_1) - \lambda'(e_2) = (\lambda(e_1) + f(e_1)) - (\lambda(e_2) + f(e_2)) = f(e_1) - f(e_2)$ and since $f$ is injective this difference is not 0, thus we have $\lambda'(e_1) \ne \lambda'(e_2)$. Now assume we have $e_1, e_2 \in E$ with $\lambda(e_1) < \lambda(e_2)$.
  Then we have 
  \begin{align*}
    \lambda'(e_1) &= \lambda(e_1) \cdot (|E| + 1) + f(e) \\
    &\le \lambda(e_1) \cdot (|E| + 1) + |E| \\
    &< (\lambda(e_1) + 1) \cdot (|E| + 1) \\
    &\le \lambda'(e_2).
  \end{align*}
  Also observe that for all $e_1, e_2 \in E$ we have $\lambda'(e_1) \le \lambda'(e_2)$ then $\lambda(e_1) \le \lambda(e_2)$.
  Thus if $P$ is a valid temporal path w.r.t. $(V, E, \lambda')$, then $P$ is also a valid temporal path w.r.t. $G$.
  Conclude that if $S$ is a spanner in $(V, E, \lambda')$, then $S$ is a spanner in $G$. 
  
  Notice that the same construction works for bi-spanners, since we can again break ties arbitrarily and all remaining temporal paths existed in the original instance.
\end{proof}
\fi

We can now relate minimum spanners in cliques to minimum bi-spanners in bi-cliques.

\iflong
\begin{lemma}
\fi
\ifshort
\begin{lemma}[$\star$]
\fi
\label{lem:clique-spanner-by-bispanner}
  Let \(C=(V,\edgeLabel)\) be a temporal clique. There is a temporal bi-clique
  \(D=(A, B, \edgeLabel')\) with \(|V| = |A| = |B|\) s.t.\ for any bi-spanner
  \(S_D\) in \(D\) there is a spanner \(S_C\) in \(C\) with \(|S_C| \leq |S_D|\).
\end{lemma}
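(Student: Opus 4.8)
The plan is to use the natural \emph{vertex-doubling} reduction. Given $C = (V, \lambda)$, take two disjoint copies $A = \{a_v \mid v \in V\}$ and $B = \{b_v \mid v \in V\}$ of $V$, and for $u \ne v$ set $\lambda'(\undirEdge{a_u}{b_v}) \coloneqq \lambda(\undirEdge{u}{v})$. The ``diagonal'' edges $\undirEdge{a_v}{b_v}$ also need labels so that $D \coloneqq \biclique{A}{B}{\lambda'}$ is a temporal bi-clique, but their precise values turn out to be immaterial, so I will simply assign them values below all other labels (this also keeps $\lambda'$ injective whenever $\lambda$ was, consistent with \Cref{lem:injective-labeling}). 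Then $|V| = |A| = |B|$, as required.

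Given any bi-spanner $S_D$ of $D$, I would set $S_C \coloneqq \{\undirEdge{u}{v} \mid u \ne v \text{ and } \undirEdge{a_u}{b_v} \in S_D\} \subseteq E(C)$. The map $\undirEdge{a_u}{b_v} \mapsto \undirEdge{u}{v}$ sends the non-diagonal edges of $S_D$ onto $S_C$, hence $|S_C| \leq |S_D|$; the only collisions are that $\undirEdge{a_u}{b_v}$ and $\undirEdge{a_v}{b_u}$ may both map to $\undirEdge{u}{v}$, which is harmless since we only need ``$\leq$''.

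The substance is showing $S_C$ is a spanner. Fix $u \ne w$ in $V$. Since $S_D$ is a bi-spanner, there is a temporal path $P = x_0 x_1 \dots x_m$ in $S_D$ from $x_0 = a_u$ to $x_m = b_w$, which we may take vertex-disjoint by the shortcutting remark in \Cref{sec:prelims}. Let $\mu$ map $a_v$ and $b_v$ to $v$. Then $\mu(x_0)\mu(x_1)\dots\mu(x_m)$ is a walk from $u$ to $w$; deleting consecutive duplicates --- which occur exactly where $P$ traverses a diagonal edge, and by vertex-disjointness never two in a row --- yields a walk $y_0 y_1 \dots y_k$ in $C$ all of whose edges are $\mu$-images of non-diagonal edges of $P$, hence lie in $S_C$. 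It then remains to verify the temporal condition at each internal $y_\ell$: its incoming and outgoing edges arise from two non-diagonal edges $\undirEdge{x_i}{x_{i+1}}$ and $\undirEdge{x_j}{x_{j+1}}$ of $P$ with $i+1 \leq j$, and since these are non-diagonal we have $\lambda(\undirEdge{y_{\ell-1}}{y_\ell}) = \lambda'(\undirEdge{x_i}{x_{i+1}})$ and $\lambda(\undirEdge{y_\ell}{y_{\ell+1}}) = \lambda'(\undirEdge{x_j}{x_{j+1}})$; chaining the temporal conditions of $P$ at $x_{i+1}, \dots, x_j$ gives $\lambda'(\undirEdge{x_i}{x_{i+1}}) \leq \lambda'(\undirEdge{x_j}{x_{j+1}})$, which is exactly what is needed at $y_\ell$. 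Hence $u$ reaches $w$ in $S_C$; since $u, w$ were arbitrary and every vertex reaches itself, $S_C$ is a spanner.

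The main obstacle is precisely this last bookkeeping: one must argue that collapsing the self-loops created by diagonal edges yields a genuine temporal walk in $C$ whose edges all survive into $S_C$. The two facts that make it work are that $P$ can be assumed vertex-disjoint (so diagonal traversals are isolated) and that the monotone chain of edge labels along $P$ lets the temporal condition ``jump over'' each collapsed diagonal traversal --- which is exactly why the labels chosen for the diagonal edges do not matter. The construction itself, the equalities $|A| = |B| = |V|$, and the counting bound $|S_C| \leq |S_D|$ are all routine.
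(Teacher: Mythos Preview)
Your proof is correct and uses the same vertex-doubling construction as the paper. The only difference is cosmetic: the paper assigns the diagonal edges the \emph{maximum} label $\mu$ (so a diagonal edge can only occur as the very last edge of a temporal $a_u$--$b_w$ path, and dropping it immediately yields a temporal $u$--$w$ path in $C$), whereas you put the diagonals below everything and give a slightly more general chaining argument that, as you note, would work for any choice of diagonal labels and does not need injectivity of $\lambda$.
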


\iflong
\begin{proof}
Remember, by \cref{lem:injective-labeling}, we can assume that \(\edgeLabel\) is an
injection.

For every vertex \(v\in V\), we create a duplicate vertex \(v'\). Call the set
of all duplicates \(V'\). 
We define the edge label between two vertices as the label between the original vertices. For the edge label between a vertex in $V$ and its copy, we choose a label higher than all in the original graph. That is, to define \(\edgeLabel'\), choose \(\mu >
\max_{e \in \cliqueEdges{V}}\edgeLabel(e)\) and set for every \(u,v\in V\) \[
  \edgeLabel'(\undirEdge{u}{v'}) \coloneqq \begin{cases}
    \edgeLabelv{u}{v}, & \text{if } u \neq v;\\
                               \mu,&\caseIf u=v.
                     \end{cases}
\]

An example of this transformation can be seen in \Cref{fig:clique-to-biclique-transform}.
\begin{figure}[t]
  \captionsetup[subfigure]{justification=centering}
  \centering
  \begin{subfigure}{0.49\textwidth}
    \centering
    \includegraphics[height=3cm]{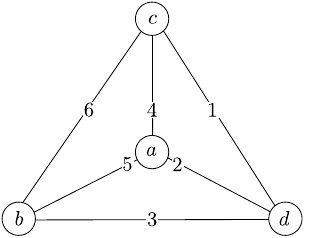}
    \caption{A temporal clique \(C\).}
  \end{subfigure}
  \begin{subfigure}{0.49\textwidth}
    \centering
    \includegraphics[height=3cm]{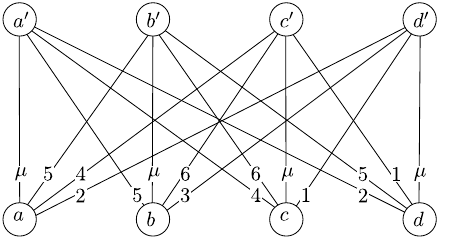}
    \caption{The corresponding temporal bi-clique \(D\).}
  \end{subfigure}
  \caption{\label{fig:clique-to-biclique-transform} Example of a temporal clique \(C\)
    transformed into a temporal bi-clique \(D\).}
\end{figure}
Consider the bi-clique \(D = \biclique{V}{V'}{\edgeLabel'}\) and assume $S_D$ is a bi-spanner in $D$.
  Let \(S_C \coloneqq \edgeLabel^{-1}(\edgeLabel'(S_D) \setminus \{\mu\})\), that is, all
  edges in \(C\) for which an edge with the same time-label is included in
  \(S_D\). As \(\edgeLabel\) is injective, we directly observe that \(|S_C| \leq |S_D|\).

  To show that \(S_C\) is a spanner in \(C\), consider \(u,v\in V\). We need to
  show, that there is a temporal \(u\)\(v\)-path only using edges of \(S_C\). As
  \(S_D\) is a temporal bi-spanner in \(D\), there is a temporal \(uv'\)-path \(P
  \subseteq S_D\).
  Because \(\mu\) is the largest label and each vertex has exactly one adjacent edge with label \(\mu\), if \(P\) contains an edge with
  time-label \(\mu\), it has to end with this edge.
  If that is the case, consider \(P\) without this last edge, which by construction is an \(uv\)-path in \(D\).
  Since $\lambda$ is injective, notice that the path using the same labels in the same order is a temporal path between $u$ and $v$ in $S_C$.
\end{proof}
\fi

This is enough to arrive at the first half of the tight bound we aim to prove in the section.

\iflong
\begin{theorem}
\fi
\ifshort
\begin{theorem}[$\star$]
\fi
\label{cor:spanner-leq-bispanner}
  For \(n\in\N\), we have \(\cliqueSpannerSize{n} \leq \bicliqueBispannerSize{n}\).
\end{theorem}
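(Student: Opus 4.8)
The plan is to derive \Cref{cor:spanner-leq-bispanner} directly from \Cref{lem:clique-spanner-by-bispanner}, which already does essentially all of the combinatorial work. The statement $\cliqueSpannerSize{n} \le \bicliqueBispannerSize{n}$ unpacks to: every temporal clique on $n$ vertices has a spanner with at most $\bicliqueBispannerSize{n}$ edges. So I would start by fixing an arbitrary temporal clique $C = (V, \edgeLabel)$ with $|V| = n$, and the goal becomes exhibiting a spanner of $C$ with at most $\bicliqueBispannerSize{n}$ edges.

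First I would invoke \Cref{lem:clique-spanner-by-bispanner} to obtain a temporal bi-clique $D = (A, B, \edgeLabel')$ with $|A| = |B| = |V| = n$ such that any bi-spanner $S_D$ of $D$ yields a spanner $S_C$ of $C$ with $|S_C| \le |S_D|$. Next, by the definition of $\bicliqueBispannerSize{n}$ as the worst-case minimum bi-spanner size over all bi-cliques with $n$ vertices on each side, there exists a bi-spanner $S_D$ of this particular $D$ with $|S_D| \le \bicliqueBispannerSize{n}$. Combining the two, the spanner $S_C$ of $C$ guaranteed by the lemma satisfies $|S_C| \le |S_D| \le \bicliqueBispannerSize{n}$. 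Since $C$ was an arbitrary clique on $n$ vertices, every such clique admits a spanner of size at most $\bicliqueBispannerSize{n}$, which is exactly the statement $\cliqueSpannerSize{n} \le \bicliqueBispannerSize{n}$.

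There is essentially no obstacle here — the proof is a one-line chaining of the lemma with the definitions of the two extremal quantities $\cliqueSpannerSizeOp$ and $\bicliqueBispannerSizeOp$. The only point requiring a modicum of care is making sure the quantifier order matches: $\bicliqueBispannerSize{n}$ bounds the minimum bi-spanner size for \emph{every} $n$-by-$n$ bi-clique, so in particular it applies to the specific $D$ produced by the reduction, and we are free to pick $S_D$ to be a minimum bi-spanner of $D$. I would also note in passing that \Cref{lem:injective-labeling} is already absorbed into \Cref{lem:clique-spanner-by-bispanner}, so no separate appeal to it is needed. The heavy lifting for the reverse direction $\cliqueSpannerSize{n} \in \Omega(\bicliqueBispannerSize{n})$ — which is not what this statement asks — is deferred to the subsequent lemmas establishing $\cliqueSpannerSize{n} \ge \bicliqueBispannerSize{n/2} \ge \frac14 \bicliqueBispannerSize{n}$.
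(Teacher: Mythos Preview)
Your proposal is correct and follows essentially the same approach as the paper: both apply \Cref{lem:clique-spanner-by-bispanner} to pass from a clique to a bi-clique, then use the definition of $\bicliqueBispannerSize{n}$ to bound the resulting spanner size. The only cosmetic difference is that the paper fixes a worst-case clique realizing $\cliqueSpannerSize{n}$ up front, whereas you quantify over an arbitrary clique and conclude at the end; the logic is identical.
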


\iflong
\begin{proof}
  Let \(C = (V ,\edgeLabel)\) be a temporal clique with \(|V| = n\), such that any minimum spanner \(S^{*}\) in \(C\) obeys \(\abs{S^{*}} = \cliqueSpannerSize{n}\).

  By
  \Cref{lem:clique-spanner-by-bispanner}, there is a temporal bi-clique \(D\) with $n$ vertices per side such
  that for any minimum bi-spanner \(S_D\) there is a spanner \(S_C\) in \(C\) with
  \(|S_C| \leq |S_D| \leq \bicliqueBispannerSize{n}\). By definition of \(S^{*}\),
  we conclude \( \cliqueSpannerSize{n} = |S^{*}|\leq |S_C| \le \bicliqueBispannerSize{n}\).
\end{proof}
\fi

We now show that \(\cliqueSpannerSize{2n}\ge\bicliqueBispannerSize{n}\).
To this end, we take a bi-clique and add edges within the parts that yield
no benefit in the construction of a spanner.

\begin{lemma}
\label{lem:spanner-includes-bispanner}
  Let \(D=(A,B, \edgeLabel)\) be a temporal bi-clique with \(\abs{A} = \abs{B}\).
  Then there is a temporal clique \(C = (A\sqcup B, \edgeLabel')\) such that for any spanner \(S\) in \(C\) the set \(S \cap (\bicliqueEdges{A}{B})\) is a bi-spanner in \(D\).
\end{lemma}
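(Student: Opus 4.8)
The plan is to obtain $C$ from $D$ by adding every edge inside $A$ and every edge inside $B$, and to choose the labels of these new edges so that they become useless for travelling from $A$ to $B$. Concretely, I would pick $\edgeLabel'$ so that every edge within $B$ is strictly earlier than every crossing edge of $\bicliqueEdges{A}{B}$, every crossing edge is strictly earlier than every edge within $A$, and the restriction of $\edgeLabel'$ to $\bicliqueEdges{A}{B}$ is order-isomorphic to $\edgeLabel$. Such a labeling clearly exists (shift all crossing labels up to make room below, then place the new labels in the two reserved ranges), and by \Cref{lem:injective-labeling} we may take it injective if convenient. Since $C$ has vertex set $A \sqcup B$ with every pair joined, it is a temporal clique.

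The core of the argument is the claim that \emph{every temporal path in $C$ starting in $A$ and ending in $B$ uses only crossing edges of $\bicliqueEdges{A}{B}$}. Two local observations drive this. First, once a temporal path uses an edge inside $A$, it can never leave $A$: if a vertex $v \in A$ is entered via such an edge of label $\ell$, then every subsequent edge has label $\ge \ell$, which exceeds every crossing label and every label inside $B$, so only edges inside $A$ (of still larger label) can follow; hence the path stays in $A$ forever. Second, a temporal path that has just crossed into $B$ cannot continue with an edge inside $B$, since every such edge has a label strictly smaller than the crossing edge just used, violating the temporal condition. Now take a temporal path $a = v_0, v_1, \dots, v_k = b$ with $a \in A$ and $b \in B$, and show by induction on $j$ that the edge $v_j v_{j+1}$ is crossing: $v_0 v_1$ is not inside $A$ by the first observation (otherwise the path never leaves $A$, contradicting $v_k = b \in B$), hence it is crossing and $v_1 \in B$; and if $v_0 v_1, \dots, v_{j-1} v_j$ are crossing with non-decreasing labels, then $v_j v_{j+1}$ is crossing as well, because if $v_j \in B$ it cannot be inside $B$ (second observation) and if $v_j \in A$ it cannot be inside $A$ (first observation, again the path would stay in $A$). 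In both remaining cases the temporal condition forces $\edgeLabel'(v_j v_{j+1}) \ge \edgeLabel'(v_{j-1} v_j)$, so the induction goes through and the whole path lies in $\bicliqueEdges{A}{B}$.

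Finally, let $S$ be any spanner of $C$ and fix $a \in A$, $b \in B$. Temporal connectivity of $C$ gives a temporal path from $a$ to $b$ inside $S$, which by the claim is contained in $S \cap (\bicliqueEdges{A}{B})$; since $\edgeLabel'$ restricted to $\bicliqueEdges{A}{B}$ preserves the $\edgeLabel$-order, this very edge sequence is a temporal path from $a$ to $b$ in $D$. As this holds for all $a$ and $b$, the set $S \cap (\bicliqueEdges{A}{B})$ is a bi-spanner in $D$. I expect the only step demanding real care to be the ``trapped in $A$'' phenomenon and its use inside the induction; the rest is routine bookkeeping about which labels may follow which.
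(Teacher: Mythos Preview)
Your proposal is correct and follows essentially the same approach as the paper: the paper also constructs $C$ by giving all edges inside $B$ a label smaller than every crossing label and all edges inside $A$ a label larger than every crossing label, then argues that any temporal $a$-to-$b$ path in $C$ must lie entirely in $\bicliqueEdges{A}{B}$ via the same two observations (an $A$-internal edge traps the path in $A$; a $B$-internal edge can never follow a crossing edge). Your write-up is merely more explicit about the induction and about transferring the path back to $D$ via order-preservation, but the idea is identical.
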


\begin{proof}
    We define \(\edgeLabel'\) such that for distinct \(v,w\in A\sqcup B\), we have \[
      \edgeLabel'(\undirEdge{v}{w})= \begin{cases}
    0, &\caseIf v,w\in B;\\
    1+\edgeLabelv{v}{w}, &\caseIf v\in A, w\in B \text{ or } v\in B, w\in A;\\
    \mu                   &\caseIf v,w \in A,
    \end{cases}
  \] where \(\mu>1+ \max_{e\in \bicliqueEdges{A}{B}}\edgeLabele{e}\).
    Intuitively, all early edges are between vertices in \(B\) and all late
    edges are between vertices in \(A\).
    The original edges keep their ordering.

    Let \(S\) be a spanner of \(C\).
    To show \(S \cap (\bicliqueEdges{A}{B})\) is a bi-spanner in \(D\), let  \(a\in A\) and  \(b\in B\).
    As \(S\) is a spanner, there is a temporal \(ab\)-path \(P \subseteq S\).
    If  \(P\) contains any edges within \(A\), no edge to \(B\) can be used
    afterwards.
    Similarly, once the path arrives at \(B\), it is already too late to use any
    edge within \(B\).
    Thus, \(P \subseteq \bicliqueEdges{A}{B}\) and \(S\cap (\bicliqueEdges{A}{B})\) is a bi-spanner in \(D\).
\end{proof}

This yields the following relationship between clique and bi-clique spanner sizes.

\iflong
\begin{theorem}
\fi
\ifshort
\begin{theorem}[$\star$]
\fi%
\label{lem:spanner-geq-bispanner}
For \(n \in \N\), we have
\(\cliqueSpannerSize{2n}\ge\bicliqueBispannerSize{n}\).
\end{theorem}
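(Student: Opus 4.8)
The plan is to derive this directly from \Cref{lem:spanner-includes-bispanner}, which is the workhorse of this direction. Fix $n \in \N$ and let $D = (A, B, \edgeLabel)$ be a temporal bi-clique with $|A| = |B| = n$ chosen so that every minimum bi-spanner of $D$ has exactly $\bicliqueBispannerSize{n}$ edges; such a $D$ exists by the definition of $\bicliqueBispannerSize{\cdot}$. Applying \Cref{lem:spanner-includes-bispanner} to $D$ yields a temporal clique $C = (A \sqcup B, \edgeLabel')$ on $|A| + |B| = 2n$ vertices with the property that for any spanner $S$ of $C$, the restriction $S \cap (\bicliqueEdges{A}{B})$ is a bi-spanner of $D$.

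Now take a minimum spanner $S^{*}$ of $C$, so $|S^{*}| \le \cliqueSpannerSize{2n}$ by the definition of $\cliqueSpannerSize{\cdot}$. By the property above, $S^{*} \cap (\bicliqueEdges{A}{B})$ is a bi-spanner of $D$, hence has at least $\bicliqueBispannerSize{n}$ edges by our choice of $D$. Since $S^{*} \cap (\bicliqueEdges{A}{B}) \subseteq S^{*}$, we get
\[
  \bicliqueBispannerSize{n} \le |S^{*} \cap (\bicliqueEdges{A}{B})| \le |S^{*}| \le \cliqueSpannerSize{2n},
\]
which is exactly the claimed inequality.

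I do not expect any real obstacle here: the substance of the argument is entirely contained in \Cref{lem:spanner-includes-bispanner}, and what remains is only the bookkeeping of picking the extremal instances in the right order (a worst-case bi-clique $D$ for $\bicliqueBispannerSize{n}$, then a best-case spanner $S^{*}$ for that particular $C$) and chaining the three inequalities. The one point worth stating carefully is why a worst-case $D$ can be assumed to exist, i.e., that the minimum over bi-spanner sizes is actually attained — this is immediate because for a fixed bi-clique the set of bi-spanners is finite and nonempty (the whole edge set is a bi-spanner), so $\bicliqueBispannerSize{n}$ is a maximum of finitely-or-well-defined minima and is realized by some instance, but if one wants to avoid even that subtlety one can instead phrase the argument with an arbitrary $D$ and take a supremum, which also suffices since we only need the $\le$ direction.
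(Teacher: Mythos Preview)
Your proof is correct and follows essentially the same route as the paper: pick a worst-case bi-clique $D$ realizing $\bicliqueBispannerSize{n}$, apply \Cref{lem:spanner-includes-bispanner} to obtain a clique $C$ on $2n$ vertices, and observe that any spanner of $C$ restricts to a bi-spanner of $D$, yielding $\bicliqueBispannerSize{n} \le \cliqueSpannerSize{2n}$. The extra remark about attainment of the extremal $D$ is a nice touch but not strictly needed, as you note.
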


\iflong
\begin{proof}
  Let \(D=(A,B, \edgeLabel)\) be a temporal bi-clique with \(|A|=|B| = n\),
  such that any minimum bi-spanner \(S^{*}\) in \(D\) obeys \(\abs{S^{*}} = \bicliqueBispannerSize{n}\).
  By \Cref{lem:spanner-includes-bispanner}, there is a temporal clique
  \(C = (A\sqcup B, \edgeLabel')\) such that any spanner \(S\) in \(C\) satisfies
  \(\abs{S} \geq \abs{S^{*}}\). Since the vertex set of $C$ has size $2n$, there is a spanner of size at most \(\cliqueSpannerSize{2n}\) in \(C\), and thus we have \(\cliqueSpannerSize{2n} \geq \bicliqueBispannerSize{n}\).
\end{proof}
\fi

To conclude this section, we just need to show, that
\(\bicliqueBispannerSize{n} \geq \frac{1}{4}\bicliqueBispannerSize{2n}\). To that end,
we show the slightly more general statement that for any \(k\in \N^{+}\), we have
\(k^{2}\bicliqueBispannerSize{n} \geq \bicliqueBispannerSize{kn}\), which we
achieve by splitting the instances into smaller subinstances.

\iflong
\begin{lemma}
\fi
\ifshort
\begin{lemma}[$\star$]
\fi
\label{lem:biclique-weak-recurrence}
  For all \(k, n\in\N^{+}\), we have \(k^{2}\bicliqueBispannerSize{n} \geq \bicliqueBispannerSize{kn}\).
\end{lemma}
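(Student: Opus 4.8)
The idea is a "blow-up" construction: given an arbitrary temporal bi-clique $D = (A, B, \lambda)$ with $|A| = |B| = kn$, I want to partition $A$ into $k$ blocks $A_1, \dots, A_k$ of size $n$ each, and similarly $B$ into $B_1, \dots, B_k$. For each pair $(i,j) \in [k]^2$ I look at the sub-bi-clique $D_{i,j}$ induced on $A_i \sqcup B_j$ (with the inherited labeling, which by \Cref{lem:injective-labeling} I may take to be injective). Since each side of $D_{i,j}$ has $n$ vertices, it admits a bi-spanner $S_{i,j}$ of size at most $\bicliqueBispannerSize{n}$. Define $S \coloneqq \bigcup_{i,j \in [k]} S_{i,j}$; then $|S| \le k^2 \bicliqueBispannerSize{n}$.

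**Why $S$ is a bi-spanner of $D$.** Take any $a \in A$ and $b \in B$. Then $a \in A_i$ for some $i$ and $b \in B_j$ for some $j$, so $a$ reaches $b$ via a temporal path inside $S_{i,j} \subseteq S$. So every $a \in A$ reaches every $b \in B$ using only edges of $S$, which is exactly the defining property of a bi-spanner. Hence $\bicliqueBispannerSize{kn} \le |S| \le k^2 \bicliqueBispannerSize{n}$, as desired. The only subtlety worth a sentence is that the bi-spanner definition is asymmetric (only $A \to B$ reachability is required), so there is no issue with paths needing to go "back and forth" between blocks — a single within-block bi-spanner suffices for each source/target pair, and we never need to combine edges from different $S_{i,j}$ along one path.

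**Main obstacle / point to be careful about.** There really isn't a hard technical core here; the one thing to get right is that restricting $\lambda$ to the edge set of $D_{i,j}$ yields a legitimate temporal bi-clique on which the definition of $\bicliqueBispannerSize{n}$ applies — this is immediate since $D_{i,j}$ is complete bipartite with $n$ vertices per side. I should also note explicitly that a temporal path lying entirely within $D_{i,j}$ is a temporal path in $D$ (it uses a subset of the edges with the same labels, and the temporal-path condition is local), so $S_{i,j} \subseteq S \subseteq E(D)$ genuinely certifies reachability in $D$. No counting beyond the trivial $|S| \le \sum_{i,j} |S_{i,j}| \le k^2 \bicliqueBispannerSize{n}$ is needed, and no injectivity gymnastics beyond invoking \Cref{lem:injective-labeling} once at the start.
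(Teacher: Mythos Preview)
Your proposal is correct and matches the paper's proof essentially line for line: partition each side into $k$ blocks of size $n$, take a minimum bi-spanner $S_{i,j}$ in each induced sub-bi-clique $D_{i,j}=D[A_i\sqcup B_j]$, and set $S=\bigcup_{i,j}S_{i,j}$. The paper does not bother with the injectivity remark or the ``paths in $D_{i,j}$ are paths in $D$'' observation, but these are harmless additions.
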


\iflong
\begin{proof}
  Consider any temporal bi-clique \(D=\biclique{A}{B}{\edgeLabel}\),
  where \(\abs{A}=\abs{B} = kn\). Consider a partition of \(A\) and \(B\) into $k$ equally sized blocks
  \(A_{1}, A_{2}, \dots, A_{k} \subseteq A\) and \(B_{1}, B_{2},\dots,B_{k}\subseteq B \). For any
  \(i,j \in [k]\) define the temporal graph \(D_{i,j} =D[A_{i}\sqcup B_{j}]\) to be the
  sub-graph of \(D\) induced on the respective blocks with the labeling \(\edgeLabel\)
  restricted  appropriately and let \(S_{i,j}\) be a minimum
  bi-spanner on \(D_{i,j}\). We now set \(S \coloneqq \bigcup_{i,j\in [k]}S_{i,j}\).

For any \(i,j \in [k]\), we have \(\abs{A_{i}}=\abs{B_{j}}=n\) and therefore
\(\abs{S_{i,j}}\leq \bicliqueBispannerSize{n}\) and \(S\) has appropriate size \ie{\(\abs{S}\leq
      k^2\bicliqueBispannerSize{n}\)}.
 Let \(i, j \in [k]\) and \(a \in A_i, b\in B_j\). Then there is a temporal path from $a$ to $b$ using
  only edges of \(S_{i,j}\subseteq S\), so $S$ is a bi-spanner for $D$.
\end{proof}
\fi

\begin{remark}\label{rem:biclique-recurrence}
  We later strengthen this statement using tools developed in \autoref{sec:structure-bi-cliques}. Concretely, considering
  \(D_{i} \coloneqq D[A_{i}\cup B]\), we show
  \(2k(k-1)n + k\bicliqueBispannerSize{n} \geq \bicliqueBispannerSize{kn}\).
\end{remark}

To plug all pieces together, we need a final property about $\bicliqueBispannerSizeOp$.

\iflong
\begin{lemma}
\fi
\ifshort
\begin{lemma}[$\star$]
\fi
\label{lem:bispanner-size-monotone}
  The worst-case size of spanners for bi-cliques is monotonic in the size of the bi-clique, that is for all $n \in \N$, we have $\bicliqueBispannerSize{n} \le \bicliqueBispannerSize{n+1}$. The same holds for $\cliqueSpannerSizeOp$.
\end{lemma}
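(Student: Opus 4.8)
The plan is to prove monotonicity by showing that a bi-clique (or clique) on $n$ vertices per side can be "embedded" into one on $n+1$ vertices per side in a way that forces any spanner of the larger instance to restrict to a bi-spanner of the smaller one. The natural approach mirrors the padding arguments already used in \Cref{lem:spanner-includes-bispanner} and \Cref{lem:clique-spanner-by-bispanner}: take an arbitrary worst-case bi-clique $D = \biclique{A}{B}{\edgeLabel}$ with $|A| = |B| = n$ achieving $\bicliqueBispannerSize{n}$, and augment it with one extra vertex on each side to form a bi-clique $D' = \biclique{A'}{B'}{\edgeLabel'}$ with $|A'| = |B'| = n+1$, choosing the labels of the new edges so that the new vertices are useless for connecting old vertices.

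Concretely, I would add a new vertex $a^*$ to $A$ and $b^*$ to $B$. The edges incident to the new vertices should be labeled so that no temporal $ab$-path between old vertices $a \in A$, $b \in B$ can ever route through $a^*$ or $b^*$. One clean way: pick $\mu > \max_{e \in \bicliqueEdges{A}{B}} \edgeLabele{e}$, give every edge $\undirEdge{a^*}{b}$ for $b \in B \cup \{b^*\}$ the label $\mu$ (or distinct labels all exceeding $\mu$), and give every edge $\undirEdge{a}{b^*}$ for $a \in A$ a very small label (say all smaller than $\min_e \edgeLabele{e}$, made distinct). Then any edge to $b^*$ is so early that after using it no further edge of $D$ is available, and any edge from $a^*$ is so late that it cannot be followed by an edge into $B$; hence a temporal $ab$-path with $a, b$ old cannot use $a^*$ or $b^*$ as an intermediate vertex. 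Since the labeling on $\bicliqueEdges{A}{B}$ is unchanged, such a path restricted to $S' \cap \bicliqueEdges{A}{B}$ is exactly a temporal path in $D$. Therefore, for any spanner $S'$ of $D'$, the set $S' \cap \bicliqueEdges{A}{B}$ is a bi-spanner of $D$, so $|S'| \ge |S' \cap \bicliqueEdges{A}{B}| \ge \bicliqueBispannerSize{n}$; taking $S'$ a minimum bi-spanner of $D'$ (of size $\le \bicliqueBispannerSize{n+1}$) gives $\bicliqueBispannerSize{n+1} \ge \bicliqueBispannerSize{n}$. Invoking \Cref{lem:injective-labeling} lets me assume all labels are injective throughout.

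For the clique case $\cliqueSpannerSizeOp$, essentially the same idea works: given a worst-case clique $C = (V, \edgeLabel)$ with $|V| = n$, add one new vertex $w$ and assign all edges $\undirEdge{w}{v}$ labels larger than everything in $C$ (again made distinct). Then a temporal path between two old vertices $u, v$ that uses $w$ must end at $w$ (since $w$'s edges are the latest and each is the unique latest-labeled edge at the relevant moment), so it cannot reach $v \ne w$ through $w$; equivalently, restricting any spanner $S'$ of the $(n+1)$-vertex clique to the old edges yields a spanner of $C$. This gives $\cliqueSpannerSize{n+1} \ge \cliqueSpannerSize{n}$.

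The only real subtlety — and the step I would be most careful about — is verifying that the new vertices genuinely cannot serve as interior vertices of any temporal path between two old vertices, including degenerate short paths and the interaction at the endpoints. In particular I must confirm that a path of the form (old vertex) $\to b^*$ cannot be extended, and that a path entering $a^*$ cannot continue to an old vertex in $B$; the label choices above make this immediate, but I would state it as a short lemma about the structure of temporal paths through "extremal" vertices, analogous to the arguments in the proofs of \Cref{lem:spanner-includes-bispanner} and \Cref{lem:clique-spanner-by-bispanner}. Everything else is bookkeeping.
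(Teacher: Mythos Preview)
Your padding argument has a genuine gap: with the labels you propose, the new vertices \emph{can} serve as interior vertices of temporal paths between old vertices, so $S' \cap (\bicliqueEdges{A}{B})$ need not be a bi-spanner of $D$. In your bi-clique construction, take old $a, a' \in A$ with $\edgeLabel'(\undirEdge{a}{b^*}) \le \edgeLabel'(\undirEdge{a'}{b^*})$ and any old $b \in B$; then $a \to b^* \to a' \to b$ is temporal, since the first two labels lie below every original label and the third is an original label. Likewise $a \leadsto b' \to a^* \to b$ is temporal whenever $\edgeLabel'(\undirEdge{b'}{a^*}) \le \edgeLabel'(\undirEdge{a^*}{b})$, so $a^*$ can be the penultimate vertex. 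The clique case suffers the same defect: $u \leadsto u' \to w \to v$ (an old prefix followed by two large-label edges) is a valid temporal path ending at the old vertex $v$, contradicting your claim that any path visiting $w$ must end there. The reason the analogous step succeeds in \Cref{lem:spanner-includes-bispanner} is that there the padded edges lie \emph{within} one side of the bipartition, so once one is used the path can never cross back; your new vertices lie on opposite sides and do not enjoy this property.

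The paper sidesteps the issue with a different construction: it \emph{clones} one existing vertex on each side, giving each clone labels identical to its original. Rather than restricting a spanner $S'$ of the enlarged instance to the old edges, one \emph{replaces} every edge of $S'$ incident to a clone by the corresponding edge incident to the original vertex. Because the labels agree, every temporal path survives this substitution, and the resulting edge set in $D$ has at most $|S'|$ edges --- no claim about paths avoiding the new vertices is needed.
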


\iflong
\begin{proof}
  We only prove the statement for bi-cliques and $\bicliqueBispannerSizeOp$, the proof for cliques and $\cliqueSpannerSizeOp$ is analogous.
  Let $n \in \N$ and let $\defaultBiclique$ be a bi-clique of size $n$. We construct a bi-clique of size $n+1$ with at least the same minimum spanner size as $D$ by cloning one vertex on each side. To be exact, take $a \in A$ and $b \in B$ and let
  \begin{align*}
    A' &\coloneqq A \cup \{a'\}, \\
    B' &\coloneqq B \cup \{b'\}, \\
    \edgeLabel' &\coloneqq x, y \mapsto \begin{cases}
      \edgeLabelv{a}{b}, \quad &\text{if } x=a', y=b', \\
      \edgeLabelv{a}{y}, \quad &\text{else if } x=a', \\
      \edgeLabelv{x}{b}, \quad &\text{else if } y=b', \\
      \edgeLabelv{x}{y}, \quad &\text{otherwise},
    \end{cases} \\
    D' &\coloneqq (A' \sqcup B', \edgeLabel').
  \end{align*}

  Now any spanner $S'$ for $D'$ can be transformed into a spanner $S$ of smaller or equal size for $D$, by replacing any edge in $S'$ adjacent to $a'$ or $b'$ with the corresponding edge adjacent to $a$ respective $b$. Any temporal path in $S'$ now yields a temporal path in $S$ where we can use $a$ and $b$ instead of $a'$ and $b'$ since the labels of the original and cloned edges are the same.
\end{proof}
\fi
Now all pieces are in place to puzzle together the proof of our main theorem.

\begin{proof}[Proof of \cref{thm:bi-cliques-are-cliques}]
    Let $n \in \N$ and $c \in \{0,1,2,3\}$ be such that $n\geq 3$ and $n+c$ is a multiple of 4. 
    In the following equation, we use \Cref{lem:bispanner-size-monotone} in the first and last step, apply \cref{lem:biclique-weak-recurrence} in the third step and \cref{lem:spanner-geq-bispanner} in the forth step, to obtain 
  \[
    \bicliqueBispannerSize{n}
    \le \bicliqueBispannerSize{n+ c}
    = \bicliqueBispannerSize{4 \cdot \frac{n+c}{4}}
    \le 16 \cdot \bicliqueBispannerSize{\frac{n+c}{4}}
    \le 16 \cdot \cliqueSpannerSize{\frac{n+c}{2}}
    \le 16 \cdot \cliqueSpannerSize{n}.
  \]

  \cref{cor:spanner-leq-bispanner} directly gives us $\cliqueSpannerSize{n} \le \bicliqueBispannerSize{n}$, allowing us to conclude \(\cliqueSpannerSize{n} \in \Theta(\bicliqueBispannerSize{n})\).
\end{proof}


\section{Structure of Bi-Clique Instances}\label{sec:structure-bi-cliques}

Using \Cref{thm:bi-cliques-are-cliques}, from now on we only consider bi-cliques. Our first
structural insight for this type of instance was already used by Casteigts, Peters, and Schoeters~\cite{casteigts_fireworks_conf} on special reduced instances of bi-cliques.
Remember, that we call an instance extremally matched if the set of all earliest edges $\earlyMatching$ and the set of all latest edges $\lateMatching$ each form a perfect matching.
We give a general reduction rule which reduces the graph in case the instance is not extremally matched.
This structural property, that we can assume the existence of such perfect
matchings or reduce the instance, is useful for almost all further constructions.
For example, it enables us to reduce the instance size, when the number of vertices per side is not equal.
Additionally, we can think of the sets \(A\) and \(B\) of the bi-clique as interchangeable. For
example, consider any set of edges that includes temporal paths from some set $A'
\subseteq A$ to all vertices in $A$. We can extend these paths to reach all vertices in $B$ by including
the set of all latest edges, because it is a perfect matching between $A$ and $B$.

We start with proving that
every \(b\in B\) is incident to at most one edge in \(\earlyMatching\) or we
can reduce. The same holds for any \(a \in A\) with respect to
\(\lateMatching\).
For our reduction rules, we adapt the concept of dismountability on temporal
cliques, first defined in~\cite{casteigts_fireworks_conf}, to
temporal bi-cliques.

\begin{definition}[dismountable]\label{def:dismountable}
    Vertex $a \in A$ is \emph{dismountable} if there is $a' \in
    A$ with $a \ne a'$, s.t. $\leqAt{a}{a'}{\earlyPartner{a'}}$. Vertex $b \in B$ is dismountable if 
    there is $b' \in B$ with $b \ne b'$, s.t. $\leqAt{b'}{b}{\latePartner{b'}}$.
\end{definition}

Intuitively, vertex~$a$ can delegate its obligation to reach all vertices in $B$ to vertex~$a'$
in exchange for including two edges in the bi-spanner. Similarly, vertex~$b$ can be reached by
all \nodes{} in $A$ that can reach vertex~$b'$ if we include the two necessary edges in the bi-spanner.

Note that this corresponds to the definition in~\cite{casteigts_fireworks_conf}, but transferred to the case of bi-spanners, where all \nodes{} in $A$ only need to reach \nodes{} in $B$.
Concretely, this means that vertex~$a$ can reach vertex~$a'$ via the earliest edge of
$a'$ on a temporal path of length~2. Therefore, if this configuration occurs in
the bi-clique, we can remove $a$ and include the edges
$\undirEdge{a}{\earlyPartner{a'}}$ and $\undirEdge{a'}{\earlyPartner{a'}}$ into
our bi-spanner. In the final solution for the whole bi-clique, vertex~$a$ then has a
temporal path to every vertex~$b \in B$ via vertex~$a'$. Similarly, if a vertex $b
\in B$ is dismountable, every temporal path that ends in $b'$
can be extended to be a temporal path ending in vertex~$b$, and we can remove
$b$ from the bi-clique.

The following lemma is similar to the one in \cite{casteigts_fireworks_conf}. For completeness, we prove it again.
\iflong
\begin{lemma}
\fi
\ifshort
\begin{lemma}[$\star$]
\fi
\label{lem:reduce-early-late}
    Let $\defaultBiclique$ be a bi-clique. Let $a \in A$, $\earlyPartner{a} = b$, then $a = \earlyPartner{b}$ or $\earlyPartner{b}$ is
    dismountable. Let $b \in B$, $\latePartner{b} = a$, then $\latePartner{a} = b$ or $\latePartner{a}$ is
    dismountable.
\end{lemma}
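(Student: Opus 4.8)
The plan is to prove each of the two (symmetric) statements by a short case distinction, using only that $D$ is a bi-clique --- so $N(b) = A$ for every $b \in B$ and $N(a) = B$ for every $a \in A$ --- together with the defining minimality of $\earlyPartner{\cdot}$ and maximality of $\latePartner{\cdot}$. I will carry out the first statement in full and then note that the second follows by the mirrored argument, swapping the roles of $A$ and $B$, of $\earlyMatching$ and $\lateMatching$, and of "earliest" and "latest".

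For the first statement, let $a \in A$ with $\earlyPartner{a} = b$ and write $\hat a \coloneqq \earlyPartner{b}$, which lies in $A$ since $b \in B$. If $\hat a = a$ we are in the first case and done, so assume $\hat a \ne a$; I then show $\hat a$ is dismountable by exhibiting the witness $a' \coloneqq a$ in \Cref{def:dismountable}. Since $\earlyPartner{a'} = \earlyPartner{a} = b$, the required condition $\leqAt{\hat a}{a'}{\earlyPartner{a'}}$ unfolds to $\edgeLabelv{\hat a}{b} \le \edgeLabelv{a}{b}$, which holds precisely because $\hat a = \earlyPartner{b}$ minimizes $\edgeLabelv{\cdot}{b}$ over $N(b) = A$ and $a \in A$. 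As also $\hat a \ne a'$, vertex $\hat a = \earlyPartner{b}$ is dismountable, as claimed.

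The second statement is handled identically: let $b \in B$ with $\latePartner{b} = a$, put $\hat b \coloneqq \latePartner{a} \in B$, and assume $\hat b \ne b$. Taking $b' \coloneqq b$ as the witness, we have $\latePartner{b'} = a$, so the dismountability condition $\leqAt{b'}{\hat b}{\latePartner{b'}}$ becomes $\edgeLabelv{b}{a} \le \edgeLabelv{\hat b}{a}$, and this holds because $\hat b = \latePartner{a}$ maximizes $\edgeLabelv{\cdot}{a}$ over $N(a) = B \ni b$; with $\hat b \ne b'$ this shows $\latePartner{a}$ is dismountable. The argument is entirely routine and I do not expect any real obstacle; the only thing to get right is the quantifier bookkeeping in \Cref{def:dismountable} --- specifically, recognizing that the correct dismountability witness for $\earlyPartner{b}$ (resp. $\latePartner{a}$) is the vertex $a$ (resp. $b$) we started from, and that the extremality evaluated at the vertex $\earlyPartner{a} = b$ (resp. $\latePartner{b} = a$) is exactly what yields the needed inequality.
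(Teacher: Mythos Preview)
Your proof is correct and follows essentially the same approach as the paper's: both use the extremality of $\earlyPartner{b}$ at $b$ to obtain $\leqAt{\earlyPartner{b}}{a}{b}$ and then observe this is exactly the dismountability condition with witness $a' = a$ (and symmetrically for the second statement). Your write-up is simply more explicit about unfolding \Cref{def:dismountable}.
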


\iflong
\begin{proof}
    Let $a \in A$ with $b = \earlyPartner{a}$. By definition of $\earlyMatching$, we have $\leqAt{\earlyPartner{b}}{a}{b}$. If $\earlyPartner{b} \ne a$, $\earlyPartner{b}$ is dismountable through $a$.
    The other case follows in the same fashion.
\end{proof}
\fi

We can apply \Cref{lem:reduce-early-late} to show that we can always
find a dismountable vertex whenever the number of vertices per side differs.

\iflong
\begin{lemma}
\fi
\ifshort
\begin{lemma}[$\star$]
\fi
\label{lem:equal-size-sides}
    For $\defaultBiclique$, if $|A| \ne |B|$ then a dismountable vertex exists.
\end{lemma}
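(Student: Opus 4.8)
The statement claims: in a bi-clique $D = (A, B, \lambda)$, if $|A| \ne |B|$, then some vertex is dismountable. The natural approach is a counting/pigeonhole argument built on top of \Cref{lem:reduce-early-late}. I would assume, for contradiction, that no vertex is dismountable. Then \Cref{lem:reduce-early-late} becomes very rigid: for every $a \in A$ with $\earlyPartner{a} = b$, since $\earlyPartner{b}$ is not dismountable, we must have $a = \earlyPartner{b}$; symmetrically, for every $b \in B$ with $\latePartner{b} = a$, we must have $b = \latePartner{a}$.

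**Key steps.** First I would argue that non-dismountability forces $\earlyMatching$ to be (the edge set of) a perfect matching between $A$ and $B$. The map $a \mapsto \earlyPartner{a}$ sends $A$ into $B$ (each vertex's earliest neighbor lies on the other side, since the graph is bipartite). I want to show this map is injective and surjective onto $B$. Injectivity: suppose $\earlyPartner{a_1} = \earlyPartner{a_2} = b$ with $a_1 \ne a_2$. Then at most one of $a_1, a_2$ equals $\earlyPartner{b}$; say $a_2 \ne \earlyPartner{b}$. But $\earlyPartner{a_2} = b$, and applying \Cref{lem:reduce-early-late} with $a = a_2$ tells us $a_2 = \earlyPartner{b}$ or $\earlyPartner{b}$ is dismountable — the former fails, so $\earlyPartner{b}$ is dismountable, contradiction. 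Hence $a \mapsto \earlyPartner{a}$ is injective, giving $|A| \le |B|$. Symmetrically, considering $\latePartner{\cdot} \colon B \to A$ and the second half of \Cref{lem:reduce-early-late}, that map is injective too, giving $|B| \le |A|$. Together, $|A| = |B|$, contradicting the hypothesis. Therefore a dismountable vertex exists.

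**Expected obstacle.** The only subtlety is making the injectivity argument airtight: from $\earlyPartner{a_1} = \earlyPartner{a_2} = b$ I need to correctly invoke \Cref{lem:reduce-early-late} on the vertex that is \emph{not} $\earlyPartner{b}$, and I should double-check the edge case where $\earlyPartner{b}$ might not be in $\{a_1, a_2\}$ at all — in that case \emph{both} $a_1$ and $a_2$ witness dismountability of $\earlyPartner{b}$, so the conclusion is even easier. I would also note, for cleanliness, that $\earlyPartner{\cdot}$ and $\latePartner{\cdot}$ are well-defined single-valued maps because we work with injective labelings (justified earlier via \Cref{lem:injective-labeling}), so there is no ambiguity in "the earliest neighbor." Beyond that, the argument is pure pigeonhole and should be short. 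An alternative phrasing avoiding contradiction: directly observe that if $a \mapsto \earlyPartner{a}$ is not injective we are done, and if it is injective but not surjective then $|A| < |B|$ forces... — but the contradiction framing is cleanest, so I would go with that.
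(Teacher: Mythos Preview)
Your proposal is correct and follows essentially the same approach as the paper: both arguments use the pigeonhole principle to find two vertices on the larger side sharing the same earliest (resp.\ latest) neighbor, and then invoke \Cref{lem:reduce-early-late} to produce a dismountable vertex. The paper phrases this as a direct case split on whether $|A|>|B|$ or $|A|<|B|$ rather than as a contradiction argument showing both injections, but the content is identical.
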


\iflong
\begin{proof}
    If $|A| > |B|$, by the pigeonhole principle, there must be at least two
    vertices in $A$ with the same earliest neighbor. Similarly, if $|A| < |B|$,
    there must be at least two vertices in $B$ with the same latest neighbor.
    Due to \Cref{lem:reduce-early-late} there must be a dismountable vertex in these cases. 
\end{proof}
\fi

Since from now on, both parts of $D$ have the same size after removing
dismountable vertices, we write $n \coloneqq |A| = |B|$ and call $n$ the size
of the instance. This justifies also that the definition of $\bicliqueBispannerSizeOp$ assumes equal part sizes and gives the following corollary. 

\begin{corollary}\label{cor:rim-matching-reduction}
    Let $\defaultBiclique$ and $|A| \le |B|$. Then, there is a bi-spanner for $D$ of size $\bicliqueBispannerSize{|A|} + 2(|B| - |A|)$. Similarly, if $|B| \le |A|$ there is a bi-spanner of size $\bicliqueBispannerSize{|B|} + 2(|A| - |B|)$.
\end{corollary}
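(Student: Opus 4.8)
The plan is to prove the first statement, where \(|A| \le |B|\), by induction on \(|B| - |A|\); the second statement is entirely analogous with the roles of ``earliest'' and ``latest'' and of \(A\) and \(B\) interchanged. When \(|B| - |A| = 0\), the bound \(\bicliqueBispannerSize{|A|}\) is immediate from the definition of \(\bicliqueBispannerSizeOp\), since \(D\) then has \(|A|\) vertices on each side.

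For the inductive step, suppose \(|A| < |B|\). First I would extract a dismountable vertex \emph{on the larger side}. As in the proof of \Cref{lem:equal-size-sides}, the pigeonhole principle yields two distinct \(b_1, b_2 \in B\) with a common latest neighbour \(a \coloneqq \latePartner{b_1} = \latePartner{b_2} \in A\); applying \Cref{lem:reduce-early-late} to \(b_1\) and to \(b_2\) and using \(b_1 \neq b_2\) forces \(\latePartner{a}\) to be dismountable, and \(\latePartner{a} \in B\). So fix a dismountable \(b \in B\) with witness \(b' \in B\) satisfying \(\leqAt{b'}{b}{\latePartner{b'}}\), set \(w \coloneqq \latePartner{b'} \in A\), and consider \(D' \coloneqq D[A \sqcup (B \setminus \{b\})]\), whose sides have sizes \(|A|\) and \(|B|-1\) (and which is again a bi-clique with no isolated vertices since \(|A| \ge 1\) and \(|B| - 1 \ge 1\)). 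By the induction hypothesis, \(D'\) admits a bi-spanner \(S'\) with \(|S'| \le \bicliqueBispannerSize{|A|} + 2((|B|-1) - |A|)\).

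I then claim that \(S \coloneqq S' \cup \{\undirEdge{b'}{w}, \undirEdge{w}{b}\}\) is a bi-spanner for \(D\). Take any \(a_0 \in A\). If the target lies in \(B \setminus \{b\}\), the required temporal path already lies in \(S' \subseteq S\). For the target \(b\), start from a temporal path from \(a_0\) to \(b'\) inside \(S'\); since \(w = \latePartner{b'}\) is the latest neighbour of \(b'\), this path arrives at \(b'\) along an edge of label at most \(\edgeLabelv{b'}{w}\), so it can be continued along \(\undirEdge{b'}{w}\) and then along \(\undirEdge{w}{b}\), the last step being valid because \(\edgeLabelv{b'}{w} \le \edgeLabelv{b}{w}\) by the choice of \(b'\). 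Hence every vertex of \(A\) reaches every vertex of \(B\) within \(S\), and \(|S| \le |S'| + 2 \le \bicliqueBispannerSize{|A|} + 2(|B| - |A|)\), which closes the induction. The mirror case \(|B| \le |A|\) is handled the same way: one dismounts a vertex \(a \in A\) via the earliest neighbour \(\earlyPartner{a'}\) of its witness \(a' \in A\), adds the two edges \(\undirEdge{a}{\earlyPartner{a'}}\) and \(\undirEdge{a'}{\earlyPartner{a'}}\), and now splices them onto the \emph{front} of the inductively obtained path from \(a'\) to the target in \(B\), which is temporally valid because an earliest edge of \(a'\) precedes every other edge incident to \(a'\).

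The argument is short; the only points that need care are (i) checking that the dismountable vertex extracted really is on the larger side, so that \(\bigl||A| - |B|\bigr|\) strictly decreases with each step, and (ii) verifying that the two edges added at each step splice onto the inductively obtained paths — which is exactly what the ``latest neighbour'' (respectively ``earliest neighbour'') clause in \Cref{def:dismountable} guarantees. I expect (i) to be the main potential pitfall, since the straightforward reading of \Cref{lem:equal-size-sides} only asserts that \emph{some} dismountable vertex exists, and one must go back into its proof to see on which side it sits.
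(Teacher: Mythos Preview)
Your proof is correct and is precisely the argument the paper intends: the corollary is stated without proof, relying on the preceding discussion that dismounting a vertex costs two edges and on \Cref{lem:equal-size-sides}, and your induction on $|B|-|A|$ makes that implicit argument explicit. Your care about point (i)---that the dismountable vertex produced really lies on the larger side---is well placed and exactly recovers what the proof of \Cref{lem:equal-size-sides} actually shows.
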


Finally, combining our lemmas yields that \(\earlyMatching\) and \(\lateMatching\) are extremally matched. 

\iflong
\begin{theorem}[Extremal Matching]
\fi
\ifshort
\begin{theorem}[Extremal Matching),($\star$]
\fi
\label{thm:rim-matching}
    Let $\defaultBiclique$ be a bi-clique. Then $D$ is extremally matched or there is a dismountable vertex. 
\end{theorem}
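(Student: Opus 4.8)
The plan is a simple case distinction on whether $D$ contains a dismountable vertex. If it does, we are already in the second alternative of the statement and there is nothing to prove, so assume from now on that no vertex of $D$ is dismountable; the goal is then to show that $D$ is extremally matched. First I would invoke \Cref{lem:equal-size-sides}: since a dismountable vertex would exist whenever $|A| \neq |B|$, we must have $|A| = |B| =: n$. In particular, in a bi-clique the earliest (resp.\ latest) neighbor of any vertex of $A$ lies in $B$ and vice versa, so $\earlyMatchingOp$ and $\lateMatching$ each restrict to a pair of maps $A \to B$ and $B \to A$.

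Next I would show that $\earlyMatching$ is a perfect matching of $A \sqcup B$. Take any $a \in A$ and set $b \coloneqq \earlyPartner{a}$. By the first part of \Cref{lem:reduce-early-late}, either $a = \earlyPartner{b}$ or $\earlyPartner{b}$ is dismountable; the latter is excluded by assumption, so $\earlyPartner{\earlyPartner{a}} = a$ for every $a \in A$. Writing $\earlyMatchingG{A}\colon A \to B$ and $\earlyMatchingG{B}\colon B \to A$ for the two restrictions, this says exactly $\earlyMatchingG{B} \circ \earlyMatchingG{A} = \mathrm{id}_A$; since $A$ and $B$ are finite of equal size $n$, $\earlyMatchingG{A}$ is then a bijection with inverse $\earlyMatchingG{B}$. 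Hence the edge set $\earlyMatching = \{\undirEdge{v}{\earlyPartner{v}} \mid v \in A \sqcup B\}$ consists precisely of the $n$ distinct edges $\undirEdge{a}{\earlyMatchingG{A}(a)}$ for $a \in A$ (an edge $\undirEdge{b}{\earlyPartner{b}}$ with $b \in B$ equals $\undirEdge{\earlyMatchingG{B}(b)}{b}$, which is already in this list), and every vertex of $A \sqcup B$ is covered exactly once, so $\earlyMatching$ is a perfect matching.

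Applying the symmetric argument with the second part of \Cref{lem:reduce-early-late} --- for $b \in B$ with $\latePartner{b} = a$, either $\latePartner{a} = b$ or $\latePartner{a}$ is dismountable --- gives $\latePartner{\latePartner{b}} = b$ for all $b \in B$, and the same finite-bijection reasoning shows that $\lateMatching$ is a perfect matching of $A \sqcup B$ as well. Therefore $D$ is extremally matched, which completes the case analysis.

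I expect no genuine obstacle here: \Cref{lem:reduce-early-late} does the heavy lifting, and the only point that needs care is checking that the edge set $\earlyMatching$ assembled from the $A$-side coincides with the one assembled from the $B$-side --- which is exactly the content of the involution identity $\earlyPartner{\earlyPartner{a}} = a$ combined with $|A| = |B|$. I would also make sure to state explicitly at the outset that we are proving the contrapositive-flavored dichotomy, i.e.\ assuming no dismountable vertex and deriving "extremally matched", rather than the other direction.
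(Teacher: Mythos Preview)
Your proposal is correct and follows essentially the same approach as the paper: assume no dismountable vertex, use \Cref{lem:equal-size-sides} to obtain $|A|=|B|$, and then apply \Cref{lem:reduce-early-late} to force $\earlyPartner{\earlyPartner{a}}=a$ (and analogously for $\lateMatching$), which makes both extremal edge-sets perfect matchings. The paper phrases the last step as counting incident edges at each $b\in B$ rather than via the involution/bijection language, but the argument is the same.
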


\iflong
\begin{proof}
    We only prove the statement for $\earlyMatching$, the proof for
    $\lateMatching$ follows analogously. Suppose there is no dismountable
    vertex. Let $b \in B$. By definition, $b$ has at least one incident edge in
    $\earlyMatching$, namely $\undirEdge{b}{\earlyPartner{b}}$. If there is an
    $a \in A$ with $\earlyPartner{a} = b$, due to \Cref{lem:reduce-early-late},
    we have $\earlyPartner{b} = a$, thus $\pi^-$ has at most one edge adjacent
    to $b$. Therefore $b$ has exactly one incident edge and with
    \Cref{lem:equal-size-sides}, $\earlyMatching$ is a perfect matching.
\end{proof}
\fi

After removing dismountable vertices, we now know that the remaining graph must be extremally matched.
Since reducing an instance to be extremally matched only requires the inclusion of linearly many edges, we focus from now on extremally matched bi-cliques. 
This also allows us to switch the roles of \(A\) and \(B\), which will be useful later on.

\iflong
\begin{lemma}
\fi
\ifshort
\begin{lemma}[$\star$]
\fi
\label{lem:switch-sides}	
    Let $\defaultBiclique$ be an extremally matched bi-clique and let $S$ be a bi-spanner for $D$.
    If $\earlyMatching, \lateMatching \subseteq S$, then $S$ is also a bi-spanner for $\biclique{B}{A}{\edgeLabel}$.
\end{lemma}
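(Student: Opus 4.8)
The plan is to prove directly that every $b\in B$ can reach every $a\in A$ via a temporal path using only edges of $S$; combined with the hypothesis that $S$ is already a bi-spanner for $D$, this is exactly what is claimed. Fix $b\in B$ and $a\in A$. Because $D$ is extremally matched (\Cref{thm:rim-matching}), both $\earlyMatching$ and $\lateMatching$ are perfect matchings of $A\sqcup B$, so the vertex $a^-\coloneqq\earlyPartner{b}\in A$ in fact satisfies $\earlyPartner{a^-}=b$; hence $\undirEdge{b}{a^-}$ is the \emph{earliest} edge incident to $a^-$ (and, of course, to $b$). Dually, $b^+\coloneqq\latePartner{a}\in B$ satisfies $\latePartner{b^+}=a$, so $\undirEdge{b^+}{a}$ is the \emph{latest} edge incident to $b^+$ (and to $a$). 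Both of these edges lie in $\earlyMatching\cup\lateMatching\subseteq S$. The $b$-$a$-path I aim for is the concatenation $b\to a^-\overset{S}{\leadsto} b^+\to a$, where the middle segment is an $a^-$-$b^+$-path that exists because $S$ is a bi-spanner for $D$ with $a^-\in A$ and $b^+\in B$.

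First I would dispose of the degenerate cases $a^-=a$ and $b^+=b$: in either case $\undirEdge{a}{b}\in\earlyMatching\cup\lateMatching\subseteq S$ is already a one-edge temporal $b$-$a$-path. Otherwise, fix a temporal path $P\subseteq S$ from $a^-$ to $b^+$, which we may take to be simple by shortcutting. The technical heart — and the only real obstacle — is to verify that prepending $b\to a^-$ and appending $b^+\to a$ keeps the resulting walk temporal; I would handle each end by a small case distinction. At the front: if $P$ already passes through $b$, then the suffix of $P$ from $b$ to $b^+$ is itself a temporal $b$-$b^+$-path inside $S$ and I simply drop the prefix; if $P$ avoids $b$, then the first edge of $P$ is incident to $a^-$ but differs from $a^-$'s earliest edge $\undirEdge{a^-}{b}$, so by injectivity of $\edgeLabel$ (\Cref{lem:injective-labeling}) its label strictly exceeds $\edgeLabelv{b}{a^-}$ and $b\to a^-$ may legally be prepended. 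In both cases I obtain a temporal $b$-$b^+$-path $P'\subseteq S$.

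Finally I would apply the mirror-image argument to the tail of $P'$ and the latest edge $\undirEdge{b^+}{a}$: if $P'$ meets $a$, its prefix up to $a$ is the desired path; otherwise the last edge of $P'$ is incident to $b^+$ but is not $b^+$'s latest edge, hence has label strictly below $\edgeLabelv{b^+}{a}$, so $b^+\to a$ may be appended. Either way $b$ reaches $a$ within $S$, and since $b\in B$ and $a\in A$ were arbitrary, $S$ is a bi-spanner for $\biclique{B}{A}{\edgeLabel}$. I do not expect to need anything beyond \Cref{thm:rim-matching} and label injectivity; the only slightly fiddly part is the ``or the path already passes through the endpoint'' bookkeeping at the two ends.
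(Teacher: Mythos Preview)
Your approach is essentially identical to the paper's: route $b \to \earlyPartner{b} \overset{S}{\leadsto} \latePartner{a} \to a$, using that the matching edges are extremal at \emph{both} endpoints. The paper's proof is shorter because your case analysis is unnecessary: since $\undirEdge{b}{a^-}$ is the earliest edge at $a^-$ and $\undirEdge{b^+}{a}$ is the latest edge at $b^+$, prepending and appending are automatically temporal (non-strict paths), regardless of whether $P$ revisits $b$ or $a$ --- the resulting walk can simply be shortcut.
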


\iflong
\begin{proof}
	Let $b \in B$ and $a \in A$.
    We look at the path $b \to \earlyPartner{b} \overset{S}{\leadsto} \latePartner{a} \to a$, which is included in $S$ by choice.
    Since $\earlyMatching$ (respectively $\lateMatching$) forms a matching, the edge $\undirEdge{b}{\earlyPartner{b}}$ ($\undirEdge{\latePartner{a}}{a}$) is earliest (latest) for $\earlyPartner{b}$ ($\latePartner{a}$).
    Thus, this path is temporal.
\end{proof}
\fi

With \Cref{cor:rim-matching-reduction} we prove
\Cref{lem:biclique-recurrence}. This gives a simpler proof that
$\cliqueSpannerSize{n} \in \bigO(n \log n)$.

\iflong
\begin{lemma}
\fi
\ifshort
\begin{lemma}[$\star$]
\fi
\label{lem:biclique-recurrence}
    For all $k, n \in \N^+$, we have $2k(k-1)n + k\bicliqueBispannerSize{n} \ge \bicliqueBispannerSize{kn}$.
\end{lemma}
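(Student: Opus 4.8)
The plan is to refine the block-decomposition argument of \Cref{lem:biclique-weak-recurrence} by splitting along only \emph{one} side of the bi-clique. I would start with an arbitrary temporal bi-clique $D = \biclique{A}{B}{\edgeLabel}$ with $\abs{A} = \abs{B} = kn$, and partition $A$ into $k$ blocks $A_1, \dots, A_k$ of size $n$ each, while leaving $B$ untouched. For each $i \in [k]$, I consider the induced sub-bi-clique $D_i \coloneqq D[A_i \sqcup B]$, which has $\abs{A_i} = n \le kn = \abs{B}$; crucially, every $D_i$ still contains all of $B$ on the larger side.

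Next I would invoke \Cref{cor:rim-matching-reduction} on each $D_i$: since the $A$-side is the smaller one, there is a bi-spanner $S_i$ for $D_i$ of size at most $\bicliqueBispannerSize{n} + 2(kn - n) = \bicliqueBispannerSize{n} + 2n(k-1)$. Setting $S \coloneqq \bigcup_{i \in [k]} S_i$, the size bound is then immediate by summing over the $k$ blocks: $\abs{S} \le k\bigl(\bicliqueBispannerSize{n} + 2n(k-1)\bigr) = k\bicliqueBispannerSize{n} + 2k(k-1)n$.

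It remains to check that $S$ is a bi-spanner for $D$. Given $a \in A$ and $b \in B$, the vertex $a$ lies in exactly one block $A_i$; since $S_i$ is a bi-spanner for $D_i = D[A_i \sqcup B]$ and $b \in B$, there is a temporal path from $a$ to $b$ contained in $S_i \subseteq S$. Hence every vertex of $A$ reaches every vertex of $B$ within $S$, so $S$ is a bi-spanner for $D$. As $D$ has size $kn$, this yields $\bicliqueBispannerSize{kn} \le \abs{S} \le 2k(k-1)n + k\bicliqueBispannerSize{n}$, as claimed.

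I do not anticipate a real obstacle here: the only genuine idea is the asymmetric split so that each piece $D_i$ still sees all of $B$, which lets us charge the unbalanced pieces through \Cref{cor:rim-matching-reduction} rather than through the cruder product bound of \Cref{lem:biclique-weak-recurrence}. The single point that needs a line of care is that taking the union of the $S_i$ suffices precisely because each $a \in A$ belongs to a \emph{unique} block, so no concatenation of paths across different blocks is ever required.
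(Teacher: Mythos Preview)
Your proposal is correct and essentially identical to the paper's own proof: the same one-sided partition of $A$ into $k$ blocks, the same application of \Cref{cor:rim-matching-reduction} to each $D_i = D[A_i \sqcup B]$ to get $|S_i| \le \bicliqueBispannerSize{n} + 2(k-1)n$, and the same union argument for correctness. There is nothing to add.
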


\iflong
\begin{proof}
    The proof is similar to the proof of
    \Cref{lem:biclique-weak-recurrence}. First, consider any temporal bi-clique
    $\defaultBiclique$, where $|A| = |B| = kn$. We partition $A$
    into equally sized blocks $A_1, A_2, \ldots, A_k \subseteq A$. For $i \in
    [k]$, let $D_i \coloneqq D[A_i \sqcup B]$ with appropriately
    restricted labeling.

    For every $D_i$ consider a minimum spanner $S_i$, which satisfies $|S_i|
    \le \bicliqueBispannerSize{n} + 2(k-1)n$ by
    \Cref{cor:rim-matching-reduction}. Set $S \coloneqq \bigcup_{i \in [k]}
    S_i$. Then, $S$ has the correct size and for every $i \in [k], a \in A_i$,
    and $ b \in B$, there is a temporal $ab$-path using only edges of
    $S_i\subseteq S$.
\end{proof}
\fi

\iflong
\begin{figure}[t]
    \SetAlgoNoLine%
    \centering
    \begin{tcolorbox}[blanker,width=(\linewidth-3.98cm)]
    \begin{algorithm}[H]
        \DontPrintSemicolon
        \SetKwFunction{FMain}{biSpanner}
        \SetKwFunction{dis}{dismount}
        \SetKwProg{Fn}{def}{:}{}
        \Fn{\FMain{$\defaultBiclique$}}{
            \lIf{$|A| = 1$}{
                \Return $E(D)$
            }
            partition $A$ into non-empty $A_1, A_2$ with $|A_1| = \lfloor |A| / 2 \rfloor$\;
            $D_1 \gets D[A_1 \sqcup B]$, $D_2 \gets D[A_2 \sqcup B]$\;
            $(D_1^*, S_1^*) \gets$ \dis{$D_1$}, $(D_2^*, S_2^*) \gets$ \dis{$D_2$}\;
            \Return \FMain{$D_1^*$} $\cup$ \FMain{$D_2^*$} $\cup$ $S_1^*$ $\cup$ $S_2^*$
        }
    \end{algorithm}
    \end{tcolorbox}
    \caption{Our algorithm for computing $\bigO(n\log n)$ bi-spanners based on \Cref{lem:biclique-recurrence}. The function \texttt{dismount} takes a bi-clique, exhaustively dismounts all dismountable vertices, and returns the remaining instance and the included edges. By \Cref{lem:equal-size-sides}, this dismounts at least until $|A| = |B|$.}\label{alg:nlogn}
\end{figure}
\fi

If we now choose $k>1$, 
we get
 $\cliqueSpannerSize{n} \in \bigO(n \log n)$ using the
master theorem \cite{cormen2022introduction}. Note that we can also implement
all the necessary reductions to find such a spanner in linear time in the number
of edges in the temporal clique. This achieves the same spanner size as in~\cite{casteigts_fireworks_conf} but with a simpler algorithm and proof. 
\iflong
See \Cref{alg:nlogn} for an outline of the algorithm.
\fi
\ifshort
See the appendix for an outline of the algorithm.
\fi

\begin{corollary}
    We have $\cliqueSpannerSize{n} \in \bigO(n \log n)$ and there is an algorithm
    that computes a spanner of size $\bigO(n \log n)$ in time $\bigO(n^2)$.
\end{corollary}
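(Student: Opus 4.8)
The plan is to derive the existence bound from the recurrence of \Cref{lem:biclique-recurrence} together with the master theorem, and then to argue that the recursive procedure of \Cref{alg:nlogn}, combined with the clique-to-bi-clique transformation of \Cref{lem:clique-spanner-by-bispanner}, realises this bound in quadratic time. For the existence part, I would instantiate \Cref{lem:biclique-recurrence} with $k = 2$, giving $\bicliqueBispannerSize{2n} \le 2\,\bicliqueBispannerSize{n} + 4n$; writing $T(m) = \bicliqueBispannerSize{m}$, this is the recurrence $T(m) \le 2\,T(m/2) + \bigO(m)$, so the master theorem (the balanced case $a = b^{c}$ with $a = b = 2$, $c = 1$) yields $T(m) \in \bigO(m \log m)$ along powers of two, and \Cref{lem:bispanner-size-monotone} extends this to all $n$ by rounding $n$ up to the next power of two, which at most doubles the argument. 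Hence $\bicliqueBispannerSize{n} \in \bigO(n\log n)$, and \Cref{cor:spanner-leq-bispanner}, i.e. $\cliqueSpannerSize{n} \le \bicliqueBispannerSize{n}$, transfers the bound to temporal cliques.

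For the algorithmic statement, given a temporal clique $C$ on $n$ vertices I would first apply \Cref{lem:injective-labeling} and the transformation of \Cref{lem:clique-spanner-by-bispanner} to obtain a bi-clique $D$ with $n$ vertices per side, run \texttt{biSpanner} from \Cref{alg:nlogn} on $D$, and finally map the returned bi-spanner back to a clique spanner of no larger size exactly as in the proof of \Cref{lem:clique-spanner-by-bispanner}. Correctness of \texttt{biSpanner} I would establish by induction on $|A|$: the base case $|A| = 1$ is immediate; for the inductive step, the two recursive calls return bi-spanners for $D_1^{*}$ and $D_2^{*}$, each exhaustive-dismount step preserves the bi-spanner property once the emitted edges $S_i^{*}$ are added back (this is the delegation argument spelled out after \Cref{def:dismountable}, iterated along dismount chains), and since $A = A_1 \sqcup A_2$ while both $D_1$ and $D_2$ retain all of $B$, the union of a bi-spanner for $D_1$ and one for $D_2$ is a bi-spanner for $D$. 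The size bound then follows by summing $\bigO(1)$ edges at each of the $n$ leaves and, using \Cref{lem:equal-size-sides} and \Cref{thm:rim-matching} to bound how far each side is dismounted, $\bigO(n)$ dismount edges at each of the $\bigO(\log n)$ levels.

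For the running time I would analyse the recursion tree: at depth $d$ there are at most $2^{d}$ sub-instances, each with at most $n/2^{d}$ vertices on the $A$-side and at most $n/2^{d-1}$ on the $B$-side, hence $\bigO(n^{2}/4^{d})$ edges; building the sub-instances and running the dismounting at a node takes time linear in that node's number of edges, so the work at depth $d$ is $\bigO(n^{2}/2^{d})$, which sums to $\bigO(n^{2})$ over the $\bigO(\log n)$ depths, with the leaves and the initial and final transformations contributing only $\bigO(n^{2})$ more. I expect the main obstacle to be precisely the per-node cost of \emph{exhaustive} dismounting: a naive "find one dismountable vertex, remove it, repeat" loop risks an extra linear factor, so I would precompute all earliest and latest partners once and argue that the vertices violating the extremal-matching property of \Cref{thm:rim-matching} can be removed in a single linear-time pass (amortising the handling of dismount chains), which together with the geometric decay of the edge counts delivers the $\bigO(n^{2})$ bound.
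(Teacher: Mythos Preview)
Your proposal is correct and follows essentially the same approach as the paper: instantiate \Cref{lem:biclique-recurrence} with $k>1$, apply the master theorem to bound $\bicliqueBispannerSizeOp$, transfer to $\cliqueSpannerSizeOp$ via \Cref{cor:spanner-leq-bispanner}, and observe that the recursive procedure in \Cref{alg:nlogn} realises this constructively in time linear in the number of edges. The paper's own argument is a two-sentence sketch; you supply the details it omits (monotonicity to handle non-powers of two, the inductive correctness of \texttt{biSpanner}, the geometric decay of edge counts across recursion levels, and the amortised-linear dismounting), but there is no substantive difference in strategy.
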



\section{Pivotable Bi-Cliques}\label{sec:pivotable}
An important structure that was used extensively in \cite{casteigts_fireworks_conf,casteigts_threshold} is the
notion of a pivot. A pivot is a vertex that can be
reached by all vertices until a certain time point $t$ and that can also
reach all other vertices starting at time point $t$.
In~\cite{casteigts_threshold} it is shown that this structure, though intuitively quite restrictive, appears with high
probability in random temporal graphs.

However, there are instances without this structure~\cite{casteigts_fireworks_conf}. Thus, we study an even
more widely applicable structure, based on a what we call a \emph{$c$-pivot-edge} or \emph{partial pivot-edge} (when omitting the concrete parameter). We exploit partial pivot-edges to apply the divide-and-conquer
paradigm to find linear-size temporal spanners. To the best of our knowledge, this paradigm has not
yet been applied to this problem.
As we see later, the absence of a partial pivot-edge yields strong structural properties and it heavily restricts the class of temporal cliques where no linear spanner can be found using current techniques.

\iflong
\subsection{Partial Pivot-Edges}%
\fi

Recall the definition of $\inSetOp$ and $\outSetOp$ in \Cref{sec:prelims}. 
We observe that in extremally matched bi-cliques, 
the sets \(\inSet{e}\) and \(\outSet{e}\) are always distributed
evenly between \(A\) and \(B\).

\iflong
\begin{lemma}
\fi
\ifshort
\begin{lemma}[$\star$]
\fi
\label{lem:inset-outset-symm}
	Let \(G=(A,B, \lambda)\) be an extremally matched bi-clique. For all
\(e \in \bicliqueEdges{A}{B}\), we have \(\abs{\inSet{e}\cap A} = \abs{\inSet{e}\cap B}\) and
\(\abs{\outSet{e}\cap A} = \abs{\outSet{e}\cap B}\).
\end{lemma}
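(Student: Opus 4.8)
The plan is to exploit the bijective structure provided by the two perfect matchings $\earlyMatching$ and $\lateMatching$ that exist because the bi-clique is extremally matched. The key observation is that $\lateMatching$ gives a bijection between $A$ and $B$ (since it is a perfect matching of the bi-clique, i.e., a set of disjoint edges covering every vertex, each edge having one endpoint in $A$ and one in $B$), and similarly $\earlyMatching$ gives a bijection between $A$ and $B$. I would argue that $\latePartner{\cdot}$ restricted to $A$ is a bijection $A \to B$ which, moreover, maps $\outSet{e} \cap A$ into $\outSet{e} \cap B$ and whose inverse maps $\outSet{e} \cap B$ into $\outSet{e} \cap A$; this gives $\abs{\outSet{e} \cap A} = \abs{\outSet{e} \cap B}$. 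The dual argument with $\earlyMatching$ handles the $\inSetOp$ side.

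First I would fix $e = \undirEdge{u}{v} \in \bicliqueEdges{A}{B}$ and recall that $\outSet{e} = \outSetTime{u}{\edgeLabel(e)} \cup \outSetTime{v}{\edgeLabel(e)}$ consists of all vertices reachable from $u$ or from $v$ via a temporal path starting at or after $\edgeLabel(e)$. Now take any $a \in \outSet{e} \cap A$ and let $b = \latePartner{a} \in B$. Since $D$ is extremally matched, $\undirEdge{a}{b} \in \lateMatching$ and this edge is the latest edge incident to $a$; in particular $\edgeLabelv{a}{b} \ge \edgeLabelv{x}{a}$ for every $x$, so appending the edge $\undirEdge{a}{b}$ to any temporal path ending at $a$ keeps it temporal. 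Hence the path witnessing $a \in \outSet{e}$ can be extended by $\undirEdge{a}{b}$, showing $b \in \outSet{e} \cap B$. Conversely, take $b \in \outSet{e} \cap B$ and let $a$ be its partner under $\lateMatching$, i.e., the unique $a \in A$ with $\latePartner{a} = b$ (unique because $\lateMatching$ is a perfect matching); again $\undirEdge{a}{b}$ is the latest edge at $a$, so a temporal path reaching $b$ can be extended by $\undirEdge{b}{a}$ to reach $a$, giving $a \in \outSet{e} \cap A$. These two maps are mutually inverse bijections between $\outSet{e} \cap A$ and $\outSet{e} \cap B$, so the two sets have equal cardinality.

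For the $\inSetOp$ statement I would run the mirror-image argument using $\earlyMatching$: if $a \in \inSet{e} \cap A$, then the earliest edge $\undirEdge{a}{\earlyPartner{a}}$ prepended to any temporal path from $a$ into $e$'s time window stays temporal (it is the earliest edge at $a$, so its label is at most that of the first edge used), so $\earlyPartner{a} \in \inSet{e} \cap B$; and conversely via the matching partner in $\earlyMatching$. Again this is a bijection, yielding $\abs{\inSet{e} \cap A} = \abs{\inSet{e} \cap B}$.

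I do not anticipate a serious obstacle here; the statement is essentially a bookkeeping consequence of the matching structure. The one point requiring a little care is the direction of the path-extension: for $\outSetOp$ one \emph{appends} a latest edge at the far endpoint, and one must check the temporality condition $\leqAt{\cdot}{\cdot}{\cdot}$ at the junction vertex, which holds precisely because that edge is latest (or earliest) for that vertex. A second minor subtlety is the base case where the path is trivial (the vertex $a$ equals $u$ or $v$, an endpoint of $e$ itself): then $a \in \outSet{e}$ directly, and the single edge $\undirEdge{a}{\latePartner{a}}$ starting at time $\ge \edgeLabel(e)$ — which need not hold automatically — must be handled; but since $\undirEdge{a}{\latePartner{a}}$ is the latest edge at $a$ and $\undirEdge{u}{v}$ is incident to $a$ (as $a \in \{u,v\}$ when $a$ is an endpoint, or via the witnessing path otherwise), its label is at least $\edgeLabel(e)$, so this case is fine. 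Once the bijections are set up, both equalities follow immediately.
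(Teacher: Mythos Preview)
Your approach is correct and essentially identical to the paper's: use that the perfect matchings $\earlyMatching$ and $\lateMatching$ restrict to bijections between $\inSet{e}\cap A$ and $\inSet{e}\cap B$, respectively $\outSet{e}\cap A$ and $\outSet{e}\cap B$. One small slip in your converse for $\outSet{e}$: the reason the edge $\undirEdge{a}{b}$ can be appended to a temporal path ending at $b$ is that it is the latest edge \emph{at $b$} (which does follow, since $\lateMatching$ being a perfect matching forces $\latePartner{b}=a$), not merely that it is the latest edge at $a$.
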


\iflong
\begin{proof}
  Consider any \(e \in A\otimes B\). Notice that for all \(v\in \inSet{e}\), we have
  \(\earlyPartner{v} \in \inSet{e}\) and \(\earlyMatching\) is a bijection between
  \(A\) and \(B\). So, there is a bijection between \(\inSet{e}\cap A\) and
  \(\inSet{e}\cap B\). Thus, \(\abs{\inSet{e}\cap A} = \abs{\inSet{e}\cap B}\). The claim
  regarding \(\outSet{e}\) follows analogously.
\end{proof}
\fi

Next, we show that we can reach all vertices in $\outSet{e}$ from $e$ starting at or after $\edgeLabele{e}$, using exactly $\left| \outSet{e} \right| - 1$ edges.
The analogous statement holds for the vertices in $\inSet{e}$, in the sense that those can
reach $e$ at or after $\edgeLabele{e}$, using only
$\left| \inSet{e} \right| -1$ many edges.

\begin{lemma}\label{lem:rooted_tree}
	Let $e = \undirEdge{a}{b} \in \bicliqueEdges{A}{B}$. All vertices in $\outSet{e}$ can be organized in a tree~$T$ rooted at $a$, s.t. $T$ only has edges with label at least $\edgeLabele{e}$ and for each $v \in \outSet{e}$ there are temporal paths $a \leadsto v$ and $b \leadsto v$ in $T$.
	The analogous result holds for vertices in $\inSet{e}$. 
\end{lemma}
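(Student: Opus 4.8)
The plan is to reduce the claim to the following statement: there is a spanning tree $T$ of $\outSet{e}$, rooted at $a$ and containing the edge $\undirEdge{a}{b}$, all of whose edges have label at least $\edgeLabele{e}$, and in which the unique path from $a$ to every $v\in\outSet{e}$ is temporal. This suffices: $T$ has $\abs{\outSet{e}}-1$ edges; the $a$-to-$v$ path in $T$ is the required $a\leadsto v$; and since $\undirEdge{a}{b}$ is (by injectivity of $\edgeLabel$) the unique $T$-edge with label exactly $\edgeLabele{e}$ and hence minimal in $T$, the $T$-path from $b$ to any $v$ is temporal as well — it is either a suffix of the $a$-to-$v$ path (when $b$ lies on it) or of the form $b\to a\leadsto v$, where $\edgeLabelv{b}{a}=\edgeLabele{e}$ does not exceed the label of the first edge leaving $a$.

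To construct $T$, I would use a shortest-path-tree (Dijkstra-style) argument. For $v\in\outSet{e}$ let $\alpha(v)$ be the minimum, over all temporal paths $P$ from $a$ or from $b$ to $v$ that use only edges of label at least $\edgeLabele{e}$, of the label of the last edge of $P$; set $\alpha(a)=\alpha(b)=\edgeLabele{e}$. This is well-defined on all of $\outSet{e}$: by definition of $\outSet{e}$, each such $v$ is reached from $a$ or $b$ by a temporal path whose first edge has label at least $\edgeLabele{e}$, hence (by temporality) all of whose edges do, and every prefix of this path again witnesses that its endpoint lies in $\outSet{e}$. Now set $p(b)\coloneqq a$ and, for $v\in\outSet{e}\setminus\{a,b\}$, let $p(v)$ be the second-to-last vertex of some path attaining $\alpha(v)$, so $\edgeLabelv{p(v)}{v}=\alpha(v)$ and $p(v)\in\outSet{e}$; put $T\coloneqq\{\undirEdge{p(v)}{v}\mid v\in\outSet{e}\setminus\{a\}\}$.

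The remaining verifications are routine. For $v\ne a$ one has $\alpha(p(v))\le\alpha(v)$: the prefix of the minimizing path up to $p(v)$ has last-edge label $\ge\alpha(p(v))$ and, by temporality, $\le\edgeLabelv{p(v)}{v}=\alpha(v)$ (for $v=b$ this reads $\edgeLabele{e}\le\edgeLabele{e}$). By injectivity $\alpha(v)>\edgeLabele{e}$ for $v\notin\{a,b\}$, and in fact $\alpha(p(v))<\alpha(v)$ there, so ordering the vertices by $\alpha$ with $a$ before $b$ makes every $p(v)$ precede $v$; hence $T$ is acyclic, i.e.\ a tree on $\outSet{e}$ rooted at $a$ with $\abs{\outSet{e}}-1$ edges containing $\undirEdge{a}{b}$. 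Every $T$-edge $\undirEdge{p(v)}{v}$ has label $\alpha(v)\ge\edgeLabele{e}$, and along any root path $a=u_0,u_1,\dots,u_\ell=v$ (with $u_i=p(u_{i+1})$) we get $\edgeLabelv{u_i}{u_{i+1}}=\alpha(u_{i+1})\le\alpha(u_{i+2})=\edgeLabelv{u_{i+1}}{u_{i+2}}$, so the path is temporal. This proves the reduced statement, and hence the $\outSet{e}$ part.

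The $\inSet{e}$ part is the time-reversed mirror: with $\beta(v)$ the maximum first-edge label over temporal paths from $v$ to $a$ or $b$ using only labels at most $\edgeLabele{e}$ (and $\beta(a)=\beta(b)=\edgeLabele{e}$), the same construction pointed towards $a$ yields a tree rooted at $a$, all of whose edges have label at most $\edgeLabele{e}$, in which every $v\in\inSet{e}$ has temporal paths $v\leadsto a$ and $v\leadsto b$. The one point that needs genuine care — the main (minor) obstacle — is handling the two sources $a,b$ at once: checking that $\alpha$ is well-defined on all of $\outSet{e}$ via closure of witness paths under prefixes, forcing $\undirEdge{a}{b}$ into $T$ so the $b\leadsto v$ paths come for free, and breaking the $\alpha(a)=\alpha(b)$ tie so the parent relation is acyclic; everything else is a standard shortest-path-tree argument.
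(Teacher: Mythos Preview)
Your proof is correct and is precisely the foremost-tree construction the paper invokes by citation (to \cite{xuan2003computing}, \cite{lavrov2016increasing}, \cite{casteigts_threshold}) without writing out: your $\alpha(v)$ is the earliest arrival time from $\{a,b\}$ using edges of label at least $\edgeLabele{e}$, and your parent map $p$ builds the foremost tree. The paper gives no details beyond naming the construction, so your write-up---including the care taken to force $\undirEdge{a}{b}$ into $T$, to break the $\alpha(a)=\alpha(b)$ tie, and to derive the $b\leadsto v$ paths from the $a\leadsto v$ paths---simply fills in what the paper leaves to the references.
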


The proof of \Cref{lem:rooted_tree} is already known \cite{xuan2003computing} and it relies on constructing a foremost tree for $\outSet{e}$ (and its inverted version for $\inSet{e})$. It was considered in \cite{lavrov2016increasing} and lately has been used in~\cite{casteigts_threshold}, where foremost trees are applied to temporal cliques.  With this idea, we can temporally connect all \(u \in \inSet{e}\) to all
\(v \in \outSet{e}\) using a linear number of edges.

\iflong
\begin{corollary}
\fi
\ifshort
\begin{corollary}[$\star$]
\fi\label{cor:pivot-connections}
	Given an edge $e \in \bicliqueEdges{A}{B}$, we can connect the vertices in $\inSet{e}$ to those in $\outSet{e}$ by using at most $\left| \inSet{e} \right| + \left|\outSet{e}  \right| - 3$ many edges.
\end{corollary}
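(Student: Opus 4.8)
The plan is to combine the two trees provided by \Cref{lem:rooted_tree}. Write $e = \undirEdge{a}{b}$, let $T$ be an out-tree rooted at $a$ spanning $\outSet{e}$ (all labels at least $\edgeLabele{e}$, with temporal paths $a \leadsto v$ and $b \leadsto v$ for every $v \in \outSet{e}$), and let $T'$ be an in-tree spanning $\inSet{e}$ (all labels at most $\edgeLabele{e}$, with temporal paths $u \leadsto a$ and $u \leadsto b$ for every $u \in \inSet{e}$). I would then take $S \coloneqq T \cup T'$ as the desired edge set and show both the size bound and the connectivity property. Note that $a, b \in \inSet{e} \cap \outSet{e}$, so both trees contain the vertices $a$ and $b$.

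For the size bound, the key point is that both trees may be assumed to contain the edge $e$ itself. In the foremost-tree construction underlying \Cref{lem:rooted_tree}, every temporal walk from $a$ to $b$ using only edges of label at least $\edgeLabele{e}$ must enter $b$ on an edge of label at least $\edgeLabele{e}$, and the direct edge $e$ achieves exactly $\edgeLabele{e}$, so the foremost tree attaches $b$ to $a$ via $e$; by injectivity of $\edgeLabel$ this is the unique such edge, hence $e \in T$. The inverted construction for $\inSet{e}$ symmetrically attaches $a$ to $b$ via $e$, so $e \in T'$. Since $|T| = |\outSet{e}| - 1$ and $|T'| = |\inSet{e}| - 1$ and $e \in T \cap T'$, we get $|S| = |T \cup T'| \le (|\outSet{e}| - 1) + (|\inSet{e}| - 1) - 1 = |\inSet{e}| + |\outSet{e}| - 3$.

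It then remains to show that $S$ lets every $u \in \inSet{e}$ reach every $v \in \outSet{e}$. If $u \in \{a,b\}$, then $T \subseteq S$ already contains a temporal path $u \leadsto v$; if $v \in \{a,b\}$, then $T' \subseteq S$ contains one; so I may assume $u, v \notin \{a,b\}$. I would route through $a$ or $b$ via the following dichotomy. The path $u \leadsto a$ in $T'$ has all labels at most $\edgeLabele{e}$. If it does not use $e$, then by injectivity all its labels are strictly below $\edgeLabele{e}$, and concatenating it with $a \leadsto v$ from $T$ (all labels at least $\edgeLabele{e}$) is a temporal path in $S$. Otherwise, since the path is simple and $e$ is incident to its endpoint $a$, the edge $e$ must be its final edge, so the path has the form $u \leadsto b \to a$; its prefix $u \leadsto b$ lies in $T'$, avoids the vertex $a$ and hence the edge $e$, so it uses only labels strictly below $\edgeLabele{e}$, and concatenating it with $b \leadsto v$ from $T$ is again a temporal path in $S$.

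The two tree constructions and the edge count are routine given \Cref{lem:rooted_tree}. The one delicate step is the last case of the reachability argument: the naive concatenation $u \leadsto a \leadsto v$ fails exactly when both halves would traverse $e$ at time $\edgeLabele{e}$ in opposite directions, and the remedy---rerouting through $b$---works precisely because the $u$--$a$ path, being simple, cannot also visit $b$ before using $e$, so its prefix to $b$ arrives strictly before $\edgeLabele{e}$. Confirming that these cases are exhaustive and that each really produces a valid (possibly non-simple) temporal path inside $S$ is where I expect the only genuine, if modest, work to lie; everything else is bookkeeping.
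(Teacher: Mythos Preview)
Your proposal is correct and takes essentially the same approach as the paper: take the union of the two trees from \Cref{lem:rooted_tree} and observe that the edge $e$ lies in both, giving the count $(\abs{\inSet{e}}-1)+(\abs{\outSet{e}}-1)-1$. Your final reachability case analysis is more careful than necessary---since the paper works with non-strict temporal walks and explicitly allows non-simple paths that are later shortcut, the naive concatenation $u \leadsto a \leadsto v$ is already a valid temporal walk even when both halves use $e$, so no rerouting through $b$ is needed.
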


\iflong
\begin{proof}
	Since we can organize the vertices $\inSet{e}$, $\outSet{e}$ as a tree by \Cref{lem:rooted_tree}, all
	vertices in $\inSet{e}$ can reach $e$ and, afterwards, can reach
	$\outSet{e}$ using the respective trees. Both trees consist of
	$\left| \inSet{e} \right|-1, \left| \outSet{e} \right| -1$ many edges
	respectively. But since \(a\) and \(b\) appear in both sets being connected
	by $e$, we double count this edge 
	resulting in $\left| \inSet{e} \right| + \left|\outSet{e}  \right| - 3$ many
	edges.
\end{proof}
\fi

To apply divide-and-conquer, we choose an edge $e \in \bicliqueEdges{A}{B}$ and include the edges according to \Cref{cor:pivot-connections}. Then, we create two sub-instances to connect the remaining vertices. 

\iflong
\begin{theorem}
\fi
\ifshort
\begin{theorem}[$\star$]
\fi
\label{thm:edge-pivot}
	Let $\defaultBiclique$ be an extremally matched bi-clique with
$n \coloneqq \left| A \right| = \left| B \right|$. There exists a bi-spanner in
\(D\) with size at most
\[\min_{e \in \bicliqueEdges{A}{B}}\left( \bicliqueBispannerSize{n- \frac{\left|\inSet{e} \right|}{2}} + \bicliqueBispannerSize{n- \frac{\left|\outSet{e} \right|}{2}} + 2\abs{\inSet{e}} + 2 \abs{\outSet{e}} - 3 \right).\]
\end{theorem}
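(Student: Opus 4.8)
The plan is to fix an arbitrary edge $e = \undirEdge{a}{b} \in \bicliqueEdges{A}{B}$ and construct a bi-spanner of $D$ of size at most the bracketed expression \emph{for this $e$}; applying this to the minimizing edge then gives the claimed bound. The bi-spanner will be the union of three edge sets: a linear-size ``pivot gadget'' $S_0$ that temporally connects all of $\inSet{e}$ to all of $\outSet{e}$, and two recursively obtained bi-spanners $S_1, S_2$ on suitably chosen sub-bi-cliques handling every remaining pair.

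For the gadget, apply \Cref{cor:pivot-connections} to get $S_0$ with $\abs{S_0} \le \abs{\inSet{e}} + \abs{\outSet{e}} - 3$ such that every vertex of $\inSet{e}$ reaches every vertex of $\outSet{e}$ using only $S_0$. Two structural facts drive the rest. By \Cref{lem:in-cup-out-eq-all-vertices}, $\inSet{e} \cup \outSet{e} = A \sqcup B$, so $A \setminus \inSet{e} \subseteq \outSet{e}$ and $B \setminus \outSet{e} \subseteq \inSet{e}$; consequently the pairs $(a',b') \in A \times B$ \emph{not} already connected by $S_0$ are exactly those with $b' \notin \outSet{e}$, together with those with $b' \in \outSet{e}$ but $a' \notin \inSet{e}$. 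By \Cref{lem:inset-outset-symm} (here we use that $D$ is extremally matched), $\abs{\outSet{e} \cap B} = \abs{\outSet{e}}/2$ and $\abs{\inSet{e} \cap A} = \abs{\inSet{e}}/2$, hence $\abs{B \setminus \outSet{e}} = n - \abs{\outSet{e}}/2$ and $\abs{A \setminus \inSet{e}} = n - \abs{\inSet{e}}/2$ (both integers, since these sets have equally many vertices on each side).

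Now define the sub-bi-cliques $D_1 \coloneqq D[A \sqcup (B \setminus \outSet{e})]$, read as a bi-clique with first part $A$ and second part $B \setminus \outSet{e}$, and $D_2 \coloneqq D[(A \setminus \inSet{e}) \sqcup B]$, with first part $A \setminus \inSet{e}$ and second part $B$. A bi-spanner of $D_1$ makes every vertex of $A$ reach every vertex of $B \setminus \outSet{e}$, and a bi-spanner of $D_2$ makes every vertex of $A \setminus \inSet{e}$ reach every vertex of $B$. Both have unequal part sizes, so \Cref{cor:rim-matching-reduction} gives $S_1$ for $D_1$ with $\abs{S_1} \le \bicliqueBispannerSize{n - \abs{\outSet{e}}/2} + 2 \cdot \abs{\outSet{e}}/2 = \bicliqueBispannerSize{n - \abs{\outSet{e}}/2} + \abs{\outSet{e}}$, and $S_2$ for $D_2$ with $\abs{S_2} \le \bicliqueBispannerSize{n - \abs{\inSet{e}}/2} + \abs{\inSet{e}}$. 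Since any temporal path living inside a subgraph is a temporal path in $D$, the set $S \coloneqq S_0 \cup S_1 \cup S_2$ is a bi-spanner: given $a' \in A$, $b' \in B$, if $b' \notin \outSet{e}$ then $a', b' \in D_1$ and $S_1$ connects them; else if $a' \notin \inSet{e}$ then $a' \in A \setminus \inSet{e}$, so $a', b' \in D_2$ and $S_2$ connects them; else $a' \in \inSet{e} \cap A$ and $b' \in \outSet{e} \cap B$, and $S_0$ connects them. Summing the three bounds yields exactly $\bicliqueBispannerSize{n - \abs{\inSet{e}}/2} + \bicliqueBispannerSize{n - \abs{\outSet{e}}/2} + 2\abs{\inSet{e}} + 2\abs{\outSet{e}} - 3$.

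The construction is essentially forced once one commits to the divide-and-conquer split at $e$; there is no deep obstacle, but three bookkeeping points need care: (i) orienting the two sub-bi-cliques so that the ``sender'' side is $A$ in $D_1$ and $A \setminus \inSet{e}$ in $D_2$ (swapping them would be the wrong reachability direction); (ii) confirming that \Cref{cor:rim-matching-reduction} indeed costs only two edges per unit of size difference, which is what makes the linear slack $\abs{\inSet{e}} + \abs{\outSet{e}}$ split correctly between the two recursive calls; and (iii) matching the $-3$ from \Cref{cor:pivot-connections} (the edge $e$ being shared by the two foremost trees) against the $-3$ in the target bound. Everything else is the case check above and the additive arithmetic.
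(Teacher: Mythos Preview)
Your proof is correct and follows essentially the same approach as the paper: fix an edge $e$, use \Cref{cor:pivot-connections} for the pivot gadget, split into the two sub-bi-cliques $D[(A\setminus\inSet{e})\sqcup B]$ and $D[A\sqcup(B\setminus\outSet{e})]$, bound each via \Cref{cor:rim-matching-reduction} together with \Cref{lem:inset-outset-symm}, and verify the three-case coverage. The only cosmetic differences are that you explicitly invoke \Cref{lem:in-cup-out-eq-all-vertices} (which the paper leaves implicit) and that your remark ``both have unequal part sizes'' is true but not actually needed for \Cref{cor:rim-matching-reduction} to apply.
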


\iflong
\begin{proof}
	Consider any \(e\in\bicliqueEdges{A}{B}\).
	By \Cref{cor:pivot-connections}, there is a set of edges \(F_{e}\) such that
	\(|F_{e}| = \abs{\inSet{e}} + \abs{\outSet{e}} - 3\) and all vertices in
	\(\inSet{e}\) can reach all vertices of \(\outSet{e}\) using the edges in~\(F_{e}\).
	To connect the remaining \nodes{}, we create two sub-instances
	$G'_{e}\coloneqq G[(A \setminus \inSet{e}) \sqcup B]$ and
	$G''_{e} \coloneqq G[A \sqcup (B \setminus \outSet{e})]$. 
	Let $S'_e$ be a minimum bi-spanner of $G'_e$, and $S''_e$ be a minimum bi-spanner of $G''_e$. 
	By \Cref{lem:inset-outset-symm}, the smaller sides of $G'_e$ and $ G''_e$ have
size $n - \frac{\left| \inSet{e} \right|}{2}$ and
$ n - \frac{\left| \outSet{e} \right|}{2}$, respectively.
	By \Cref{cor:rim-matching-reduction}, we get that 
	$|S'_{e}| \leq \bicliqueBispannerSize{n- \frac{\left| \inSet{e} \right|}{2}} +	\abs{\inSet{e}}$
	and
	$|S''_{e}| \leq \bicliqueBispannerSize{n- \frac{\left| \outSet{e} \right|}{2}} + \abs{\outSet{e}}$.

	We claim that the set \(S_{e}\coloneqq F_{e}\cup S'_{e}\cup S''_{e}\) is a bi-spanner for \(G\). Consider
	any vertex~\(a \in A\) and vertex~\(b \in B\). If vertex~\(a\notin \inSet{e}\), then vertex~\(a\) can reach vertex~\(b\) via
	\(S'_{e}\). If \(b\notin\outSet{e}\), vertex~\(a\) can reach vertex~\(b\) via \(S''_{e}\).
	Otherwise, \(a\in\inSet{e}\) and \(b\in\outSet{e}\) and so \(a\) can reach \(b\)
	via \(F_{e}\). Choosing the edge where the set \(S_{e}\) is smallest, yields
	the result.
\end{proof}
\fi

As \(2\abs{\inSet{e}} + 2\abs{\outSet{e}} -3\leq 8n - 3\), we see that only a linear
number of edges is included to split the instance.  With this, we can show that if for some edge
\(e \in \bicliqueEdges{A}{B}\) the sets \(\inSet{e}\) and \(\outSet{e}\) overlap, we can reduce the instance significantly. We name the set ${\inSet{e} \cap \outSet{e}}$ the \emph{pivot-set} of $e$.
If $\abs{\inSet{e} \cap \outSet{e}} \geq 2cn$ for $c \in (0, 1]$, we call $e$ a \emph{$c$-pivot-edge}. When omitting the concrete constant $c$, we refer to $e$ as a \emph{partial pivot-edge}.

\iflong
\begin{theorem}
\fi
\ifshort
\begin{theorem}[$c$-Pivot-Edge),($\star$]
\fi
\label{cor:pivot-edge-spanner}
	Let $c \in (0, 1]$. Let \(D = (A,B,\lambda)\) be an extremally matched bi-clique with
	\(n\coloneqq |A| = |B|\). If there is a $c$-pivot-edge in $D$, we can split the instance into two sub-instances with overall number of vertices reduced by $2cn$.
    Per removed vertex, we include at most \(\frac{4}{c}\) edges.
	If applicable recursively, we get a spanner of size \(\frac{8}{c}n\).
\end{theorem}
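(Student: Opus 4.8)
The plan is to combine Theorem~\ref{thm:edge-pivot} with the pivot-set bound, and then unroll the recursion. First I would fix a $c$-pivot-edge $e$, so that $\abs{\inSet{e} \cap \outSet{e}} \geq 2cn$. By inclusion-exclusion, $\abs{\inSet{e}} + \abs{\outSet{e}} = \abs{\inSet{e} \cup \outSet{e}} + \abs{\inSet{e} \cap \outSet{e}} = 2n + \abs{\inSet{e} \cap \outSet{e}} \geq 2n + 2cn$, using \Cref{lem:in-cup-out-eq-all-vertices}. Plugging $e$ into the minimum from \Cref{thm:edge-pivot}, the two sub-instances have sizes $n_1 \coloneqq n - \frac{\abs{\inSet{e}}}{2}$ and $n_2 \coloneqq n - \frac{\abs{\outSet{e}}}{2}$ (these are integers by \Cref{lem:inset-outset-symm}), and $n_1 + n_2 = 2n - \frac{\abs{\inSet{e}} + \abs{\outSet{e}}}{2} \leq 2n - (n + cn) = n - cn$. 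So the combined size of the two sub-instances is at most $n - cn$; equivalently, the total number of vertices (summed over both sides of both sub-instances) drops by at least $2cn$ compared to the $2n$ vertices of $D$. This is the first claim.

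Next I would count the edges. \Cref{thm:edge-pivot} adds $2\abs{\inSet{e}} + 2\abs{\outSet{e}} - 3 \leq 8n - 3 < 8n$ extra edges (beyond the recursive bi-spanners) to split the instance, and we just argued at least $2cn$ vertices are removed. Hence the number of extra edges per removed vertex is at most $\frac{8n}{2cn} = \frac{4}{c}$, giving the second claim.

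For the size bound under recursive application, I would set up the recurrence. Write $f(m)$ for the bi-spanner size we obtain on an instance with $m$ vertices per side when we recursively apply the $c$-pivot-edge reduction; we want $f(n) \leq \frac{8}{c} n$. The key structural fact is that at each level the extra edges are bounded by $\frac{4}{c}$ times the number of vertices removed at that level, and vertices are only ever removed (the recursion terminates when $m = 1$, or when no $c$-pivot-edge exists, but for the clean statement we assume the reduction is applicable all the way down, so in total we remove all but $O(1)$ of the $2n$ vertices). Summing $\frac{4}{c}$ per removed vertex over all levels of the recursion tree, and noting the total number of vertices ever removed across the whole tree is at most $2n$ (each of the $2n$ vertices of $D$ is removed exactly once, in exactly one sub-instance), the total edge count is at most $\frac{4}{c} \cdot 2n = \frac{8}{c} n$. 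More carefully, I would prove $f(m) \leq \frac{8}{c} m$ by strong induction on $m$: for $m = 1$ the instance has a trivial bi-spanner of size $1 \leq \frac{8}{c}$; for the inductive step, $f(m) \leq f(n_1) + f(n_2) + 8m \leq \frac{8}{c} n_1 + \frac{8}{c} n_2 + 8m \leq \frac{8}{c}(m - cm) + 8m = \frac{8}{c} m - 8m + 8m = \frac{8}{c} m$, where I used $n_1 + n_2 \leq m - cm$ from the first part. (The additive $2\abs{\inSet{e}} + 2\abs{\outSet{e}} - 3$ is bounded by $8m$, and the small-$m$ base cases are absorbed since $\frac{8}{c} \geq 8$.)

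The main obstacle is bookkeeping rather than any deep idea: one must be careful that \Cref{thm:edge-pivot} is stated in terms of the $\bicliqueBispannerSizeOp$ function (worst-case minimum bi-spanner size) and relies on \Cref{cor:rim-matching-reduction} to re-equalize the sub-instance sides, so the sub-instance ``size'' $n - \frac{\abs{\inSet{e}}}{2}$ is already the size \emph{after} that re-balancing, and the $+\abs{\inSet{e}}$, $+\abs{\outSet{e}}$ terms in that theorem are exactly the re-balancing cost already folded into the $8n$ bound; I would make sure not to double-count these. A second minor subtlety is the termination/recursion-depth condition: the clean bound $\frac{8}{c} n$ assumes a $c$-pivot-edge exists in every sub-instance encountered (``if applicable recursively''); if at some point none exists, the reduction simply stops and the statement only claims the bound conditionally, which matches how it will be used — the contrapositive (absence of partial pivot-edges forces strong structure) is what the later sections exploit.
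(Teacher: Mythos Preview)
Your proposal is correct and follows essentially the same approach as the paper: use inclusion--exclusion on $\inSet{e}$ and $\outSet{e}$ together with \Cref{lem:in-cup-out-eq-all-vertices} to get $\abs{\inSet{e}}+\abs{\outSet{e}}\ge (2+2c)n$, plug into \Cref{thm:edge-pivot}, and then amortize the at most $8n$ extra edges against the at least $2cn$ removed vertices to obtain $\tfrac{4}{c}$ edges per removed vertex and hence $\tfrac{8}{c}n$ overall.

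Two small points where the paper is more explicit than your write-up. First, before recursing you must restore the \emph{extremally matched} hypothesis of \Cref{thm:edge-pivot}; the paper notes that this is done by further dismounting, each such removal costing $2\le \tfrac{4}{c}$ edges, so the amortized bound is preserved. Your induction silently absorbs this (since $2\le \tfrac{8}{c}$), but it should be stated. Second, your side remark that ``each of the $2n$ vertices of $D$ is removed exactly once'' is not literally true: a vertex can live in both sub-instances $G[(A\setminus\inSet{e})\sqcup B]$ and $G[A\sqcup(B\setminus\outSet{e})]$. Fortunately your actual induction $f(m)\le f(n_1)+f(n_2)+8m$ with $n_1+n_2\le (1-c)m$ does not use that remark and is the right argument; the paper phrases the same computation as ``at most $\tfrac{4}{c}$ edges per removed vertex, independent of $n$, hence at most $\tfrac{4}{c}\cdot 2n$ total.''
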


\iflong
\begin{proof}
    Let \(e \in\bicliqueEdges{A}{B}\) be such that \(\abs{\inSet{e} \cap \outSet{e}} \geq 2cn\).
	Note that
	$\left| \inSet{e} \right| + \left| \outSet{e} \right| = \left| \inSet{e} \cup \outSet{e} \right| + \left| \inSet{e} \cap \outSet{e} \right| \ge \left( 2 + 2c \right) n$.
	By \Cref{thm:edge-pivot}, the sub-instances have sides of overall size at most $2n - \frac{2+2c}{2}n = \left( 1-c \right) n$.
	This means that $2cn$ vertices have been removed using at most \(8n -3\) edges
	overall. Per such vertex, we include
	\(\frac{8n - 3}{2cn} \leq \frac{4}{c}\) edges.

	To recursively apply \Cref{thm:edge-pivot}, observe that we can reduce any dismountable
	vertex until both sub-instances are extremally matched. For each vertex
	removed in this way, we include \(2\leq \frac{4}{c}\) edges.
	As the number of edges included per removed vertex in both ways does not
	depend on \(n\), we include at most \(\frac{4}{c}\abs{A\sqcup B}= \frac{8}{c}n\) edges.
\end{proof}
\fi

This theorem tells us that if we repeatedly find such a $c$-pivot-edge
$e\in \bicliqueEdges{A}{B}$ with $\left| \inSet{e} \cap \outSet{e} \right|$ being at
least a (previously fixed) $c$-fraction of the current size of the instance, we
get a linear-size bi-spanner for our graph. 
Therefore, if we search for counterexamples for
linear-size spanners in temporal bi-cliques, we can assume that for any fixed $c$ no
such $c$-pivot-edge is present, that is, for all $e \in \bicliqueEdges{A}{B}$, we have
$\left| \inSet{e} \cap \outSet{e} \right| \in \smallO(n)$.

\ifshort
This has many interesting consequences, removing large classes of graphs from consideration. To illustrate, let us quickly mention three forbidden structures.
First, for \(c\in (0,1]\), no edge \(e\) with \(\abs{\edgeIndex{v}{e}-\edgeIndex{w}{e}}\geq cn\) can exist. We call $e$ \emph{$c$-steep} because, intuitively, it makes a steep jump between the ordering at its two endpoints.
Second, in a bi-clique $\defaultBiclique$ without $c$-steep edges, for any $i, j \in [n]$ and $a \in A$, we know that for the label spread $S \coloneqq \left\{ b \in B \mid i \leq \edgeIndex{b}{a} \leq j \right\}$, we have \( \left| S \right| < j - i + 2cn \). 
Intuitively, the number of vertices for which vertex~$a$ is one of their $i$-th to $j$-th neighbors is relatively small. Of course, the same property holds when $A$ and $B$ are swapped.
Third, we can now assume that all but sublinearly many vertices $v \in \inSet{e}$ and all but sublinearly many $w \in \outSet{e}$ fulfill the condition that $\latePartner{v} \in \inSet{e}$ and $\earlyPartner{w} \in \outSet{e}$. Intuitively, the matchings $\pi^-$ and $\pi^+$ don't cross from $\inSet{e}$ to $\outSet{e}$ or vice-versa.

These three consequences show that we have gained strong structural insights. They allow us and future work to focus on a more concrete set of bi-cliques.
For a more detailed discussion of these consequences, please see the full version.
\fi
\iflong
The following subsection gives an intuition for how restrictive this condition actually is. 
\fi

\iflong
\subsection{Graph Properties leading to Partial Pivot-Edges}%
\label{sub:steep_edges}
Partial pivot-edges generalize pivotable \nodes{} and yield a number of new, interesting structural properties in residual instances. In this section, we examine structures that, if present, imply that there is at least one partial pivot-edge.

We first look at one particular case of partial pivot-edges, which we call \emph{steep
edges}.
Intuitively, an edge is steep, if it belongs to the later edges
for one of the incident vertices and to the earlier edges for the other one.

\begin{definition}[$c$-steep]\label{def:steep}
	Let $\defaultBiclique$ be a bi-clique. For a constant \(c\in
	(0,1]\), we call \(e\) a \emph{\(c\)-steep edge}, if
	\(\abs{\edgeIndex{v}{e}-\edgeIndex{w}{e}}\geq cn\).
\end{definition}

We now show that we can assume edges not to be steep even in artificial
instances, by proving that as long as there is a steep edge, we can apply
\Cref{cor:pivot-edge-spanner}.

\iflong
\begin{lemma}
\fi
\ifshort
\begin{lemma}[$\star$]
\fi
\label{lem:steep-edges-are-pivots}
	Let \(c\in(0,1]\). If there is a \(c\)-steep edge \(e\in
	\bicliqueEdges{A}{B}\), then \(\abs{\inSet{e}\cap\outSet{e}}\geq 2cn\).
\end{lemma}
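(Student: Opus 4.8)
The plan is to bound $\abs{\inSet{e}}$ and $\abs{\outSet{e}}$ from below using only the direct edges at the two endpoints of a steep edge, and then combine these two bounds via inclusion–exclusion, using the extremal matching to make the estimates strong enough.

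Write $e = \undirEdge{a}{b}$ with $a \in A$ and $b \in B$. Exchanging the two sides of the bi-clique if necessary — which preserves being extremally matched and does not change $\inSet{e}$ or $\outSet{e}$ — we may assume $\edgeIndex{a}{e} \geq \edgeIndex{b}{e}$, so that $i - j \geq cn$ where $i \coloneqq \edgeIndex{a}{e}$ and $j \coloneqq \edgeIndex{b}{e}$. First I would observe that among the $n$ neighbors of $a$ (which are exactly the \nodes{} of $B$) ordered ascendingly by edge label, the first $i$ reach $a$ by a single edge of label at most $\edgeLabele{e}$, hence lie in $\inSetTime{a}{\edgeLabele{e}} \subseteq \inSet{e}$; so $\abs{\inSet{e} \cap B} \geq i$. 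Symmetrically, among the $n$ neighbors of $b$, the last $n - j + 1$ are reached by $b$ via a single edge of label at least $\edgeLabele{e}$, hence lie in $\outSetTime{b}{\edgeLabele{e}} \subseteq \outSet{e}$; so $\abs{\outSet{e} \cap A} \geq n - j + 1$. Since $D$ is extremally matched, \Cref{lem:inset-outset-symm} upgrades these to $\abs{\inSet{e}} = 2\abs{\inSet{e} \cap B} \geq 2i$ and $\abs{\outSet{e}} = 2\abs{\outSet{e} \cap A} \geq 2(n - j + 1)$.

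Finally, by \Cref{lem:in-cup-out-eq-all-vertices} we have $\inSet{e} \cup \outSet{e} = A \sqcup B$, so $\abs{\inSet{e} \cup \outSet{e}} = 2n$, and inclusion–exclusion gives
\[
  \abs{\inSet{e} \cap \outSet{e}} = \abs{\inSet{e}} + \abs{\outSet{e}} - \abs{\inSet{e} \cup \outSet{e}} \;\geq\; 2i + 2(n - j + 1) - 2n \;=\; 2(i - j) + 2 \;\geq\; 2cn .
\]
The step I expect to be the crux is recognizing that the two one-sided counts $\abs{\inSet{e} \cap B} \geq i$ and $\abs{\outSet{e} \cap A} \geq n - j + 1$ are nearly useless on their own — they sit on opposite parts of the bipartition and by themselves only force the trivial intersection $\{a, b\}$ — and that the extremal-matching symmetry of \Cref{lem:inset-outset-symm} is exactly the tool that turns them into lower bounds on $\abs{\inSet{e}}$ and $\abs{\outSet{e}}$ large enough that, together with $\abs{\inSet{e} \cup \outSet{e}} = 2n$, a substantial overlap of size at least $2cn$ is forced.
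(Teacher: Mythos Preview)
Your proof is correct and follows essentially the same approach as the paper: pick the favorable orientation of the steepness, use direct edges at the two endpoints to lower-bound $\abs{\inSet{e}}$ and $\abs{\outSet{e}}$ via \Cref{lem:inset-outset-symm}, and then apply inclusion--exclusion together with \Cref{lem:in-cup-out-eq-all-vertices}. The paper packages the two one-sided counts as sets $A^{-}\subseteq\inSet{e}\cap A$ and $B^{+}\subseteq\outSet{e}\cap B$, but the arithmetic and the use of the extremal-matching symmetry are the same as yours.
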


\iflong
\begin{proof}
	Let \(e \coloneqq \set{a,b}\). W.l.o.g., assume
	\(\edgeIndex{a}{e}\ge\edgeIndex{b}{e}\). We define \(A^-\coloneqq \set{a'\in
	A}{\leqAt{a'}{a}{b}}\) and \(B^+\coloneqq \set{b'\in
	B}{\geqAt{b'}{b}{a}}\). Then, \(A^-\subseteq (\inSet{e}\cap A)\) and \(B^+
	\subseteq (\outSet{e}\cap B)\) holds. By \Cref{lem:inset-outset-symm} and
	\Cref{lem:in-cup-out-eq-all-vertices}, we have 
	\begin{align*}
		\abs{A^-}+\abs{B^+}&\leq \frac{1}{2}(\abs{\inSet{e}}+\abs{\outSet{e}})\\
						   &=\frac{1}{2} (\abs{\inSet{e} \cup \outSet{e}} + \abs{\inSet{e} \cap \outSet{e}})\\
						   &=n+\frac{1}{2}(\abs{\inSet{e}\cap \outSet{e}}).
	\end{align*}
	We now bound $\abs{A^{-}} + \abs{B^{+}}$ from below. We have
	\(\abs{A^-}+\abs{B^+}=\edgeIndex{a}{e}+n-(\edgeIndex{b}{e}-1)\). Using that
	\(e\) is \(c\)-steep, we thus have \(\abs{A^-}+\abs{B^+}\geq n+cn\).
	Combining both inequalities, we get \(\abs{\inSet{e}\cap\outSet{e}}\geq
	2cn\).
\end{proof}
\fi

While steep edges demonstrate how an intuitively common structure cannot be there if assume there are no partial pivot-edges. However, in randomly generated instances we expect edges not to be steep. 
Assume we add edges randomly to the graph in the order of their labels. Then, when adding the
edge \( \set{v,w} \), both \(v\) and \(w\) should already have a similar number
of edges. In fact, one can prove that with high probability, there is no \(c\)-steep in the graph. 

Yet, as every $c$-steep edge is a $c$-pivot-edge, we want to analyze the graphs without such edges in greater detail. We first look at more simple properties showing that any node only has a limited number of neighbors for which it is at least the $i$-th and at most the $j$-th neighbor.

\begin{lemma} [Label Spread]
	\label{lem:spread}
	Let $c \in (0, 1]$. In a bi-clique $\defaultBiclique$ without $c$-steep edges, for any $i, j \in [n]$ and $v \in V$ we know that for the label spread $S \coloneqq \left\{ v' \in N(v) \mid i \leq \edgeIndex{v'}{v} \leq j \right\}$ it holds that \[
		\left| S \right| < j - i + 2cn. 
	\]  
\end{lemma}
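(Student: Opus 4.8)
The plan is to argue by contradiction: suppose the label spread $S = \{v' \in N(v) \mid i \le \edgeIndex{v'}{v} \le j\}$ has size at least $j - i + 2cn$. The key idea is that the vertices in $S$ are precisely the $(i)$-th through $(i + |S| - 1)$-th neighbors of $v$ in the order $\leqOp{v}$, so in particular the last $2cn$ of them — those $v'$ with $\edgeIndex{v'}{v} \ge i + |S| - 2cn \ge j + 1 - 2cn$, hence also those with $\edgeIndex{v'}{v} \ge j - 2cn + 1$ — must have a small index at their own endpoint $v$, but... wait, let me reconsider: I want to find a vertex $v'$ whose index at $v$ is large while its index at $v'$-side is small, or vice versa, to produce a $c$-steep edge.

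Concretely, here is the approach. Among the $|S| \ge j - i + 2cn$ vertices of $S$, consider the one, call it $v'$, with the largest value of $\edgeIndex{v}{v'}$ (the index of the edge $\{v, v'\}$ as seen from $v'$). Since all these $|S|$ vertices are distinct neighbors across the cut and each carries a distinct edge incident to $v'$... no — they are incident to $v$, not to a common $v'$. Let me instead count from $v$'s side: all $|S|$ edges $\{v, v'\}$ for $v' \in S$ are incident to $v$ and have indices $\edgeIndex{v'}{v}$ lying in $\{i, \dots, j\}$, which only has $j - i + 1$ slots; so if $|S| \ge j - i + 2cn > j - i + 1$ we would already have a pigeonhole collision forcing two edges incident to $v$ to share an index, contradicting injectivity of the labeling (\Cref{lem:injective-labeling}) — but that only needs $|S| \ge j - i + 2$, which is far weaker than what we want, so that can't be the intended route either. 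So the bound $|S| < j - i + 2cn$ must instead come from the no-$c$-steep-edge hypothesis, not from injectivity alone.

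So the real plan: suppose $|S| \ge j - i + 2cn$. Order the members of $S$ by their index at $v$, so they occupy index-positions $i \le p_1 < p_2 < \cdots < p_{|S|} \le j$ at $v$. Now look at these same edges from the other endpoints' sides. I want to show that one of them is $c$-steep. The edges $\{v, v'\}$ with $v' \in S$ that have the $2cn$ largest indices at $v$ — i.e. $\edgeIndex{v'}{v} \ge j - 2cn + 1$ — I claim at least one of them has small index at $v'$, namely index at most $|S| - 2cn$... Hmm, I think the cleanest version is: the edges of $S$, viewed from the $v$-side, have $|S|$ distinct indices all $\le j$; viewed from their respective other endpoints they have $|S|$ distinct... no, distinct endpoints so this says nothing. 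Let me think once more — the right move is: since there is no $c$-steep edge, for every $v' \in S$ we have $|\edgeIndex{v'}{v} - \edgeIndex{v}{v'}| < cn$, hence $\edgeIndex{v}{v'} > \edgeIndex{v'}{v} - cn \ge i - cn$ and $\edgeIndex{v}{v'} < \edgeIndex{v'}{v} + cn \le j + cn$. So each of the $|S|$ edges, at its non-$v$ endpoint $v'$, has index strictly between $i - cn$ and $j + cn$. But these $|S|$ endpoints $v'$ are all distinct vertices, so that still imposes no collision. I must be missing the structural lever — probably one should instead fix the endpoint: the statement as written ranges $v \in V$ and the spread is neighbors of $v$; perhaps the argument uses that the edges of $S$, restricted to their $v$-side, already fill up a contiguous-ish block and one counts via $\earlyMatching$ or the extremal matching. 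Given the uncertainty, I will present the by-contradiction skeleton and flag the identification of the $c$-steep edge as the crux.

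\begin{proof}[Proof sketch]
We argue by contradiction. Suppose that for some $i, j \in [n]$ and some $v \in V$ the label spread $S = \left\{ v' \in N(v) \mid i \le \edgeIndex{v'}{v} \le j \right\}$ satisfies $\left| S \right| \ge j - i + 2cn$. The plan is to extract from $S$ an edge $e$ with $\abs{\edgeIndex{v}{e} - \edgeIndex{v'}{e}} \ge cn$, i.e.\ a $c$-steep edge, contradicting the hypothesis that $D$ has none.

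First I list the members of $S$ as $v'_1, \dots, v'_{|S|}$ ordered by increasing $\edgeIndex{v'_\ell}{v}$, so these indices are distinct values inside $\{i, \dots, j\}$. In particular the topmost $|S| - (j - i)$ of them satisfy $\edgeIndex{v'_\ell}{v} \ge i + \ell - 1$ with $\ell$ large, forcing $\edgeIndex{v'_\ell}{v}$ close to $j$ while, I claim, the companion index $\edgeIndex{v}{v'_\ell}$ seen from the other endpoint must be comparatively small. The reason is that the edges $\{v, v'_1\}, \dots, \{v, v'_{|S|}\}$ are pairwise distinct edges incident to $v$ whose positions at $v$ lie in a window of width $j - i + 1 < |S|$, so they cannot all be matched by equally many distinct positions of the same width at the opposite ends without some position being pushed down by at least $|S| - (j - i) \ge 2cn$ slots; picking the edge $e$ realising this gap gives $\abs{\edgeIndex{v}{e} - \edgeIndex{v'}{e}} \ge 2cn - 1 \ge cn$ for $n$ large enough (and a direct check handles small $n$).

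Hence $e$ is $c$-steep, contradicting the assumption. Therefore $\left| S \right| < j - i + 2cn$, as claimed. The same argument applies verbatim with the roles of $A$ and $B$ exchanged, since $c$-steepness is symmetric in the two endpoints of an edge.
\end{proof}

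The step I expect to be the main obstacle is the middle one: making precise the ``slots get pushed down by at least $|S| - (j-i)$'' counting so that it actually yields a single edge whose two indices differ by $\ge cn$. The subtlety is that the opposite endpoints $v'_\ell$ are all different vertices, so one cannot directly pigeonhole on a shared vertex; the argument must instead compare the window $\{i,\dots,j\}$ of $v$-side indices against the full index range $[n]$ available at the other side and locate an edge forced out of any width-$(j-i+1)$ window — essentially an averaging or extremal argument over $S$ — and I would need to verify carefully that $c$-steepness, and not merely a weak injectivity bound, is what is being contradicted.
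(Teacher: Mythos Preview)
Your proof sketch has a genuine gap, and the frustrating part is that you actually found the correct argument in your preamble and then discarded it because of a notational slip.

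Recall that $\edgeIndex{v}{v'}$ denotes the index of the edge $\{v,v'\}$ in the ordering \emph{at $v$}, not at $v'$. With this reading, the computation you wrote down --- for every $v'\in S$, the no-$c$-steep hypothesis gives $\edgeIndex{v}{v'}\in(i-cn,\,j+cn)$ --- already finishes the proof: all $|S|$ edges $\{v,v'\}$ share the common endpoint $v$, and $\orderingAt{v}$ is an injection, so the values $\edgeIndex{v}{v'}$ for $v'\in S$ are $|S|$ distinct integers inside the open interval $(i-cn,\,j+cn)$. That interval contains fewer than $j-i+2cn$ integers, whence $|S|<j-i+2cn$. This is exactly the paper's proof.

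Your dismissal (``these $|S|$ endpoints $v'$ are all distinct vertices, so that still imposes no collision'') rests on misreading $\edgeIndex{v}{v'}$ as an index at $v'$; once you see it is an index at the \emph{shared} vertex $v$, pigeonhole applies immediately.

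The averaging/extremal argument you then attempt in the formal sketch is unnecessary and, as you yourself note, does not go through as written: the opposite endpoints are all different, so there is no common ordering to pigeonhole against on that side, and the ``slots get pushed down'' heuristic has no clear formalisation here.
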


\iflong
\begin{proof}
	Let $S \coloneqq \left\{ v' \in N(v) \mid i \leq \edgeIndex{v'}{v} \leq j \right\}$ and choose a $v' \in S$. As the edge $\undirEdge{v}{v'}$ is not $c$-steep, we have that $\edgeIndex{v}{v'} \in (i - cn, j + cn)$. Since $\orderingAt{v}$ is an injection, we have that $|S| \leq \left| (i - cn, j + cn) \right| \leq j+cn - (i - cn) = j - i + 2cn$.
\end{proof}
\fi

The next lemma tells us that if a vertex $v$ can reach an edge $e = \undirEdge{a}{b}$, $v$ can also reach all edges $e'$ incident to either $a$ or $b$ with $\edgeLabele{e'} > \edgeLabele{e}$.

\iflong
\begin{lemma}
\fi
\ifshort
\begin{lemma}[$\star$]
\fi
\label{lem:activity-width}
	Given a temporal bi-clique $\defaultBiclique$. Fix any $b \in B$ and order all $a_1, \ldots, a_n \in A$ according to $\orderingAt{b}$ \ie{$\leqAt{a_i}{a_{i+1}}{b}$ for all $i \in [n-1]$}. Then \[
		\inSet{\undirEdge{a_1}{b}} \subseteq \cdots \subseteq \inSet{\undirEdge{a_n}{b}}
	\] and \[
		\outSet{\undirEdge{a_1}{b}} \supseteq \cdots \supseteq \outSet{\undirEdge{a_n}{b}}.
	\] 
	The same holds when switching roles for $A$ and $B$.	
\end{lemma}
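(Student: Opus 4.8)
The plan is to prove the containments by comparing the defining conditions of $\inSetOp$ and $\outSetOp$ for the edges $\undirEdge{a_i}{b}$ as $i$ increases. Fix $b \in B$ and order $a_1, \dots, a_n$ so that $\leqAt{a_i}{a_{i+1}}{b}$, i.e.\ $\edgeLabelv{a_i}{b} \le \edgeLabelv{a_{i+1}}{b}$. I will show $\inSet{\undirEdge{a_i}{b}} \subseteq \inSet{\undirEdge{a_{i+1}}{b}}$; iterating gives the full chain, and the statement for $\outSetOp$ is entirely symmetric (reverse all time comparisons), as is the final sentence swapping $A$ and $B$.

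First I would unpack the shorthand: $\inSet{\undirEdge{a_i}{b}} = \inSetTime{a_i}{\edgeLabelv{a_i}{b}} \cup \inSetTime{b}{\edgeLabelv{a_i}{b}}$. The key monotonicity fact is that $\inSetTime{v}{t}$ is nondecreasing in $t$ for any fixed vertex $v$: a temporal path into $v$ with all labels at most $t$ is also one with all labels at most $t' \ge t$. Since $\edgeLabelv{a_i}{b} \le \edgeLabelv{a_{i+1}}{b}$, this immediately gives $\inSetTime{b}{\edgeLabelv{a_i}{b}} \subseteq \inSetTime{b}{\edgeLabelv{a_{i+1}}{b}}$. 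The only remaining piece is to handle $\inSetTime{a_i}{\edgeLabelv{a_i}{b}}$: I claim it is contained in $\inSetTime{b}{\edgeLabelv{a_{i+1}}{b}}$. Indeed, if $v \in \inSetTime{a_i}{\edgeLabelv{a_i}{b}}$, there is a temporal path from $v$ to $a_i$ with every label at most $\edgeLabelv{a_i}{b}$; append the edge $\undirEdge{a_i}{b}$, whose label is $\edgeLabelv{a_i}{b}$ — this keeps the path temporal (the last hop's label is at least the previous one), and every label is at most $\edgeLabelv{a_i}{b} \le \edgeLabelv{a_{i+1}}{b}$, so $v \in \inSetTime{b}{\edgeLabelv{a_{i+1}}{b}}$. (One should also note $a_i$ itself lies in $\inSetTime{b}{\edgeLabelv{a_{i+1}}{b}}$ via the single edge $\undirEdge{a_i}{b}$.) Combining the three inclusions yields $\inSet{\undirEdge{a_i}{b}} \subseteq \inSet{\undirEdge{a_{i+1}}{b}}$.

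For $\outSetOp$, the symmetric argument: $\outSetTime{v}{t}$ is nonincreasing in $t$ (a path leaving at or after $t'$ also leaves at or after $t \le t'$), so $\outSetTime{b}{\edgeLabelv{a_{i+1}}{b}} \subseteq \outSetTime{b}{\edgeLabelv{a_i}{b}}$, and any $v \in \outSetTime{a_{i+1}}{\edgeLabelv{a_{i+1}}{b}}$ is reached from $b$ by prepending the edge $\undirEdge{b}{a_{i+1}}$ (label $\edgeLabelv{a_{i+1}}{b} \ge \edgeLabelv{a_i}{b}$), giving $v \in \outSetTime{b}{\edgeLabelv{a_i}{b}}$; hence $\outSet{\undirEdge{a_{i+1}}{b}} \subseteq \outSet{\undirEdge{a_i}{b}}$. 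Since the roles of $A$ and $B$ in the definition of a temporal bi-clique are symmetric, fixing some $a \in A$ and ordering $B$ by $\orderingAt{a}$ gives the analogous chains, finishing the proof.

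I do not anticipate a genuine obstacle here; the only point requiring a little care is the bookkeeping in the non-strict temporal path model — specifically checking that appending (resp.\ prepending) the edge $\undirEdge{a_i}{b}$ to a path ending (resp.\ starting) at $a_i$ does not violate temporality, which holds precisely because the appended edge's label equals $\edgeLabelv{a_i}{b}$ and the incoming path uses labels bounded by that value. Everything else is the two monotonicities of $\inSetTime{\cdot}{\cdot}$ and $\outSetTime{\cdot}{\cdot}$ in the time parameter, which are immediate from the definitions.
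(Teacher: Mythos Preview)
Your proposal is correct and follows essentially the same route as the paper's proof: take a vertex $v \in \inSet{\undirEdge{a_i}{b}}$, route the witnessing path through $b$ by appending $\undirEdge{a_i}{b}$ if it ended at $a_i$, and observe that this same path certifies membership in $\inSet{\undirEdge{a_{i+1}}{b}}$ because $\edgeLabelv{a_i}{b} \le \edgeLabelv{a_{i+1}}{b}$. Your write-up is more explicit about the two cases and the monotonicity of $\inSetTime{v}{t}$ in $t$, but the underlying argument is the same.
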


\iflong
\begin{proof}
	Consider $v \in \inSet{\undirEdge{a_i}{b}}$ and let $j > i$. 
Then there exists a temporal path $P$ starting at $v$ and ending in $\undirEdge{a_i}{b}$. Assume w.l.o.g. that the path ends in $b$, else remove the latest edge from the path. By adding the edge $\undirEdge{b}{a_j}$ to the path, the path stays temporal since $\leqAt{a_i}{a_j}{b}$ and $v \in \inSet{\undirEdge{a_j}{b}}$.
	The proof for $\outSetOp$ as well as for switched roles of $A$ and $B$ follows analogously.
\end{proof}
\fi
We can rephrase the last lemma to characterize which vertices remain in $\inSet{\undirEdge{a}{b}} \cap \outSet{\undirEdge{a}{b}}$ when replacing $\undirEdge{a}{b}$ with $\undirEdge{a'}{b}$. If $\undirEdge{a'}{b}$ has a similar label to $\undirEdge{a}{b}$, the sets $\inSetOp$ and $\outSetOp$ will remain roughly the same.

\begin{corollary}
[Activity Width]\label{cor:activity-width}
	Given a temporal bi-clique $\defaultBiclique$. Fix any $b \in B$ and order all $a_1, \ldots, a_n \in A$ according to $\orderingAt{b}$ \ie{$\leqAt{a_i}{a_{i+1}}{b}$ for all $i \in [n-1]$}.

For every $i \in [n]$ we define $M^{A}_{b,i} \coloneqq \inSet{\undirEdge{a_i}{b}} \cap \outSet{\undirEdge{a_i}{b}} \cap A$. Then for every $a \in A$ there exists an interval $[k, \ell]$ such that $a \in M^{A}_{b, i}$ if and only if $i \in [k, \ell]$. The same holds for $M^{B}_{b, i} \coloneqq \inSet{\undirEdge{a_i}{b}} \cap \outSet{\undirEdge{a_i}{b}} \cap B$.
\end{corollary}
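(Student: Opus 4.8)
The plan is to deduce Corollary~\ref{cor:activity-width} directly from the two monotonicity chains in \Cref{lem:activity-width}, treating the $A$-part and the $B$-part identically. Fix $b \in B$ and the ordering $a_1, \dots, a_n$ with $\leqAt{a_i}{a_{i+1}}{b}$. By \Cref{lem:activity-width} the sets $\inSet{\undirEdge{a_i}{b}}$ are nondecreasing in $i$ and the sets $\outSet{\undirEdge{a_i}{b}}$ are nonincreasing in $i$; intersecting with the fixed set $A$ preserves both inclusions, so $I_i \coloneqq \inSet{\undirEdge{a_i}{b}} \cap A$ is nondecreasing and $O_i \coloneqq \outSet{\undirEdge{a_i}{b}} \cap A$ is nonincreasing. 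Note $M^{A}_{b,i} = I_i \cap O_i$.

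Now fix $a \in A$. First I would observe that $\{ i \in [n] \mid a \in I_i \}$ is an up-set of $[n]$: if $a \in I_i$ and $j \ge i$, then $I_i \subseteq I_j$ gives $a \in I_j$. Hence this set is either empty or an interval $[k, n]$ for some $k$. Symmetrically, $\{ i \in [n] \mid a \in O_i \}$ is a down-set, so it is either empty or an interval $[1, \ell]$ for some $\ell$. Then $\{ i \mid a \in M^{A}_{b,i} \} = \{ i \mid a \in I_i \} \cap \{ i \mid a \in O_i \}$ is the intersection of an up-set and a down-set, which is always an interval $[k, \ell]$ (possibly empty, when $k > \ell$ or one of the two sets is empty — in which case the interval is vacuous). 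This gives exactly the claimed characterization: $a \in M^{A}_{b,i}$ if and only if $i \in [k, \ell]$.

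For the $B$-part the argument is verbatim the same, replacing $A$ by $B$ throughout: intersecting the same monotone chains from \Cref{lem:activity-width} with the fixed set $B$ again yields a nondecreasing family $I_i \cap B$ and a nonincreasing family $O_i \cap B$, and the intersection-of-a-ray argument applies unchanged. One small point worth making explicit is the degenerate case: if $a$ lies in none of the $M^{A}_{b,i}$, we take the interval to be empty (e.g. $[k,\ell]$ with $k > \ell$), so the "if and only if" is trivially satisfied; the statement should be read with this convention, which I would mention in one sentence.

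I do not expect any real obstacle here — the entire content is already packaged in \Cref{lem:activity-width}, and the corollary is just the elementary fact that an increasing family of sets contributes an up-set of indices, a decreasing family a down-set, and the intersection of an up-set and a down-set in a totally ordered index set is an interval. The only thing to be careful about is keeping the two monotonicity directions straight (which set grows and which shrinks) and handling the empty-interval edge case, neither of which is difficult.
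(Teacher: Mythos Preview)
Your proposal is correct and is exactly the intended argument: the paper presents this corollary as a direct rephrasing of \Cref{lem:activity-width} without a separate proof, and your up-set/down-set reasoning is precisely how the monotone chains of $\inSetOp$ and $\outSetOp$ translate into the interval characterization of $M^{A}_{b,i}$ (and $M^{B}_{b,i}$).
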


Further assuming that no $c$-pivot-edge exists in $G$, for all $b \in B, i \in [n]$ we have $\left| M^A_{b, i} \right| < cn$, thus on average for each $a \in A$, the interval $[k, \ell]$ such that $a \in M^A_{b, i} \iff k \leq i \leq \ell$ has size less than $cn$. 

\fi


\section{Bi-Cliques With Reverted Edges}\label{sec:non-pivotable}

We introduced partial pivot-edges and observed some
interesting structural properties in bi-cliques which contain no suitable
$c$-pivot-edge. The first hope would be that every bi-cliques admits such a
$c$-pivot-edge. Unfortunately, this is not the case and
we provide a graph class, the \emph{shifted matching
graph}, in which for every edge, the pivot-set is as small as possible
\ie{$\inSet{\undirEdge{a}{b}} \cap \outSet{\undirEdge{a}{b}} = \{a, b\}$}
and none of our reduction rules from \cref{sec:structure-bi-cliques} apply.

Still, we present a novel technique, called \emph{$e$-reverted edges}, to construct small spanners for shifted matching graphs which also provides solutions for a significantly larger class of bi-cliques.
We relate the technique to partial pivot-edges from \cref{sec:pivotable} and argue why $e$-reverted edges solve a class distinct from partial pivot-edges, beyond the concrete example of shifted matching graphs. 
Unfortunately, also this is not applicable to all graph classes, which we later demonstrate with a class that we call \emph{product graphs} in \Cref{sec:swarms}.

We start by giving the definition of the shifted matching graph.

\begin{definition}[Shifted Matching Graph]
    Let $n \in \N^+$, $A \coloneqq \{a_0, \ldots, a_{n-1}\}$, and $B \coloneqq
    \{b_0, \ldots, b_{n-1}\}$. We define the \emph{shifted matching graph} on
    $n$ vertices per side as $\ringShift{n} \coloneqq
    \biclique{A}{B}{\edgeLabel}$, where for every $i,j \in \setExc{n}$ we label
    the edge
    \(
        \edgeLabelv{a_i}{b_j} \coloneqq j - i \xmod n.
    \)
\end{definition}
\iflong
\begin{figure}[t]
  \captionsetup[subfigure]{justification=centering}
  \centering
  \begin{subfigure}{0.49\textwidth}
      \begin{tabular}{lrrrr} 
          \toprule
          \multirow{2}{*}[-2pt]{Vertex}&\multicolumn{4}{c}{Edge Label}\\ \cmidrule{2-5}
          & 0 & 1 & 2 & 3\\
          \midrule 
          $a_0$ & $b_0$ & $b_1$ & $b_2$ & $b_3$ \\
          $a_1$ & $b_1$ & $b_2$ & $b_3$ & $b_0$ \\
          $a_2$ & $b_2$ & $b_3$ & $b_0$ & $b_1$ \\
          $a_3$ & $b_3$ & $b_0$ & $b_1$ & $b_2$ \\
          \bottomrule
      \end{tabular}
      \centering
      \caption{\label{tab:ringshift-a}View from $A$}
  \end{subfigure}
  \begin{subfigure}{0.49\textwidth}
      \begin{tabular}{lrrrr} 
          \toprule
          \multirow{2}{*}[-2pt]{Vertex}&\multicolumn{4}{c}{Edge Label}\\ \cmidrule{2-5}
          & 0 & 1 & 2 & 3\\
          \midrule 
          $b_0$ & $a_0$ & $a_3$ & $a_2$ & $a_1$ \\
          $b_1$ & $a_1$ & $a_0$ & $a_3$ & $a_2$ \\
          $b_2$ & $a_2$ & $a_1$ & $a_0$ & $a_3$ \\
          $b_3$ & $a_3$ & $a_2$ & $a_1$ & $a_0$ \\
          \bottomrule
      \end{tabular}
      \centering
      \caption{\label{tab:ringshift-b}View from $B$}
  \end{subfigure}
  \caption{\label{tab:ringshift-example} $\ringShift{4}$ with vertices $v \in A \sqcup B$ and all their neighbors, ordered by edge label. }
\end{figure}

\Cref{tab:ringshift-example} shows the shifted matching graph for $n = 4$
 vertices on both sides.
\fi
In shifted matching graphs, the earliest and latest neighbors form a
matching, thus we cannot apply the reduction from \cref{thm:rim-matching}.
Note that for all $i \in \setExc{n}$, we have $\pi^-(a_i) = b_i$ and $\pi^+(a_i) =
b_{i-1\xmod n}$. 
Additionally, we observe that each edge at time $t$ has the form $\undirEdge{a_i}{b_{i+t \xmod n}}$. Thus, the edges with label $t$ form a
perfect matching and \(a_i\) sees the $b_j$ in ascending order of indices,
circularly shifted by \(i\) places compared to $a_0$.  

\begin{observation}\label{lem:ringshift-matching}
    Let $t \in \{0, \dots n-1\}$ be a time-point. Then in $\ringShift{n}$, the
    edges with label $t$ form a perfect matching. 
\end{observation}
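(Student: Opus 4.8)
The statement is almost immediate from the definition of the labeling, so the "proof" is really just spelling out the bijection. The plan is to fix a time-point $t \in \{0, \dots, n-1\}$ and exhibit explicitly which edges carry label $t$, then verify that this set saturates every vertex exactly once. Concretely, for indices $i, j \in \setExc{n}$ the edge $\undirEdge{a_i}{b_j}$ has label $t$ if and only if $j - i \equiv t \pmod n$, i.e.\ if and only if $j = i + t \bmod n$. Thus the set of edges with label $t$ is exactly $M_t \coloneqq \{\undirEdge{a_i}{b_{i + t \bmod n}} \mid i \in \setExc{n}\}$.

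Next I would argue that $M_t$ is a perfect matching between $A$ and $B$. Every $a_i \in A$ is incident to exactly one edge of $M_t$, namely $\undirEdge{a_i}{b_{i+t \bmod n}}$, since the index $j$ with $j \equiv i + t \pmod n$ is uniquely determined in $\setExc{n}$. Conversely, every $b_j \in B$ is incident to exactly one edge of $M_t$: we need $i \in \setExc{n}$ with $i \equiv j - t \pmod n$, and again this $i$ is unique. Hence the map $a_i \mapsto b_{i + t \bmod n}$ is a bijection from $A$ to $B$ whose graph is precisely $M_t$, so $M_t$ is a perfect matching, as claimed.

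There is essentially no obstacle here — the only thing to be a little careful about is that addition is taken modulo $n$ so that indices stay in $\setExc{n} = \{0, \dots, n-1\}$, and that $n \ge 2$ (which holds since $n \in \N^+$ and the graph has no isolated vertices, so in fact the labels $0, \dots, n-1$ are all realized and the bi-clique is simple). Since residues modulo $n$ are in bijection with $\setExc{n}$, the shift $i \mapsto i + t$ modulo $n$ is a permutation, which is exactly what makes $M_t$ a perfect matching.
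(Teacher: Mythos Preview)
Your proposal is correct and matches the paper's approach: the paper states this as an observation and justifies it in the preceding sentence by noting that every edge at time $t$ has the form $\undirEdge{a_i}{b_{i+t \bmod n}}$, which is exactly the bijection you spell out. Your remark about needing $n \ge 2$ is unnecessary (for $n=1$ the single edge is trivially a perfect matching), but otherwise there is nothing to add.
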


We claim that this graph has no partial pivot-edges that are suitable for
reduction in the sense of \Cref{cor:pivot-edge-spanner}. To prove that this
method fails on the shifted matching graph, we first characterize the set of
vertices which can be reached from a vertex $a_i$ before and after a given time.
Finally, we prove that the intersection of $\inSetOp$ and $\outSetOp$ for
all edges at $a_i$ is trivial.

\iflong
\begin{lemma}
\fi
\ifshort
\begin{lemma}[$\star$]
\fi
\label{lem:ringshift-in-out}
    Let $i, t \in \setExc{n}$ and $a_i$ and $b_{i+t \xmod n}$ be vertices in
    $\ringShift{n}$. Then using edges with label at most $t$, precisely the
    vertices $a_i, \dots, a_{i+t\xmod n}$ and $b_i, \dots, b_{i+t \xmod n}$ can
    reach $a_i$ and $b_{i+t \xmod n}$.
    Using only edges with label at least $t$,
    $a_i$ and $b_{i+t \xmod n}$ can precisely reach the vertices $b_{i+t \xmod n},
    \dots, b_{i-1 \xmod n}$ and $a_{i+t+1 \xmod n}, \ldots, a_i$. 
\end{lemma}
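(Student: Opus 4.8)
The plan is to read the statement as an explicit description of $\inSet{e}$ and $\outSet{e}$ for the edge $e \coloneqq \undirEdge{a_i}{b_{i+t \xmod n}}$, whose label is $\edgeLabelv{a_i}{b_{i+t}} = (i+t) - i \equiv t \pmod n$. Concretely I want to prove
\[
  \inSet{e} = \{a_i, \dots, a_{i+t}\} \cup \{b_i, \dots, b_{i+t}\}
  \qquad\text{and}\qquad
  \outSet{e} = \{a_{i+t+1}, \dots, a_i\} \cup \{b_{i+t}, \dots, b_{i-1}\},
\]
all indices taken mod $n$; these are exactly the sets named in the lemma, and the equalities $\inSetTime{a_i}{t} = \inSetTime{b_{i+t}}{t}$ and $\outSetTime{a_i}{t} = \outSetTime{b_{i+t}}{t}$ fall out of the proof, so ``reach $a_i$ and $b_{i+t}$'' is unambiguous. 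The central tool is the matching view of \Cref{lem:ringshift-matching}: the edges of label $s$ form the perfect matching $M_s = \{\undirEdge{a_p}{b_{p+s}} : p \in \setExc{n}\}$. Hence a temporal path starting at some $a_p$ and using labels $s_1 \le s_2 \le \dots \le s_m$ is, after $\ell$ steps, at the vertex of $A$ or $B$ whose index equals $p$ shifted by the partial \emph{alternating} sum $s_1 - s_2 + s_3 - \cdots \pm s_\ell$ modulo $n$ (the sign flipping with the side); the whole lemma is a matter of bounding where these partial alternating sums can land.

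The easy inclusions are immediate: every vertex in the claimed $\inSet{e}$ has a length-$1$ temporal path into an endpoint of $e$ with label at most $t$, since for $j \in \{i, \dots, i+t\}$ the edge $\undirEdge{b_j}{a_i}$ has label $(j-i)\xmod n \le t$ and for $m \in \{i, \dots, i+t\}$ the edge $\undirEdge{a_m}{b_{i+t}}$ has label $(i+t-m)\xmod n \le t$. Symmetrically, for $j \in \{i+t, \dots, i-1\}$ the edge $\undirEdge{a_i}{b_j}$ has label at least $t$ and for $m \in \{i+t+1, \dots, i\}$ the edge $\undirEdge{b_{i+t}}{a_m}$ has label at least $t$, so the claimed $\outSet{e}$ is contained in $\outSet{e}$.

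The substance is the reverse inclusions. Given a temporal path witnessing $v \in \inSetTime{a_i}{t}$, read it backwards from $a_i$: its labels appear in non-increasing order $s_m \ge \dots \ge s_1$, all at most $t$, so by the matching view the $r$-th vertex along it has index $i + (s_m - s_{m-1} + \cdots \pm s_{m-r+1})$. A one-line pairing argument --- grouping as $s_m - (s_{m-1}-s_{m-2}) - \cdots$ for the upper bound and as $(s_m - s_{m-1}) + (s_{m-2}-s_{m-3}) + \cdots$ for the lower bound --- shows any partial alternating sum of a non-increasing non-negative sequence bounded by $s_m$ lies in $[0, s_m] \subseteq [0,t]$; as $t \le n-1$ there is no wrap-around, so every vertex of the path has index in $\{i, \dots, i+t\}$, hence lies in the claimed $\inSet{e}$. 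Paths into $b_{i+t}$ are handled identically, using that $M_t$ matches $b_{i+t}$ with $a_i$. For $\outSet{e}$ the same device applies to forward paths leaving $a_i$ or $b_{i+t}$ with labels $t \le s_1 \le \dots \le s_m$: exploiting $s_1 \ge t$, the pairing argument shows that any vertex of such a path lying in $B$ has index offset from $i$ in $[t, n-1]$ --- forcing it into $\{b_{i+t}, \dots, b_{i-1}\}$ --- and any vertex lying in $A$ has offset in $[\,t-(n-1),\,0\,]$, i.e.\ in $\{0\} \cup \{t+1, \dots, n-1\}$ modulo $n$ --- forcing it into $\{a_{i+t+1}, \dots, a_i\}$. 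Together with the easy inclusions, this yields the two equalities. (As an alternative for $\outSet{e}$ one can instead reduce to the $\inSet{e}$ statement by reversing time, observing that reversing all labels turns $\ringShift{n}$ into a shifted matching graph again up to a cyclic relabeling; this avoids a second pairing argument at the price of an index shift.)

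I expect the main obstacle to be precisely this mod-$n$ bookkeeping in the reverse inclusions: one must track the parity of the path length (equivalently, on which side the current vertex lies), the resulting sign pattern of the alternating sum, and the fact that for $\outSet{e}$ the admissible residue window on the $A$-side is the non-contiguous set $\{0\} \cup \{t+1, \dots, n-1\}$, the residue $0$ being the index of $a_i$ itself. Everything else --- the value of $\edgeLabele{e}$, the easy inclusions, and the non-decreasing-label property of temporal paths --- is routine. I would close by noting that the two sets above cover all $2n$ vertices (consistent with \Cref{lem:in-cup-out-eq-all-vertices}) and intersect exactly in $\{a_i, b_{i+t}\}$, which is the trivial pivot-set later used to show that $\ringShift{n}$ carries no partial pivot-edge suitable for the reduction of \Cref{cor:pivot-edge-spanner}.
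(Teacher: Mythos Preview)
Your argument is correct and takes a genuinely different route from the paper's. The paper proceeds by induction on $t$: the base case $t=0$ is immediate because label-$0$ edges form a matching (\Cref{lem:ringshift-matching}), and for the inductive step one applies the hypothesis once to $a_i$ (yielding $\{a_i,\dots,a_{i+t},b_i,\dots,b_{i+t}\}$) and once to $b_{i+t+1}$ via its label-$t$ partner $a_{i+1}$ (yielding $\{a_{i+1},\dots,a_{i+t+1},b_{i+1},\dots,b_{i+t+1}\}$); the new edge $\undirEdge{a_i}{b_{i+t+1}}$ of label $t+1$ then unifies the two sets, and the matching property of label-$(t{+}1)$ edges rules out any further vertices. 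You instead track the index of every vertex along an arbitrary temporal path as a partial alternating sum of its labels and bound that sum directly by a pairing trick. The paper's induction is shorter and avoids sign bookkeeping entirely, but your approach yields more: it locates every intermediate vertex of any temporal path through $e$, not merely the endpoints, and it makes transparent why the answer is a contiguous index interval on each side. Your anticipated difficulty with the $A$-side residue window $\{0\}\cup\{t+1,\dots,n-1\}$ for $\outSet{e}$ is real but purely notational, and the time-reversal remark you close with is exactly the clean way to sidestep a second pairing argument.
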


\iflong
\begin{proof}
    We prove the first part of the statement via induction over $t$. The second part follows analogously by induction over $n-t-1$. For $t = 0$, clearly $a_i$
    as well as $b_i$ can only be reached by themselves and each other. By
    \cref{lem:ringshift-matching}, we can never compose two edges with label
    $0$.

    Now assume the statement holds for all vertices for some $t < n-1$. Thus,
    precisely $a_i, \dots, a_{i+t\xmod n}$ and $b_i, \dots, b_{i+t \xmod n}$
    can reach $a_i$ and precisely $a_{i+1\xmod n}, \dots, a_{i+t+1\xmod n}$ and
    $b_{i+1\xmod n}, \dots, b_{i+t+1 \xmod n}$ can reach $b_{i+t+1\xmod n}$.
    Now we look at the situation, when we allow edges with label $t+1$ as well.
    Since $a_i$ and $b_{i+t+1 \xmod n}$ are connected with an edge with label
    $t+1$, we can take the union of both sets. No additional vertices can now reach $a_i$ and $b_{i+t+1 \xmod n}$, since after taking an edge with label $t+1$ from a different vertex, we are stuck due to 
    \Cref{lem:ringshift-matching}.
\end{proof}
\fi

Using \Cref{lem:ringshift-in-out}, we can see that using
partial pivot-edges in $\ringShift{n}$ leads to no reduction.

\iflong
\begin{lemma}
\fi
\ifshort
\begin{lemma}[$\star$]
\fi
\label{lem:ringshift-no-pivot-edge} For vertices $a_i,b_j$ in $\ringShift{n}$, we have $\inSet{\undirEdge{a_i}{b_j}} \cap
    \outSet{\undirEdge{a_i}{b_j}} = \{a_i, b_j\}$.
\end{lemma}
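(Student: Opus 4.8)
The plan is to combine the two cases of \Cref{lem:ringshift-in-out}. Fix vertices $a_i$ and $b_j$ in $\ringShift{n}$, and let $t \coloneqq \edgeLabelv{a_i}{b_j} = j - i \bmod n$, so that $j = i + t \bmod n$ and the edge $e = \undirEdge{a_i}{b_j}$ is exactly the edge at time $t$ between $a_i$ and $b_j$. Recall that by definition $\inSet{e} = \inSetTime{a_i}{t} \cup \inSetTime{b_j}{t}$ and $\outSet{e} = \outSetTime{a_i}{t} \cup \outSetTime{b_j}{t}$. Since $a_i$ and $b_j$ are joined by the edge with label $t$, \Cref{lem:ringshift-in-out} applies directly: it says that $\inSet{e}$ consists precisely of the vertices $a_i, \dots, a_{i+t \bmod n}$ together with $b_i, \dots, b_{i+t \bmod n}$ (the ``prefix'' of length $t+1$ starting at index $i$ on each side), while $\outSet{e}$ consists precisely of $b_{i+t \bmod n}, \dots, b_{i-1 \bmod n}$ together with $a_{i+t+1 \bmod n}, \dots, a_i$ (the ``complementary'' block of length $n-t$ on the $B$-side and length $n-t$ on the $A$-side, but shifted).

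The main step is then a short index-bookkeeping argument showing these two sets meet only in $\{a_i, b_j\}$. On the $A$-side, $\inSet{e} \cap A = \{a_{i}, a_{i+1}, \dots, a_{i+t}\}$ (indices mod $n$) and $\outSet{e} \cap A = \{a_{i+t+1}, a_{i+t+2}, \dots, a_{i+n}\} = \{a_{i+t+1}, \dots, a_{i-1}, a_i\}$ (indices mod $n$). These two arcs of the cyclic index set $\setExc{n}$ partition all of $\{a_0, \dots, a_{n-1}\}$ except they overlap exactly in the single index $i$ (the first is the arc from $i$ to $i+t$, the second is the arc from $i+t+1$ wrapping around back to $i$), so $\inSet{e} \cap \outSet{e} \cap A = \{a_i\}$. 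Symmetrically, on the $B$-side, $\inSet{e} \cap B = \{b_i, \dots, b_{i+t}\}$ and $\outSet{e} \cap B = \{b_{i+t}, b_{i+t+1}, \dots, b_{i-1}\}$, and these two arcs overlap exactly in the single index $i+t \bmod n = j$, giving $\inSet{e} \cap \outSet{e} \cap B = \{b_j\}$. Taking the union over the two sides yields $\inSet{e} \cap \outSet{e} = \{a_i, b_j\}$, as claimed. The reverse inclusion $\{a_i, b_j\} \subseteq \inSet{e} \cap \outSet{e}$ is immediate (as already noted in \Cref{sec:prelims}, both endpoints of any edge lie in $\inSet{e} \cap \outSet{e}$), so we are done.

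I do not expect a genuine obstacle here: once \Cref{lem:ringshift-in-out} is in hand, the only thing to be careful about is the modular arithmetic — making sure that the ``prefix'' arcs and ``suffix'' arcs are written with the correct endpoints and that the two boundary indices ($i$ on the $A$-side and $i+t \bmod n$ on the $B$-side) are exactly the shared elements, with no accidental extra overlap when $t = 0$ or $t = n-1$. The degenerate cases $t=0$ (where $\inSet{e}\cap A = \{a_i\}$ and $\outSet{e}\cap A$ is everything) and $t = n-1$ (the mirror situation) should be checked to confirm the intersection is still exactly $\{a_i, b_j\}$; in both, one of the two arcs on a given side degenerates to a single vertex, and that single vertex is precisely the shared endpoint, so the conclusion still holds.
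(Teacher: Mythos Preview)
Your proposal is correct and follows essentially the same approach as the paper: both invoke \Cref{lem:ringshift-in-out} to obtain explicit descriptions of $\inSet{e}$ and $\outSet{e}$ as cyclic index intervals, and then observe that on each side the two arcs overlap in exactly one index ($i$ on the $A$-side, $i+t \bmod n$ on the $B$-side). The paper's version is terser and omits the explicit degenerate-case check, but the argument is the same.
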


\iflong
\begin{proof}
    Let $t \coloneqq \edgeLabelv{a_i}{b_j} = j - i \xmod n$ and therefore $b_j =
    b_{i+t\xmod n}$. By \Cref{lem:ringshift-in-out},
    \[\inSet{\undirEdge{a_i}{b_j}} = \{a_i, \ldots, a_{i+t\xmod n}, b_{i}, \ldots, b_{i+t\xmod n}\}\] and
    \[\outSet{\undirEdge{a_i}{b_j}} = \{a_{i+t+1 \xmod n}, \ldots, a_i,
    b_{i+t\xmod n}, \ldots, b_{i -1 \xmod n}\}.\] Looking at the intersection,
    we notice $\inSet{\undirEdge{a_i}{b_j}} \cap \outSet{\undirEdge{a_i}{b_j}}
    = \{a_i, b_j\}$.
\end{proof}
\fi

However, with a new technique we can still construct linear-size spanners
for $\ringShift{n}$. For this, we need the concept of \emph{$e$-reverted edges}.
This construction sheds light on different ideas, such as making use of $\earlyMatching$ and
$\lateMatching$ to switch $A$ and $B$, that can be used for other graphs than
just $\ringShift{n}$.
We start by giving the general technique and then proving that it is applicable to construct a linear-size bi-spanner for the shifted matching graph.

\begin{definition}[$e$-reverted]\label{def:reverted}
For a bi-clique $\defaultBiclique$, consider an edge $e = \undirEdge{a}{b}$. We say that an edge $\undirEdge{a'}{b'}$ is \emph{$e$-reverted}, if $\leqAt{a'}{\latePartner{b'}}{b}$ or $\leqAt{\earlyPartner{a'}}{b'}{a}$. Denote the set of edges that are not $e$-reverted as $\malicious{e} \coloneqq \set{\undirEdge{a'}{b'} \in \bicliqueEdges{A}{B}}{\undirEdge{a'}{b'} \text{ is not $e$-reverted}}$.
\end{definition}

Intuitively, for each edge $e$ there is a linear-size set of edges that connects all pairs of vertices $\undirEdge{a'}{b'}$ that form an $e$-reverted edge.
If $\leqAt{a'}{\latePartner{b'}}{b}$, the path $a' \to b \to \latePartner{b'} \to b'$ is temporal, since $b'$ is also the latest neighbor for $\latePartner{b'}$.
If $\leqAt{\earlyPartner{a'}}{b'}{a}$, the path $a' \to \earlyPartner{a'} \to a \to b'$ is temporal, since $a'$ is also the earliest neighbor $\earlyPartner{a'}$.
See~\Cref{fig:rev}.
\begin{figure}[t]
    \centering
    \includegraphics[width=42.42mm]{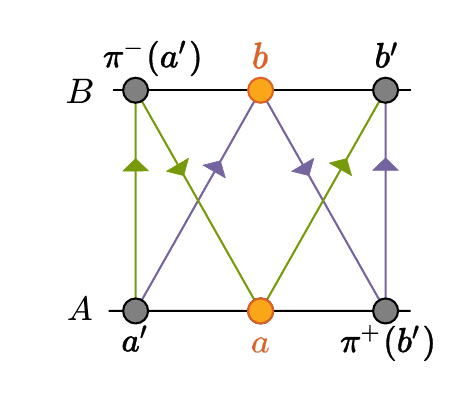}
    \caption{If at least one of the green or purple paths is temporal, $\{a', b'\}$ is $\undirEdge{a}{b}$-reverted.}
    \label{fig:rev}
\end{figure}
Notice that, no matter which $a' \in A$ and $b' \in B$ we choose, the paths only use edges incident to $a$ or $b$ and edges from $\earlyMatching$ or $\lateMatching$. We use this to construct a small bi-spanner for $D$, if $\malicious{e}$ is small.

\iflong
\begin{theorem}
\fi
\ifshort
\begin{theorem}[$\star$]
\fi
\label{thm:malicious-solution}
    Let $\defaultBiclique$ be an extremally matched bi-clique of size $n$ and $e = \undirEdge{a}{b} \in \bicliqueEdges{A}{B}$. Then $D$ has a bi-spanner of size at most $4n - 4 + \abs{\malicious{e}}$.
\end{theorem}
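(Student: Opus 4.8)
The plan is to exhibit an explicit bi-spanner $S$. Let $E_a$ and $E_b$ denote the sets of all edges incident to $a$ and to $b$, respectively, and set
\(S \coloneqq E_a \cup E_b \cup \earlyMatching \cup \lateMatching \cup \malicious{e}\).
The edges in $\malicious{e}$ are there to directly connect every non-$e$-reverted pair, contributing exactly $\abs{\malicious{e}}$ to the size; the remaining work is to show that the first four sets make $S$ a bi-spanner and that they jointly contain at most $4n-4$ edges. We may assume $n \ge 2$, since $n \le 1$ is degenerate.

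To check that $S$ is a bi-spanner, fix $a' \in A$ and $b' \in B$. If $\undirEdge{a'}{b'}$ is not $e$-reverted, then $\undirEdge{a'}{b'} \in \malicious{e} \subseteq S$ and we are done. Otherwise, by \Cref{def:reverted}, either $\leqAt{a'}{\latePartner{b'}}{b}$ or $\leqAt{\earlyPartner{a'}}{b'}{a}$. In the first case, the path $a' \to b \to \latePartner{b'} \to b'$ is temporal, exactly as argued right after \Cref{def:reverted}; here the extremally-matched assumption is what guarantees that $b'$ is the latest neighbor of $\latePartner{b'}$, so that the middle edge $\undirEdge{b}{\latePartner{b'}}$ is not later than $\undirEdge{\latePartner{b'}}{b'}$. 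Its three edges lie in $E_b \cup \lateMatching \subseteq S$. Symmetrically, in the second case the path $a' \to \earlyPartner{a'} \to a \to b'$ is temporal (now using that $a'$ is the earliest neighbor of $\earlyPartner{a'}$), and its edges lie in $\earlyMatching \cup E_a \subseteq S$. Either way $a'$ reaches $b'$ inside $S$, so $S$ is a bi-spanner.

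For the size bound I would argue as follows. We have $\abs{E_a} = \abs{E_b} = n$ and $E_a \cap E_b = \{e\}$, so $\abs{E_a \cup E_b} = 2n-1$; moreover $\earlyMatching$ and $\lateMatching$ are perfect matchings (since $D$ is extremally matched), hence each has $n$ edges, and they are disjoint, because for $n \ge 2$ no vertex has equal earliest and latest neighbor, so $\abs{\earlyMatching \cup \lateMatching} = 2n$. The key point is that $\earlyMatching \cup \lateMatching$ shares at least $3$ edges with $E_a \cup E_b$: the edges $\undirEdge{a}{\earlyPartner{a}}, \undirEdge{b}{\earlyPartner{b}}$ lie in $\earlyMatching$ and $\undirEdge{a}{\latePartner{a}}, \undirEdge{b}{\latePartner{b}}$ lie in $\lateMatching$, all four being incident to $a$ or $b$; the only possible coincidences among them are $\undirEdge{a}{\earlyPartner{a}} = \undirEdge{b}{\earlyPartner{b}}$ (iff $\earlyPartner{a}=b$) and $\undirEdge{a}{\latePartner{a}} = \undirEdge{b}{\latePartner{b}}$ (iff $\latePartner{a}=b$), at most one of which can hold since $a$'s earliest and latest neighbors differ. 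Hence $\abs{(\earlyMatching \cup \lateMatching) \setminus (E_a \cup E_b)} \le 2n-3$, so $\abs{E_a \cup E_b \cup \earlyMatching \cup \lateMatching} \le (2n-1)+(2n-3) = 4n-4$, and therefore $\abs{S} \le 4n-4+\abs{\malicious{e}}$.

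The substantive steps are the two temporality verifications — which are exactly where the extremally-matched hypothesis is indispensable, via $b' = \latePartner{\latePartner{b'}}$ and $a' = \earlyPartner{\earlyPartner{a'}}$ — and pushing the additive constant from the obvious $4n$ down to $4n-4$; the latter is a small but slightly fiddly intersection count resting on the single fact that a vertex with at least two incident edges cannot have the same neighbor as both its earliest and its latest one.
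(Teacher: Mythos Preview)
Your proof is correct and follows essentially the same construction and case analysis as the paper: the spanner $S = E_a \cup E_b \cup \earlyMatching \cup \lateMatching \cup \malicious{e}$ and the three-case temporality argument match verbatim. Your size count is in fact more carefully spelled out than the paper's one-line remark that ``we double count at least the first and last edges of $a$ and $b$ (and $e$ if these overlap)'', but the content is the same.
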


\iflong
\begin{proof}
    Define $S$ to include all edges adjacent to $a$ or $b$ as well as the
    perfect matchings $\earlyMatching$ and $\lateMatching$ and all edges in $\malicious{e}$.
    First notice that $|S| \le 4n + \abs{\malicious{e}}$ and we double count at least the first and last edges of $a$ and $b$ (and $e$ if these overlap).

    Furthermore, we show that $S$ is a bi-spanner for $D$. Let $a' \in A$ and $b' \in B$. We distinguish three cases.

    Case 1: $\undirEdge{a'}{b'} \in \malicious{e}$. In this case, the direct edge between $a'$ and $b'$ is contained in $S$. In all other cases $\undirEdge{a'}{b'}$ is $e$-reverted.

    Case 2: $\leqAt{a'}{\latePartner{b'}}{b}$. As argued before, in this case the path $a' \to b \to \latePartner{b'} \to b'$ is temporal and contained in $S$.

    Case 3: $\leqAt{\earlyPartner{a'}}{b'}{a}$. In this case, the path $a' \to \earlyPartner{a'} \to a \to b'$ is temporal and contained in $S$.
\end{proof}
\fi

Note that we only require a single edge $e$ to have a small number of non-$e$-reverted edges to obtain a small spanner. Fortunately, the shifted matching graph $\ringShift{n}$ contains many edges $e$ for which all edges are $e$-reverted.

\iflong
\begin{lemma}
\fi
\ifshort
\begin{lemma}[$\star$]
\fi
\label{lem:ringshift-reverted}
    Let $n \in \N^+$ and $e = \undirEdge{a}{b}$ be an edge with label $0$ or $n-1$ in $\ringShift{n}$. Then $\malicious{e} = \emptyset$.
\end{lemma}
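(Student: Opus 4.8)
The plan is to show directly that every edge $\undirEdge{a'}{b'} \in \bicliqueEdges{A}{B}$ is $e$-reverted when $e$ has label $0$ or $n-1$, which immediately gives $\malicious{e} = \emptyset$. Recall from \Cref{def:reverted} that $\undirEdge{a'}{b'}$ is $e$-reverted if $\leqAt{a'}{\latePartner{b'}}{b}$ or $\leqAt{\earlyPartner{a'}}{b'}{a}$. So I must unpack what these two conditions mean in $\ringShift{n}$ using the explicit edge labels $\edgeLabelv{a_i}{b_j} = j - i \bmod n$, together with the matching structure $\pi^-(a_i) = b_i$ and $\pi^+(a_i) = b_{i-1 \bmod n}$ (and the corresponding facts on the $B$-side, which by symmetry read $\pi^-(b_j) = a_j$ and $\pi^+(b_j) = a_{j+1 \bmod n}$, as visible in \Cref{tab:ringshift-example}).

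First I would handle the case $e = \undirEdge{a}{b}$ with label $0$, so $e = \undirEdge{a_k}{b_k}$ for some $k$. Fix an arbitrary edge $\undirEdge{a_i}{b_j}$. The condition $\leqAt{\earlyPartner{a_i}}{b_j}{a_k}$ means $\edgeLabelv{\earlyPartner{a_i}}{a_k} \le \edgeLabelv{b_j}{a_k}$; since $\earlyPartner{a_i} = b_i$, the left side is $\edgeLabelv{b_i}{a_k} = k - i \bmod n$ and the right side is $k - j \bmod n$. Thus the condition becomes $k - i \bmod n \le k - j \bmod n$. The other condition, $\leqAt{a_i}{\latePartner{b_j}}{b_k}$, expands via $\latePartner{b_j} = a_{j+1 \bmod n}$ to $\edgeLabelv{a_i}{b_k} \le \edgeLabelv{a_{j+1}}{b_k}$, i.e. $k - i \bmod n \le k - (j+1) \bmod n$. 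I would then argue that for any $i, j$ at least one of these two inequalities holds: writing $x = k - i \bmod n$ and $y = k - j \bmod n$, the first condition is $x \le y$ and the second is $x \le y - 1 \bmod n$; if $x > y$ then $y < x \le n-1$, so $y + 1 \le x$ and hence $y - 1 \bmod n$... — here I need to be careful, since the "$-1$" can wrap. The cleaner route is to note $\edgeLabelv{a_{j+1}}{b_k} = k - j - 1 \bmod n = (y - 1) \bmod n$, which equals $y - 1$ if $y \ge 1$ and equals $n-1$ if $y = 0$; in the latter sub-case the second inequality $x \le n-1$ is automatic, and in the former sub-case $x \le y-1$ together with $x \le y$ covers everything except $x \in \{y, y+1, \dots\}$, so combining with the first condition $x \le y$ finishes it. I expect this little modular case-split to be the only real obstacle — it is purely arithmetic but needs the wrap-around at $y = 0$ (equivalently $j = k$) treated separately.

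For the case $e = \undirEdge{a_k}{b_{k-1 \bmod n}}$ with label $n-1$, I would run the symmetric computation: now $b = b_{k-1}$, so $\leqAt{a_i}{\latePartner{b_j}}{b} = \leqAt{a_i}{a_{j+1}}{b_{k-1}}$ reads $(k-1) - i \bmod n \le (k-1) - (j+1) \bmod n$, and $\leqAt{\earlyPartner{a_i}}{b_j}{a} = \leqAt{b_i}{b_j}{a_k}$ reads $k - i \bmod n \le k - j \bmod n$, and the same style of argument (with the roles of the two conditions essentially swapped relative to the label-$0$ case) shows one always holds. Alternatively, and more elegantly, I would invoke a symmetry of $\ringShift{n}$: the map $a_i \mapsto b_{-i \bmod n}$, $b_j \mapsto a_{-j \bmod n}$ is an isomorphism of $\ringShift{n}$ onto itself that swaps the two sides and sends label-$0$ edges to label-$0$ edges while swapping $\pi^-$ and $\pi^+$; applying the reflection $t \mapsto n-1-t$ on labels on top of this turns a label-$n-1$ edge into a label-$0$ edge and $e$-revertedness into $e'$-revertedness for the image edge, so the label-$n-1$ case reduces to the label-$0$ case already proved. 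Finally I would conclude: since every $\undirEdge{a'}{b'}$ is $e$-reverted, $\malicious{e} = \emptyset$ by definition.
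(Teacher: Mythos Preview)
Your approach is the same as the paper's --- unpack the two conditions in \Cref{def:reverted} using the explicit $\ringShift{n}$ labels and do a case split --- but there is a sign error that causes your case analysis not to close.

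Concretely, you write $\edgeLabelv{b_i}{a_k} = k - i \bmod n$, but by definition $\edgeLabelv{a_\alpha}{b_\beta} = \beta - \alpha \bmod n$, so the edge $\{a_k,b_i\}$ has label $i - k \bmod n$. Your second condition is computed correctly (the edge there is $\{a_i,b_k\}$, whose label really is $k-i$), but the first is not. With your variables $x = k-i \bmod n$ and $y = k-j \bmod n$, the \emph{correct} first condition $\leqAt{b_i}{b_j}{a_k}$ reads $(-x)\bmod n \le (-y)\bmod n$, i.e.\ ``$x=0$, or $1\le y\le x$''. Together with your (correct) second condition ``$y=0$, or $x<y$'', this does cover all $(x,y)$. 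With your incorrect first condition $x\le y$, however, the pair $y\ge 1$, $x>y$ is not covered, and your closing sentence (``combining with the first condition $x\le y$ finishes it'') is circular.

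The fix is immediate once the sign is corrected; in fact, if you first reduce to $k=0$ via the shift automorphism (as the paper does), the two conditions become simply $i\le j$ and $i>j$, and the case split is trivial. Your handling of the label-$(n{-}1)$ case by an analogous direct computation is fine; the paper likewise just says it follows analogously.
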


\iflong
\begin{proof}
We show the statement only for $\edgeLabele{e} = 0$, the proof for label $n-1$ follows analogously. Without loss of generality, let $a = a_0$ and $b = b_0$ and let $i,j \in \setExc{n}$. To show that $\undirEdge{a_i}{b_j}$ is $e$-reverted, we distinguish two cases.

Case 1: $i \le j$. By the structure of the shifted matching graph, we know $\earlyPartner{a_i} = \leqAt{b_i}{b_j}{a_0}$, and $\undirEdge{a_i}{b_j}$ is $e$-reverted.

Case 2: $i > j$. In this case, we have $\leqAt{a_i}{a_{j+1\xmod n}}{b_0} = \latePartner{b_j}$. Thus, $\undirEdge{a_i}{b_j}$ is $e$-reverted.
\end{proof}
\fi

This allows us to apply \Cref{thm:malicious-solution} to $\ringShift{n}$ to construct a linear-size bi-spanner.

\begin{corollary}\label{lem:ringshift-solution}
    For every $n \in \N^+$, the graph $\ringShift{n}$ has a bi-spanner of size
    $4n-4$.
\end{corollary}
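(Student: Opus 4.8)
The plan is to combine the two results just established: \Cref{lem:ringshift-reverted} tells us that for any edge $e$ with label $0$ or $n-1$ in $\ringShift{n}$ we have $\malicious{e} = \emptyset$, and \Cref{thm:malicious-solution} tells us that an extremally matched bi-clique of size $n$ has a bi-spanner of size at most $4n - 4 + \abs{\malicious{e}}$. So the entire argument is a two-line invocation, provided the hypotheses of \Cref{thm:malicious-solution} are actually met.

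First I would check that $\ringShift{n}$ is extremally matched. This was essentially observed in the text right after the definition of the shifted matching graph: $\earlyPartner{a_i} = b_i$ and $\latePartner{a_i} = b_{i-1 \bmod n}$, so both $\earlyMatching$ and $\lateMatching$ are perfect matchings between $A$ and $B$ (indeed $\earlyMatching$ is the identity matching and $\lateMatching$ is a cyclic shift). Hence \Cref{thm:malicious-solution} applies to $D = \ringShift{n}$ with $n$ vertices per side. Second, I would fix any edge $e$ with $\edgeLabele{e} = 0$ (for instance $e = \undirEdge{a_0}{b_0}$), which certainly exists. By \Cref{lem:ringshift-reverted}, $\abs{\malicious{e}} = 0$. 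Plugging into \Cref{thm:malicious-solution} gives a bi-spanner of size at most $4n - 4 + 0 = 4n - 4$, which is exactly the claimed bound.

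There is essentially no obstacle here: the corollary is just the specialization of the general machinery to the concrete family. The only thing worth a sentence is a sanity check that the bound is non-vacuous (for small $n$ the expression $4n-4$ could be smaller than $n^2$ only once $n \ge 2$, and for $n = 1$ the graph is a single edge, for which the bound $4n-4 = 0$ would be wrong — so strictly one may want to note $n \ge 2$, or observe that \Cref{thm:malicious-solution} double-counts at least four incident edges only when they are distinct, which forces $n$ large enough; in any case the $O(n)$ asymptotics, which is the point, hold). I would therefore write the proof as: ``By the remarks following the definition, $\ringShift{n}$ is extremally matched. Pick $e = \undirEdge{a_0}{b_0}$, which has label $0$, so by \Cref{lem:ringshift-reverted} we have $\malicious{e} = \emptyset$. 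Applying \Cref{thm:malicious-solution} yields a bi-spanner of size at most $4n - 4$.''
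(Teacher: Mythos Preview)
Your proposal is correct and matches the paper's approach exactly: the corollary is stated without a separate proof, as it follows immediately from \Cref{lem:ringshift-reverted} and \Cref{thm:malicious-solution} once one notes (as the paper does in the text following the definition of $\ringShift{n}$) that the shifted matching graph is extremally matched. Your observation about the degenerate case $n=1$ is a fair nitpick that the paper does not address, but it does not affect the asymptotic content of the result.
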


Now we connect the concept of $e$-reverted edges to partial pivot-edges from \Cref{sec:pivotable}.

\iflong
\begin{lemma}
\fi
\ifshort
\begin{lemma}[$\star$]
\fi
\label{lem:malicious-pivot}
    Let $\defaultBiclique$ be a bi-clique and $e = \undirEdge{a}{b} \in \bicliqueEdges{A}{B}$. Then $\abs{\malicious{e}} \ge \abs{A \setminus \inSet{e}} \cdot \abs{B \setminus \outSet{e}}$.
\end{lemma}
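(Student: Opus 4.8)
The plan is to exhibit an injection from the Cartesian product $(A \setminus \inSet{e}) \times (B \setminus \outSet{e})$ into $\malicious{e}$. The natural candidate is the map sending a pair $(a', b')$ to the edge $\undirEdge{a'}{b'}$; this is clearly injective, so it suffices to show that whenever $a' \notin \inSet{e}$ and $b' \notin \outSet{e}$, the edge $\undirEdge{a'}{b'}$ is \emph{not} $e$-reverted, i.e.\ $\undirEdge{a'}{b'} \in \malicious{e}$.

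So fix $a' \in A \setminus \inSet{e}$ and $b' \in B \setminus \outSet{e}$, and suppose for contradiction that $\undirEdge{a'}{b'}$ is $e$-reverted. By \Cref{def:reverted}, one of two conditions holds. In the first case, $\leqAt{a'}{\latePartner{b'}}{b}$, which means the path $a' \to b \to \latePartner{b'} \to b'$ is temporal (as noted in the text, $\undirEdge{\latePartner{b'}}{b'}$ is the latest edge at $\latePartner{b'}$, so it can be appended). This path starts at $a'$, uses the edge $\undirEdge{a'}{b}$ with a label at most $\edgeLabelv{\latePartner{b'}}{b} \le \edgeLabel(e)$... more carefully: the first edge $a' \to b$ has label at most $\edgeLabelv{\latePartner{b'}}{b}$, and since $b$'s edge to $\latePartner{b'}$ comes before the direct edge $e$ only if... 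Let me instead argue directly from $\inSet{e}$: the condition $\leqAt{a'}{\latePartner{b'}}{b}$ gives $\edgeLabelv{a'}{b} \le \edgeLabelv{\latePartner{b'}}{b}$. Hmm, this does not immediately bound things by $\edgeLabel(e)$. The cleaner route is to observe that $\leqAt{a'}{\latePartner{b'}}{b}$ makes $a' \to b \to \latePartner{b'}$ a temporal path arriving at $\latePartner{b'}$, and then appending $\latePartner{b'} \to b'$ extends it; so $a'$ reaches $b'$ via $b$. But this shows $a'$ reaches $b'$, not that $a' \in \inSet{e}$. The intended deduction must be: the condition $\leqAt{a'}{\latePartner{b'}}{b}$ implies in particular $\edgeLabelv{a'}{b} \le \edgeLabelv{\latePartner{b'}}{b}$, and separately one should compare to $\edgeLabel(e) = \edgeLabelv{a}{b}$. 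I expect the actual argument splits on whether $\edgeLabelv{a'}{b} \le \edgeLabel(e)$: if so, then $a'$ reaches $b$ (hence $a$) with labels at most $\edgeLabel(e)$, so $a' \in \inSetTime{b}{\edgeLabel(e)} \subseteq \inSet{e}$, contradiction; if not, then $\edgeLabel(e) < \edgeLabelv{a'}{b} \le \edgeLabelv{\latePartner{b'}}{b}$, and combined with $b'$ being the latest neighbor of $\latePartner{b'}$ we get that $b$ reaches $b'$ using labels at least $\edgeLabel(e)$, so $b' \in \outSetTime{b}{\edgeLabel(e)} \subseteq \outSet{e}$, contradiction. The second case of $e$-revertedness, $\leqAt{\earlyPartner{a'}}{b'}{a}$, is handled symmetrically: split on whether $\edgeLabelv{a'}{b'} $ compares... rather, split on $\edgeLabelv{a}{b'}$ versus $\edgeLabel(e)$, concluding either $b' \in \outSet{e}$ (via $a$, using late labels) or $a' \in \inSet{e}$ (via $\earlyPartner{a'}$ then $a$, using early labels).

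Having shown both $e$-reverted conditions lead to a contradiction with $a' \notin \inSet{e}$ and $b' \notin \outSet{e}$, we conclude $\undirEdge{a'}{b'} \in \malicious{e}$. Since distinct pairs give distinct edges, $\abs{\malicious{e}} \ge \abs{A \setminus \inSet{e}} \cdot \abs{B \setminus \outSet{e}}$.

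The main obstacle I anticipate is getting the case analysis on the labels exactly right — in particular, being careful about non-strict temporal paths, about which endpoint of $e$ one routes through, and about the direction of the inequalities when invoking that $\undirEdge{\latePartner{b'}}{b'}$ is $\latePartner{b'}$'s latest edge and $\undirEdge{a'}{\earlyPartner{a'}}$ is $a'$'s earliest edge. Everything else (the injectivity of $(a',b') \mapsto \undirEdge{a'}{b'}$ and the final counting) is immediate.
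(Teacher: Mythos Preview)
Your approach is correct and is essentially the paper's: both show that every pair $a' \in A \setminus \inSet{e}$, $b' \in B \setminus \outSet{e}$ yields a non-$e$-reverted edge $\undirEdge{a'}{b'}$, and both implicitly use that $b'$ is also the latest neighbor of $\latePartner{b'}$ (extremal matching). The paper's write-up is simply more direct and avoids your sub-case split: from $a' \notin \inSet{e}$ it reads off $a \prec_b a'$ (your sub-case~A is vacuous), from $b' \notin \outSet{e}$ it deduces $\latePartner{b'} \notin \outSet{e}$ and hence $\latePartner{b'} \prec_b a$, and then transitivity gives $\latePartner{b'} \prec_b a'$, negating the first reverted-condition in one stroke; the second condition is handled symmetrically.
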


\iflong
\begin{proof}
    Let $b' \in B \setminus \outSet{e}$ and $a' \in A \setminus \inSet{e}$. This also implies $\latePartner{b'} \notin \outSet{e}$ and $\earlyPartner{a'} \notin \inSet{e}$. Therefore, we have $\lessAt{\latePartner{b'}}{a}{b}$ and $\lessAt{a}{a'}{b}$ and by transitivity of $\lessOp{b}$ we get $\lessAt{\latePartner{b'}}{a'}{b}$. Additionally, $\lessAt{b'}{b}{a}$ and $\lessAt{b}{\earlyPartner{a'}}{a}$ follows from the definitions of $a'$ and $b'$. Again, using transitivity we get $\lessAt{b'}{\earlyPartner{a'}}{a}$ and $\undirEdge{a'}{b'} \in \malicious{e}$.
\end{proof}
\fi

\Cref{lem:malicious-pivot} tells us something about where to look for edges $e$ with many $e$-reverted edges. In fact, only edges that are either among the earliest or the latest edges of the incident vertices are suitable candidates for \Cref{thm:malicious-solution}, which are the least likely to be suitable partial pivot-edges. This gives us an intuition, why the two techniques can be applied in different situations. This intuition is supported by the example of shifted matching graphs.


\section{Product Graphs}\label{sec:swarms}
In this section we present a general construction to compose two bi-cliques.
We show that a careful application yields a bi-clique which cannot be solved by the techniques presented so far. 
In particular, we present a class of graphs, such that for any edge \(e\) the pivot-set of $e$ is sub-linear and the set of non-revertible vertex-pairs grows almost quadratically.

\begin{figure}[t]
    \centering

    \begin{subfigure}[t]{0.49\textwidth}
        \centering
        \includegraphics[width=63.25mm]{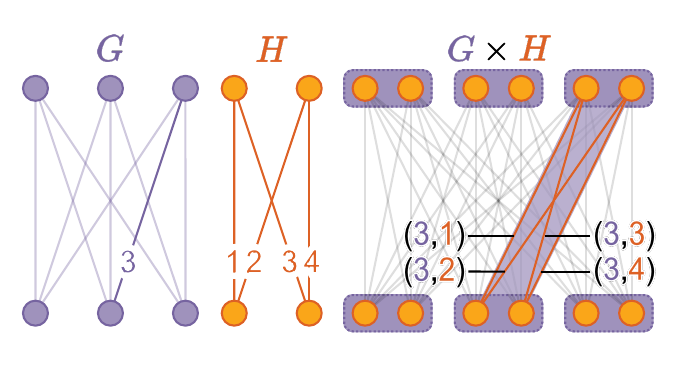}
        \caption{Construction of product graphs. Observe how every edge in $G$ is replaced by a copy of the graph $H$ and every label is the pair of their original labels.}
        \label{fig:product}
    \end{subfigure} %
    \hfill
    \begin{subfigure}[t]{0.49\textwidth}
        \centering
        \includegraphics[width=44.3mm]{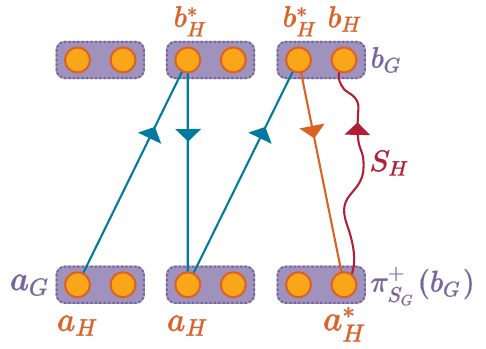}
        \caption{The construction of the path in the proof of \Cref{lem:general-product-spanner}. Purple labels are the first element of vertex names, orange ones are the second element.}
        \label{fig:product-solution}
    \end{subfigure}%
    \caption{Product graphs and how to find small spanners.}
\end{figure}

\iflong
\begin{theorem}
\fi
\ifshort
\begin{theorem}[$\star$]
\fi
\label{thm:small-pivot-large-malicious}
  Let $f\colon \N \to \N^{+}$ be any function with \(f(n) \in \bigO(n)\). 
  There is a set of bi-cliques \(\{D_{n}\}_{n\in \N}=\{(A_{n},B_{n},\lambda_{n})\}_{n\in \N}\), such that
  \begin{enumerate}
    \item \(|A_{n}|=|B_{n}|\in\Theta(n)\);
    \item for all \(e \in \bicliqueEdges{A_n}{B_n}\)
    \begin{enumerate}
      \item the size of \(\inSet{e} \cap \outSet{e}\) is in
$\bigO\left(f(n)\right)$,
      \item the number of the not \(e\)-reverted edges \(\malicious{e}\) is in
$\Omega\left(nf(n)\right)$.\vspace{2mm}
    \end{enumerate}
  \end{enumerate}
\end{theorem}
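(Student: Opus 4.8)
The plan is to build the graphs $D_n$ by a "product" construction that takes an outer bi-clique $G$ and an inner bi-clique $H$ and replaces every edge of $G$ by a copy of $H$, letting each label become the lexicographically-ordered pair $(\lambda_G, \lambda_H)$ of the outer and inner labels; see \Cref{fig:product}. For the outer graph I would take $G = \ringShift{m}$ (so that it has no partial pivot-edge and no $e$-reverted structure, by \Cref{lem:ringshift-no-pivot-edge} and \Cref{lem:ringshift-reverted}'s complement behaviour), and for the inner graph $H$ I would take a bi-clique on $\Theta(f(n)/1)$ vertices per side — chosen so that $|A_n| = |B_n| = m \cdot |A(H)| \in \Theta(n)$, which forces $m \in \Theta(n/f(n))$. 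The key semantic point of the lexicographic labelling is path composition: a temporal path in the product must, outer-coordinate-wise, be a temporal path in $G$, and within each maximal block of consecutive edges sharing the same outer label it must be a temporal path in (a copy of) $H$. So reachability in $D_n$ decomposes as "reachability in $G$, refined by reachability in $H$ inside each copy".

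Next I would prove property (2a): for any edge $e$ of $D_n$ lying in the copy of $H$ sitting on outer-edge $g = \undirEdge{x}{y}$ of $G$, the sets $\inSet{e}$ and $\outSet{e}$ are controlled by the outer structure. Concretely, $\inSet{e} \cap \outSet{e}$ can only contain vertices whose "outer projection" lies in $\inSet{g} \cap \outSet{g}$ in $G$ — but by \Cref{lem:ringshift-no-pivot-edge} that intersection in $\ringShift{m}$ is just $\{x,y\}$, i.e. two outer-vertices. Those two outer-vertices contribute at most $2|A(H)| + 2|B(H)| \in \bigO(f(n))$ actual vertices of $D_n$ (the ones in the copies of $H$ attached to $x$ and to $y$), and any further vertex would have to route through another outer-vertex, contradiction. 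So $|\inSet{e}\cap\outSet{e}| \in \bigO(f(n))$, giving (2a). Here I would be careful about the boundary copies of $H$ and about vertices $u$ with $\latePartner{u}$ or $\earlyPartner{u}$ crossing block boundaries, but \Cref{lem:activity-width} and \Cref{lem:inset-outset-symm} give the needed monotonicity to pin this down.

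For property (2b) I would count non-$e$-reverted pairs directly via \Cref{lem:malicious-pivot}: $|\malicious{e}| \ge |A_n \setminus \inSet{e}| \cdot |B_n \setminus \outSet{e}|$. By \Cref{lem:in-cup-out-eq-all-vertices} and \Cref{lem:inset-outset-symm}, $|\inSet{e}| + |\outSet{e}| = 2n + |\inSet{e}\cap\outSet{e}| \le 2n + \bigO(f(n))$, so at least one of $|A_n\setminus\inSet{e}|, |B_n\setminus\outSet{e}|$ is $\Omega(n)$; but I actually need *both* to be $\Omega(f(n))$ simultaneously, so the delicate part is to argue neither of $\inSet{e}\cap A_n$ nor $\outSet{e}\cap B_n$ is almost all of its side. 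This is where I would again use the outer $\ringShift{m}$ structure: for $e$ in the copy on outer-edge $\undirEdge{x}{y}$, the outer-projection of $\inSet{e}$ is contained in $\inSet{\undirEdge{x}{y}}$ in $\ringShift{m}$, which by \Cref{lem:ringshift-in-out} is a contiguous arc of outer-vertices of size $t+1$ (where $t$ is the outer label); choosing the inner graph $H$ and restricting attention to edges $e$ in copies where this arc is neither tiny nor nearly all of $A$ — equivalently, outer labels $t$ bounded away from $0$ and $m-1$ — forces $|A_n\setminus\inSet{e}| = \Omega(m\cdot|A(H)|)=\Omega(n)$ and $|B_n\setminus\outSet{e}|=\Omega(n)$, so $|\malicious{e}| \in \Omega(n^2) \subseteq \Omega(nf(n))$. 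For the remaining "extreme" copies (outer label near $0$ or $m-1$), I would instead use the inner graph: pick $H$ itself to be $\ringShift{|A(H)|}$ so that inside a single copy no inner edge is $e$-reverted, yielding $|\malicious{e}| \ge \Omega(|A(H)|^2) = \Omega(f(n)^2) = \Omega(nf(n))$ when $f(n)=\Theta(n)$, and more generally one interpolates. Property (1) is immediate from the vertex count $m\cdot|A(H)| = \Theta(n/f(n))\cdot\Theta(f(n)) = \Theta(n)$.

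**Main obstacle.** The crux is property (2b): making $|\malicious{e}|$ large for *every* edge $e$ requires both $A_n\setminus\inSet{e}$ and $B_n\setminus\outSet{e}$ to stay large, and this fails for edges $e$ whose outer block sits at an extreme outer label (where $\inSet{g}$ or $\outSet{g}$ is almost all of $\ringShift{m}$). The fix is the two-regime argument above — outer-structure for "central" copies, inner-structure for "extreme" copies — and getting the quantitative bounds to match up to $\Omega(nf(n))$ in both regimes simultaneously (which constrains how $m$ and $|A(H)|$ must scale with $f$) is the part that will need real care rather than routine computation.
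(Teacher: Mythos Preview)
Your construction and the argument for (2a) match the paper: take $D_n = \exGraph{\ringShift{m}}{\ringShift{k}}$ with $m\in\Theta(n/f(n))$ and $k\in\Theta(f(n))$, and bound the pivot-set by projecting to the outer factor, where \Cref{lem:ringshift-no-pivot-edge} confines it to two bags of size $k$.

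The genuine gap is in (2b). Your route through \Cref{lem:malicious-pivot} collapses for edges $e$ with outer label $0$ or $m-1$. If the outer label is $0$ then in fact $\outSet{e}\supseteq B_n$: every $b$-bag other than that of the $b$-endpoint is reached by a single edge of positive outer label, and the $b$-endpoint's own bag is reached by hopping to the adjacent $a$-bag at outer label $1$ (choosing there the $H$-early partner of $b_H$) and returning. Hence $|B_n\setminus\outSet{e}|=0$ and the lemma gives nothing; outer label $m-1$ is symmetric with $\inSet{e}\supseteq A_n$. Your fallback does not repair this. First, the direction is inverted: by \Cref{lem:ringshift-reverted}, for the extremal edge of $\ringShift{k}$ \emph{every} edge is reverted, not none. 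Second, even under a generous reading, any count confined to a single inner copy yields at most $k^2=\Theta(f(n)^2)$ pairs, which is $\Omega(nf(n))$ only when $f(n)=\Theta(n)$; the ``interpolation'' you gesture at has no content for smaller $f$, and these extreme outer labels account for a constant fraction of all edges, so they cannot be ignored.

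The paper avoids \Cref{lem:malicious-pivot} entirely here and exhibits non-reverted edges directly. Writing $a\in A_0$ without loss of generality and using that $D_n$ is extremally matched, for each $\ell\in\{1,\dots,m-1\}$ and each pair $a',a''\in A_\ell$ with $a'\succ_b a''$, the edge $\{a',\latePartner{a''}\}$ is not $e$-reverted: the first clause of \Cref{def:reverted} fails because $\latePartner{\latePartner{a''}}=a''\prec_b a'$, and the second fails because $\earlyPartner{a'}\in B_\ell$ while $\latePartner{a''}\in B_{\ell-1}$, so they are already separated by outer label at $a$. This gives $(m-1)\binom{k}{2}=\Omega(mk^2)=\Omega(nf(n))$ uniformly in $e$, with no case split on the outer label.
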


We do not prove \Cref{thm:small-pivot-large-malicious} right away. Rather, we
first consider an application. For this, choose \(f(n) \coloneqq n^{1-\varepsilon}\).
\Cref{thm:small-pivot-large-malicious} tells us that there is a set of graphs
\(\{D_{n}\}_{n\in \N}\) such that the size of sides is in \(\Theta(n)\), the size of any
pivot-set is in \(\bigO(n^{1-\varepsilon})\), and the number of not reverted edges with
respect to any edge is in \(\Omega(n^{2-\varepsilon})\).

To prove \Cref{thm:small-pivot-large-malicious}, we define a graph
class that we later use to define appropriate \(D_{n}\).

\begin{definition}[Product Graph]
    Consider two bi-cliques $G = (A_G, B_G, \edgeLabel_G)$
    and $H=(A_H, B_H, \edgeLabel_H)$. Define the
    \emph{product graph} $\exGraph{G}{H} = (\mathcal{A}, \mathcal{B}, \Lambda)$ as
    \begin{align*}
        \mathcal{A} &\coloneqq A_G \times A_H, \\
        \mathcal{B} &\coloneqq B_G \times B_H, \\
        \Lambda &\coloneqq ((a_G,a_H), (b_G,b_H)) \mapsto (\edgeLabel_G(a_G, b_G), \edgeLabel_H(a_H, b_H)). 
    \end{align*}
\end{definition}

Note that the co-domain of $\Lambda$ is $\N^2$. To make it the proper definition of an
edge labeling function, take the lexicographic embedding. For the reader's
benefit, we continue to write tuples. Intuitively, to construct \(\exGraph{G}{H}\), we start
with the graph \(G\) and replace each vertex of \(A_{G}\) with a distinct
copy of \(A_{H}\) as well as each vertex of \(B_{G}\) with a distinct copy of
\(B_{H}\), see \Cref{fig:product} for an illustration of this idea. We refer to each such expanded vertex as a \emph{bag}. Note that this operation is not commutative. 
The edge labels \(\Lambda\) are chosen in such a way that any 2-edge path \(x\to y \to z\) with
\(x\) and \(z\) in the same bag is temporal if the respective path in \(H\) is
temporal and any 2-edge path \(x\to y \to z\) with \(x\) and \(z\) belonging to
different bags is temporal if the respective path is temporal in \(G\).
We now make this rigorous.

\iflong
\begin{lemma}
\fi
\ifshort
\begin{lemma}[$\star$]
\fi
\label{lem:expanded-graph-contract-paths}
  Let $G$ and $H$ be
  bi-cliques. Let \(P=(g_{1}, h_{1})(g_{2},h_{2})\dots (g_{\ell},h_{\ell})\) be a temporal path in
\(\exGraph{G}{H}\). Then
  \begin{enumerate}
    \item the sequence \(g_{1}g_{2}\dots g_{\ell}\) is a temporal path in \(G\);
    \item if \(\edgeLabel_{G}(g_{1}, g_{2}) = \edgeLabel_{G}(g_{\ell-1},g_{\ell})\), the sequence \(h_{1}h_{2}\dots h_{\ell}\) is a
          temporal path in \(H\).
  \end{enumerate}
\end{lemma}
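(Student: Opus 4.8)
The plan is a direct unwinding of the definitions, using the order structure of $\Lambda$. First I would record that $\exGraph{G}{H}$ is a bipartite graph with parts $\mathcal{A}=A_G\times A_H$ and $\mathcal{B}=B_G\times B_H$, so the path $P$ alternates between these parts; hence the first coordinates $g_1,\dots,g_\ell$ alternate between $A_G$ and $B_G$, and the second coordinates $h_1,\dots,h_\ell$ alternate between $A_H$ and $B_H$. Since $G$ and $H$ are bi-cliques, consecutive entries of these sequences lie in different parts, so they are distinct and joined by an edge; thus $g_1g_2\dots g_\ell$ is a walk in $G$ and $h_1h_2\dots h_\ell$ is a walk in $H$, which we may treat as (possibly non-simple) paths since only reachability matters. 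The cases $\ell\le 2$ make both claims vacuous, so I would assume $\ell\ge 3$.

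For claim~(1): $P$ being temporal means that for every $i\in[\ell-2]$ the label $\Lambda\bigl(\undirEdge{(g_i,h_i)}{(g_{i+1},h_{i+1})}\bigr)$ is at most $\Lambda\bigl(\undirEdge{(g_{i+1},h_{i+1})}{(g_{i+2},h_{i+2})}\bigr)$. By the definition of $\Lambda$ and since the lexicographic embedding of $\N^2$ into $\N$ is order-preserving, this is equivalent to $\bigl(\edgeLabel_G(g_i,g_{i+1}),\edgeLabel_H(h_i,h_{i+1})\bigr)\le_{\mathrm{lex}}\bigl(\edgeLabel_G(g_{i+1},g_{i+2}),\edgeLabel_H(h_{i+1},h_{i+2})\bigr)$. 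Projecting onto the first coordinate gives $\edgeLabel_G(g_i,g_{i+1})\le\edgeLabel_G(g_{i+1},g_{i+2})$, which is precisely the condition for $g_1g_2\dots g_\ell$ to be a temporal path in $G$.

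For claim~(2): claim~(1) already gives the chain $\edgeLabel_G(g_1,g_2)\le\edgeLabel_G(g_2,g_3)\le\dots\le\edgeLabel_G(g_{\ell-1},g_\ell)$, and the hypothesis $\edgeLabel_G(g_1,g_2)=\edgeLabel_G(g_{\ell-1},g_\ell)$ forces every inequality in it to be an equality, so $\edgeLabel_G(g_i,g_{i+1})$ takes the same value for all $i$. Returning to the lexicographic comparison above, the first coordinates on both sides now agree, so it reduces to $\edgeLabel_H(h_i,h_{i+1})\le\edgeLabel_H(h_{i+1},h_{i+2})$ for every $i\in[\ell-2]$; hence $h_1h_2\dots h_\ell$ is a temporal path in $H$. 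Since the argument is purely definitional, I do not expect a real obstacle; the only points needing a line of care are that the coordinate sequences are legitimate walks (which is exactly the bipartiteness observation) and that the lexicographic embedding faithfully reflects the order on $\N^2$, which holds because $G$ and $H$ are fixed finite bi-cliques with bounded label sets.
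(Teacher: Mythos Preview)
Your proof is correct and follows essentially the same route as the paper: project the lexicographic inequality onto the first coordinate for claim~(1), then use the collapse of the $\edgeLabel_G$-chain to equalities to reduce the lexicographic comparison to the second coordinate for claim~(2). Your added remarks about bipartiteness ensuring that the coordinate sequences are legitimate walks are a small amount of extra care the paper omits, but the argument is otherwise identical.
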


\iflong
\begin{proof}
  By choice of \(\Lambda\), for all \(i \in [\ell - 2]\) we have
  \(\edgeLabel_{G}(g_{i},g_{i+1}) \leq \edgeLabel_{G}(g_{i+1},g_{i+2})\) and so
  \(g_{1}g_{2}\dots g_{\ell}\) is temporal in \(G\).
  Additionally, if
\(\edgeLabel_{G}(g_{1}, g_{2}) = \edgeLabel_{G}(g_{\ell-1},g_{\ell})\), for all $i \in [\ell-1]$ the values
\(\edgeLabel_{G}(g_{i}, g_{i+1})\) are equal.
  By lexicographic ordering, for all \(i \in [\ell - 2]\) we have
  \(\edgeLabel_{H}(h_{i}, h_{i+1}) \leq \edgeLabel_{H}(h_{i+1}, h_{i+2})\) and so
  \(h_{1}h_{2}\dots h_{\ell}\) is temporal in \(H\).
\end{proof}
\fi

This gives us enough tools to prove how the pivot-set of an edge
\(e=\undirEdge{(u_{G},u_{H})}{(v_{G},v_{H})}\in E(\exGraph{G}{H})\) in \(\exGraph{G}{H}\) relates to the pivot-set of
\(\undirEdge{u_{G}}{v_{G}}\) in \(G\). For this, define
\(\bagMapOp\colon V(\exGraph{G}{H})\to V(G)\) as
\((u_{G},u_{H}) \overset{\bagMapOp}{\mapsto} u_{G}\).
This denotes the function that yields for every vertex in \(\exGraph{G}{H}\) its
bag, namely, its original vertex in \(G\). We will also
naturally extend \(\bagMapOp\) to edges \ie{\(\bagMap{\undirEdge{(u_{G}, u_{H})}{(v_{G},v_{H})}} = \undirEdge{u_{G}}{v_{G}}\)}. Intuitively, for any vertex
\(v\) in the pivot-set of an edge \(e\in E(\exGraph{G}{H})\), the vertex
\(\bagMap{v}\) must be in the pivot-set of \(\bagMap{e}\) in \(G\).

\iflong
\begin{lemma}
\fi
\ifshort
\begin{lemma}[$\star$]
\fi
\label{lem:expand-graph-in-out-set}
  Let \(G\) and \(H\) be bi-cliques and let \(e\in E(\exGraph{G}{H})\). We have \begin{enumerate}
      \item $\bagMap{\inSet{e}}\subseteq \inSet{\bagMap{e}}$,
      \item $\bagMap{\outSet{e}}\subseteq \outSet{\bagMap{e}}$, and
      \item $\bagMap{\inSet{e} \cap \outSet{e}} \subseteq \inSet{\bagMap{e}} \cap \outSet{\bagMap{e}}$.
  \end{enumerate}
\end{lemma}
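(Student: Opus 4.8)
The plan is to derive all three inclusions from \Cref{lem:expanded-graph-contract-paths}, using only one elementary order fact about the labelling: since $\Lambda$ pairs the labels lexicographically with the $G$-label in the first coordinate, any comparison $\Lambda(e_1) \le \Lambda(e_2)$ between edges of $\exGraph{G}{H}$ forces $\edgeLabel_G(\bagMap{e_1}) \le \edgeLabel_G(\bagMap{e_2})$ (this is exactly the principle already invoked in the proof of \Cref{lem:expanded-graph-contract-paths}). Throughout write $e = \undirEdge{(u_G,u_H)}{(v_G,v_H)}$, so $\bagMap{e} = \undirEdge{u_G}{v_G}$, and set $t \coloneqq \edgeLabel_G(\bagMap{e})$, the $G$-label of $\bagMap{e}$; note that the first coordinate of $\Lambda(e)$ is $t$.

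For part~1, let $w = (w_G, w_H) \in \inSet{e}$. By definition of $\inSetOp$, $w$ reaches $(u_G,u_H)$ or $(v_G,v_H)$ in $\exGraph{G}{H}$ via a temporal path all of whose labels are at most $\Lambda(e)$; by symmetry assume it reaches $(u_G,u_H)$, via a path $P = (g_1,h_1)\cdots(g_\ell,h_\ell)$ with $(g_1,h_1) = w$ and $(g_\ell,h_\ell) = (u_G,u_H)$. By \Cref{lem:expanded-graph-contract-paths}, $g_1\cdots g_\ell$ is a temporal path in $G$ from $w_G$ to $u_G$, and each of its edge labels satisfies $\edgeLabel_G(g_i,g_{i+1}) \le t$ by the order fact above. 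Hence $w_G \in \inSetTime{u_G}{t} \subseteq \inSet{\bagMap{e}}$; in the degenerate case $\ell = 1$ we simply have $w_G = u_G \in \inSet{\bagMap{e}}$. This gives $\bagMap{\inSet{e}} \subseteq \inSet{\bagMap{e}}$.

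Part~2 is the mirror image. For $w \in \outSet{e}$ there is, w.l.o.g., a temporal path in $\exGraph{G}{H}$ from $(u_G,u_H)$ to $w$ whose first edge label is at least $\Lambda(e)$; its image under $\bagMapOp$ is a temporal path in $G$ by \Cref{lem:expanded-graph-contract-paths}, and its first edge label is at least $t$ by the order fact, so $w_G \in \outSetTime{u_G}{t} \subseteq \outSet{\bagMap{e}}$ (again the case $\ell=1$ is immediate). Part~3 then needs no new argument: for any map one has $\bagMap{X \cap Y} \subseteq \bagMap{X} \cap \bagMap{Y}$, so combining parts~1 and~2 yields $\bagMap{\inSet{e} \cap \outSet{e}} \subseteq \bagMap{\inSet{e}} \cap \bagMap{\outSet{e}} \subseteq \inSet{\bagMap{e}} \cap \outSet{\bagMap{e}}$.

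I do not expect a genuine obstacle. The only points that need care are keeping the direction of the lexicographic comparison straight (an upper, resp.\ lower, bound on $\Lambda$ transfers to an upper, resp.\ lower, bound on the first coordinate, but not conversely, which is precisely why these inclusions can be strict) and remembering the trivial length-one paths so that the two endpoints of $e$ themselves are covered.
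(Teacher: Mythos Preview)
Your proof is correct and follows essentially the same route as the paper: project the witnessing temporal path via \Cref{lem:expanded-graph-contract-paths}, use the lexicographic ordering to transfer the label bound to the $G$-coordinate, and derive part~3 from the general inclusion $\bagMap{X\cap Y}\subseteq\bagMap{X}\cap\bagMap{Y}$. The only cosmetic difference is that the paper checks just the last edge label (which suffices by temporality of the projected path), whereas you bound all of them; both are fine.
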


\iflong
\begin{proof}
  The third statement follows from the first two by \[
  \bagMap{\inSet{e} \cap \outSet{e}} \subseteq \bagMap{\inSet{e}} \cap \bagMap{\outSet{e}} \subseteq \inSet{\bagMap{e}} \cap \outSet{\bagMap{e}}.
  \]

  We prove the first statement, the second is analogous. 
Consider any \(e=\undirEdge{u}{v}\in E(\exGraph{G}{H})\) and \(w \in \inSet{e}\). We need to show
that there is a temporal path from \(\bagMap{w}\) to one of \(\bagMap{u}\) or
\(\bagMap{v}\) arriving not later than \(\edgeLabel_{G}(\bagMap{e})\).  For this
consider the path \(P=p_{1}p_{2}\dots p_{\ell}\) from \(w\) to \(e\) that does not
end later than \(\Lambda(e)\). By \Cref{lem:expanded-graph-contract-paths}, the path
\(P'\coloneqq\bagMap{p_{1}}\bagMap{p_{2}}\dots\bagMap{p_{\ell}}\) is temporal in \(G\). Note
that, \(\bagMap{p_{1}} = \bagMap{w}\) and \(\bagMap{p_{\ell}}\in \set{\bagMap{u}, \bagMap{v}}\). For
the last edge \(e'=p_{\ell -1}p_{\ell}\) of \(P\), we have \(\Lambda(e') \leq \Lambda(e)\), and so
\(\edgeLabel_{G}(\bagMap{e'})\leq \edgeLabel_{G}(\bagMap{e})\). Thus, \(P'\) is a
path we were searching for.
\end{proof}
\fi

We now focus on product graphs, where both $G$ and $H$ are shifted matching graphs, as we will use these to prove
\Cref{thm:small-pivot-large-malicious}. Therefore, they will serve as the family of graphs which the already introduced techniques do not solve.

\begin{definition}
    Let $m,k \in \N$, Define the graph
    $\smsmbg{m}{k}=\exGraph{\ringShift{m}}{\ringShift{k}}$ to be the product graph
obtained by taking $\ringShift{m}$ as the outer graph and $\ringShift{k}$ as the
inner graph.
\end{definition}

Observe that each side of $\smsmbg{m}{k}$ has \(m\) bags each of size \(k\) and
so, overall, \(\smsmbg{m}{k}\) has \(mk\) vertices per side. We now investigate
the size of the pivot-sets in \(\smsmbg{m}{k}\) as well as the number of the
non-reverted edges.

\iflong
\begin{lemma}
\fi
\ifshort
\begin{lemma}[$\star$]
\fi
\label{lem:smsmbg-pivot}
  Let $m,k \in \N$ and consider $\smsmbg{m}{k}$. For any \(e \in E(\smsmbg{m}{k})\),
  we have that the size of the pivot-set \(\inSet{e}\cap\outSet{e}\) is at most \(2k\).
\end{lemma}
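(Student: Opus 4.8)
The plan is to derive this immediately from two earlier facts: the bag-projection inclusion $\bagMap{\inSet{e}\cap\outSet{e}}\subseteq\inSet{\bagMap{e}}\cap\outSet{\bagMap{e}}$ of \Cref{lem:expand-graph-in-out-set}, and the fact that the outer graph $\ringShift{m}$ has only trivial pivot-sets by \Cref{lem:ringshift-no-pivot-edge}. Concretely, I would fix an edge $e=\undirEdge{(u_G,u_H)}{(v_G,v_H)}\in E(\smsmbg{m}{k})$ and set $f\coloneqq\bagMap{e}=\undirEdge{u_G}{v_G}$, which is an edge of $\ringShift{m}$ (recall $\smsmbg{m}{k}=\exGraph{\ringShift{m}}{\ringShift{k}}$).

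Next I would invoke \Cref{lem:expand-graph-in-out-set}(3) to obtain $\bagMap{\inSet{e}\cap\outSet{e}}\subseteq\inSet{f}\cap\outSet{f}$, and then \Cref{lem:ringshift-no-pivot-edge} (which applies since $\ringShift{m}$ is a bi-clique) to conclude $\inSet{f}\cap\outSet{f}=\{u_G,v_G\}$. Hence every vertex $w\in\inSet{e}\cap\outSet{e}$ satisfies $\bagMap{w}\in\{u_G,v_G\}$, i.e.\ $w$ lies in one of the two bags $\bagMapOp^{-1}(u_G)$ or $\bagMapOp^{-1}(v_G)$. Since each bag of $\smsmbg{m}{k}$ is a copy of one side of $\ringShift{k}$ and therefore has exactly $k$ vertices, this yields $\abs{\inSet{e}\cap\outSet{e}}\le\abs{\bagMapOp^{-1}(u_G)}+\abs{\bagMapOp^{-1}(v_G)}=2k$, as claimed.

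There is essentially no obstacle here — the statement is a direct corollary of the structural lemma on product graphs together with the already-established triviality of pivot-sets in shifted matching graphs; the only point requiring a moment's care is that $u_G$ and $v_G$ lie on opposite sides of $\ringShift{m}$, so their preimages are one bag in $\mathcal{A}$ and one bag in $\mathcal{B}$, each of size $k$, giving the bound $2k$ rather than $k$. This is also exactly where the choice of $\ringShift{m}$ as the outer graph pays off: any outer bi-clique with trivial pivot-sets would force the product's pivot-set to be contained in just two bags, hence of size at most twice the inner bag size.
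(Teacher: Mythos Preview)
Your proposal is correct and follows essentially the same argument as the paper: invoke \Cref{lem:expand-graph-in-out-set} to get $\bagMap{\inSet{e}\cap\outSet{e}}\subseteq\inSet{\bagMap{e}}\cap\outSet{\bagMap{e}}$, use \Cref{lem:ringshift-no-pivot-edge} to reduce the right-hand side to $\{u_G,v_G\}$, and conclude by counting the two bags of size $k$. The paper's proof is just a terser version of what you wrote.
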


\iflong
\begin{proof}
  Let \(e=\undirEdge{u}{v}\in E(\smsmbg{m}{k})\). By
  \Cref{lem:ringshift-no-pivot-edge,lem:expand-graph-in-out-set}, we have
  \(\bagMap{\inSet{e}\cap\outSet{e}} \subseteq \set{\bagMap{u},\bagMap{v}}\) and therefore every $w \in \inSet{e}\cap\outSet{e}$ must be in the same bag as $u$ or $v$. As the size of each
  bag is \(k\), we have \(\abs{\inSet{e}\cap\outSet{e}}\leq 2k\).
\end{proof}
\fi

\iflong
\begin{lemma}
\fi
\ifshort
\begin{lemma}[$\star$]
\fi
\label{lem:smsmbg-malicous}
    Let $k,m \in \N$ and consider $\smsmbg{m}{k}$. For any
    \(e\in E(\smsmbg{m}{k})\), the size of the set of not \(e\)-reverted edges
    \(\malicious{e}\) is at least \((m - 1)\binom{k}{2}\).
\end{lemma}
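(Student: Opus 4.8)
Combining \Cref{lem:malicious-pivot} with \Cref{lem:expand-graph-in-out-set} only yields $\abs{\malicious{e}} \ge \abs{\mathcal{A}\setminus\inSet{e}}\cdot\abs{\mathcal{B}\setminus\outSet{e}}$, and by \Cref{lem:ringshift-in-out} this degenerates to $0$ precisely when the outer projection $\bagMap{e}$ has label $0$ or $m-1$ in $\ringShift{m}$, so a hands-on construction of not-$e$-reverted edges is needed. Write $\smsmbg{m}{k} = \exGraph{G}{H}$ with $G = \ringShift{m}$, $H = \ringShift{k}$, sides $\mathcal{A} = A_G\times A_H$, $\mathcal{B} = B_G\times B_H$, a vertex of $\mathcal{A}$ being $(a_i,a_p)$ with $i\in\setExc{m}$, $p\in\setExc{k}$ (analogously for $\mathcal{B}$), and $\edgeLabelv{(a_i,a_p)}{(b_j,b_q)} = (j-i\xmod m,\ q-p\xmod k)$ ordered lexicographically. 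The first step is to reduce to a canonical edge: shifting every outer index by a fixed $s$ and every inner index by a fixed $r$, simultaneously on $\mathcal{A}$ and $\mathcal{B}$, is a label-preserving automorphism of $\smsmbg{m}{k}$, hence preserves the $e$-reverted relation and the value $\abs{\malicious{e}}$; so I may assume $e = \undirEdge{(a_0,a_0)}{(b_J,b_Q)}$ for some $J\in\setExc{m}$, $Q\in\setExc{k}$.

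For the construction I would record the relevant extremal neighbours, which come from the factor graphs: $\earlyPartner{(a_i,a_p)} = (b_i,b_p)$ and $\latePartner{(b_j,b_q)} = (a_{j+1\xmod m},\,a_{q+1\xmod k})$. As candidates I take exactly the edges $\undirEdge{(a_{i'},a_{p'})}{(b_{i'},b_{q'})}$ whose two endpoints share the outer index $i'$, restricted to $i'\neq J$ and $q'<p'$, and I verify such an edge satisfies neither $\leqAt{(a_{i'},a_{p'})}{\latePartner{(b_{i'},b_{q'})}}{(b_J,b_Q)}$ nor $\leqAt{\earlyPartner{(a_{i'},a_{p'})}}{(b_{i'},b_{q'})}{(a_0,a_0)}$; by \Cref{def:reverted} this means it is not $e$-reverted. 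Each failure is a single lexicographic comparison of the tuple labels: substituting $\latePartner{}$, failing the first condition is exactly the outer coordinate $(J-i'\xmod m)$ of $\edgeLabelv{(a_{i'},a_{p'})}{(b_J,b_Q)}$ strictly exceeding the outer coordinate $(J-i'-1\xmod m)$ of $\edgeLabelv{\latePartner{(b_{i'},b_{q'})}}{(b_J,b_Q)}$, which holds iff $i'\neq J$ (for $i'=J$ the modular wrap-around makes it $0$ versus $m-1$, reversing the comparison) and then holds irrespective of $p',q'$; substituting $\earlyPartner{}$, both labels entering the second comparison carry the same outer coordinate $i'$, so failing the second condition collapses to the inner inequality $p'>q'$. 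The point of forcing both endpoints into the same bag is precisely that it neutralises the outer graph in the second condition and leaves a $\binom{k}{2}$-sized inner family per bag.

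These candidate edges are pairwise distinct, being indexed by $(i',p',q')$, and there are $m-1$ admissible values of $i'$ and $\binom{k}{2}$ admissible pairs $q'<p'$ in $\setExc{k}$, so $\abs{\malicious{e}}\ge(m-1)\binom{k}{2}$; for $m\le 1$ or $k\le 1$ the right-hand side is $0$ and there is nothing to show. I do not expect a conceptual obstacle, and the content lies in picking the right family — the step requiring the most care is the simultaneous verification of the two non-reverted conditions, where one must track the modular arithmetic of the outer coordinate (in particular that $i'=J$ versus $i'\neq J$ flips the first comparison) and confirm that the same-bag choice genuinely equalises the outer coordinates in the second comparison so it reduces to $p'>q'$.
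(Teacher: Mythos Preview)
Your argument is correct. The automorphism reduction is fine, and the family of edges $\undirEdge{(a_{i'},a_{p'})}{(b_{i'},b_{q'})}$ with $i'\ne J$ and $q'<p'$ indeed lands in $\malicious{e}$: with $i'\ne J$ the outer coordinates $J-i'\bmod m$ and $J-i'-1\bmod m$ are distinct and ordered the right way, forcing $\greaterAt{a'}{\latePartner{b'}}{b}$; and since $\earlyPartner{a'}=(b_{i'},b_{p'})$ and $b'=(b_{i'},b_{q'})$ share the outer coordinate $i'$ at $a=(a_0,a_0)$, the second comparison reduces to $p'>q'$. The count $(m-1)\binom{k}{2}$ follows.

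The paper takes a closely related but genuinely different route. It parameterises $b'$ via $a''\coloneqq\latePartner{b'}$ and chooses $a',a''$ from the \emph{same} $A$-bag $A_\ell$ with $1\le\ell\le m-1$; the resulting edge then runs from $A_\ell$ to $B_{\ell-1}$ rather than, as in your construction, from $A_{i'}$ to $B_{i'}$. In the paper's family the roles of the two conditions are swapped relative to yours: the comparison at $b$ is decided by \emph{inner} coordinates (yielding the $\binom{k}{2}$ count), while the comparison at $a$ is automatic from the \emph{outer} coordinates because $\earlyPartner{a'}\in B_\ell$ and $\latePartner{a''}\in B_{\ell-1}$. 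So the two proofs are dual: each makes one of the two non-reverted conditions automatic via the outer shift and extracts $\binom{k}{2}$ from the other via the inner graph. Your version has the minor bonus that the reduction to $a=(a_0,a_0)$ makes the inner comparison literally $p'>q'$ without tracking $Q$, while the paper only normalises the outer index; neither approach is materially simpler.
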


\iflong
\begin{proof}
  \newcommand{\aBag}[1]{A_{#1}}
  \newcommand{\bBag}[1]{B_{#1}}
  Recall, that there is an order on the super vertices
  \(a_{0},a_{1},\dots, a_{m-1}\) and \(b_{0},b_{1},\dots, b_{m-1}\) such that
  for all \(a_{i}\in A_{G}\) and \(b_{j} \in B_{G}\) we have
  \(\edgeLabel_G(\undirEdge{a_{i}}{b_{j}}) = (j-i)\xmod m\). For all \(0\leq i< m\), we write
\(\aBag{i} \coloneqq \bagMapOp^{-1}(a_{i})\) and
\(\bBag{j} \coloneqq \bagMapOp^{-1}(b_{i})\).

Now consider any edge \(e \in E(\smsmbg{m}{k})\). Notice that for all \(t \in \Z\), the
temporal graph \(\smsmbg{m}{k}\) is isomorphic to itself (in the sense that the map does not change the labels between two vertices) with the function
\[
  (a_i, a_j) \mapsto 
            (a_{(i + t)\xmod m}, a_j) \text{ and } (b_i, b_j) \mapsto 
            (b_{(i + t)\xmod m}, b_j).
\] We can therefore assume without loss of generality, that \(e = \undirEdge{a}{b}\)
with \(a \in \aBag{0}\) and let \(j\) be such that \(b \in \bBag{j}\). Therefore, for any
\(0 \leq x < y<m\), \(b_{x}\in\bBag{x}\), and
\(b_y\in\bBag{y}\), we have \(\lessAt{b_{x}}{b_{y}}{a}\).

One can verify, that \(\smsmbg{m}{k}\) is extremally matched. So, we obtain that
\begin{align*}
  \malicious{e}&=
\set{\undirEdge{a'}{ \latePartner{a''}} \in E(\smsmbg{m}{k})}{\undirEdge{a'}{\latePartner{a''}}\text{ is not \(e\)-reverted}}\\
	    &=\set{\undirEdge{a'}{\latePartner{a''}} \in E(\smsmbg{m}{k})}{\greaterAt{a'}{a''}{b}\text{ and
    } \greaterAt{\earlyPartner{a'}}{\latePartner{a''}}{a}}.
\end{align*}

Now consider any \(1 \le \ell \le m-1\) as well as \(a',a''\in \aBag{\ell}\) with \(\greaterAt{a'}{a''}{b}\).
By the properties of shifted matching graphs, \(\earlyPartner{a'} \in \bBag{\ell}\) and
\(\latePartner{a''}\in \bBag{(\ell - 1)\xmod m} = \bBag{\ell-1}\). By choice of \(a\), we have
\(\greaterAt{\earlyPartner{a'}}{\latePartner{a''}}{a}\). So,
\((a', \latePartner{a''}) \in \malicious{e}\).

For every \(1 \le \ell \le m-1\), there are \(\binom{k}{2}\) distinct pairs contributing to
\(\malicious{e}\). Additionally, for every of the \(m-1\) choice of \(\ell\), these pairs are
different. Overall, \(\abs{\malicious{e}}\geq (m-1) \binom{k}{2}\).
\undef\aBag
\undef\bBag
\end{proof}
\fi

\iflong
Now we can finally conclude that these two lemmas together mean that our tools are not strong enough to solve this graph class yet.

\begin{proof}[Proof of \Cref{thm:small-pivot-large-malicious}]
  Let \(m_{n} \coloneqq 1+\left\lceil\frac{n}{f(n)}\right\rceil\) and
  \(k_{n}\coloneqq 1+f(n)\) and choose
\(G_{n} = \smsmbg{m_{n}}{k_{n}}\). Denote with
\(s_{n} = m_{n} k_{n}\) the size of each
side of \(G_{n}\). We have
\begin{align*}
s_{n}&= m_{n}k_{n}=\left(1+\left\lceil\frac{n}{f(n)}\right\rceil\right)\left(1 + f(n)\right)
  \geq \frac{n}{f(n)}\cdot f(n) \geq n,
\intertext{and}
s_{n}&\leq \left(2 + \frac{n}{f(n)}\right)\cdot (1 + f(n))\leq 2+ 2f(n) + 2n.
\intertext{
Since \(f(n)\in \bigO(n)\), we have
\(s_{n} \in \Theta(n)\). By \Cref{lem:smsmbg-pivot}, we have that any pivot-set is in
\(\bigO(k_{n})=\bigO(f(n))\), and by \Cref{lem:smsmbg-malicous} any set of not reverted edges
has size at least}
  (m_{n} - 1) \binom{k_{n}}{2}&\geq \frac{n}{f(n)}\left(\frac{(f(n) + 1)\cdot f(n)}{2}\right)
  \geq\frac{nf(n)}{2}
.\qedhere
\end{align*}
\end{proof}
\fi

\ifshort
The last two lemmas show that our tools are not strong enough to solve this graph class yet.
\fi
With our previous techniques, we are not able to give a linear-size spanner for every
product graph. By \Cref{thm:small-pivot-large-malicious}, for $f(n) \in
\Theta(\sqrt{n})$, \Cref{thm:edge-pivot} and
\Cref{thm:malicious-solution} only give us $\Theta(n^{3/2})$ spanners. Still,
product graphs exhibit a lot of structure, in the sense that all edges between
two bags have roughly the same time labels. We use this to construct bi-spanners
for the product graph based on bi-spanners for the two underlying graphs.

\iflong
\begin{lemma}
    \label{lem:general-product-spanner}
Let $G = \biclique{A_G}{B_G}{\edgeLabel_G}$ and $H = \biclique{A_H}{B_H}{\edgeLabel_H}$ be temporal bi-cliques of sizes $n_G$ and $n_H$ and let $S_G, S_H$ be temporal bi-spanners respectively. Then, there is a bi-spanner in $\exGraph{G}{H}$ of size at most $|S_G|n_H + |S_H|n_G$.
\fi
\ifshort
\begin{lemma}[$\star$]
    \label{lem:general-product-spanner}
Let $G = \biclique{A_G}{B_G}{\edgeLabel_G}$ and $H = \biclique{A_H}{B_H}{\edgeLabel_H}$ be temporal bi-cliques of sizes $n_G$ and $n_H$ and let $S_G, S_H$ be temporal bi-spanners respectively. Then, there is a bi-spanner in $\exGraph{G}{H}$ of size at most $|S_G|\cdot |S_H|$.
\fi
\end{lemma}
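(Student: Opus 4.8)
The plan is to take $S$ to be the ``edgewise product'' of $S_G$ and $S_H$ and then show it is a bi-spanner by interleaving the two given paths. Concretely, I would let $S$ consist of all edges $\undirEdge{(g,h)}{(g',h')}$ of $\exGraph{G}{H}$ such that $\undirEdge{g}{g'} \in S_G$ and $\undirEdge{h}{h'} \in S_H$. Every such pair of a $G$-edge and an $H$-edge determines at most one edge of $\exGraph{G}{H}$ (if $g\in A_G$ and $h\in A_H$ then $\undirEdge{g}{g'}$ and $\undirEdge{h}{h'}$ are forced to point from the $A$-side to the $B$-side, and $(g,h)\in\mathcal A$, $(g',h')\in\mathcal B$), so $S$ is well defined and $|S| \le |S_G|\cdot|S_H|$; this already settles the size bound, so the work is entirely in verifying that $S$ is a bi-spanner.

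For the bi-spanner property I would fix $\mathbf a = (a_G, a_H) \in \mathcal A$ and $\mathbf b = (b_G, b_H) \in \mathcal B$. Since $S_G$ is a bi-spanner for $G$, choose a temporal walk $P_G = g_0 g_1 \cdots g_p$ from $a_G$ to $b_G$ with every edge in $S_G$; likewise choose $P_H = h_0 h_1 \cdots h_q$ from $a_H$ to $b_H$ with every edge in $S_H$. Because $G$ is bipartite with parts $A_G, B_G$ and $P_G$ crosses from the $A$-side to the $B$-side, $p$ is odd (similarly $q$); in particular $p,q\ge 1$, so the edges $\undirEdge{h_0}{h_1}\in S_H$ and $\undirEdge{g_{p-1}}{g_p}\in S_G$ exist. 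Now I build a walk in $\exGraph{G}{H}$ in two phases. Phase~A runs $P_G$ in the first coordinate while the second coordinate oscillates along the single $H$-edge $\undirEdge{h_0}{h_1}$, i.e. it visits $(g_0,h_0),(g_1,h_1),(g_2,h_0),(g_3,h_1),\dots,(g_p,h_1)$, where the $H$-coordinate ends at $h_1$ because $p$ is odd. Phase~B then runs the suffix $h_1 h_2 \cdots h_q$ of $P_H$ in the second coordinate while the first coordinate oscillates along the single $G$-edge $\undirEdge{g_{p-1}}{g_p}$, i.e. it visits $(g_p,h_1),(g_{p-1},h_2),(g_p,h_3),\dots,(g_p,h_q)$ (empty if $q=1$), ending at $(g_p,h_q)=(b_G,b_H)=\mathbf b$.

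It then remains to check three things, all routine. First, every edge of either phase has its $G$-projection in $S_G$ (an edge of $P_G$, or the last edge of $P_G$) and its $H$-projection in $S_H$ (the first edge of $P_H$, or an edge of $P_H$), hence lies in $S$; a short parity count also confirms each intermediate vertex lands in $\mathcal A$ or $\mathcal B$ as required. Second, each phase is temporal on its own: in Phase~A all edge labels have the same second entry $\lambda_H(h_0,h_1)$, so lexicographic monotonicity of the pair-labels follows from monotonicity of the first entries $\lambda_G(g_i,g_{i+1})$ along $P_G$; symmetrically, in Phase~B all edge labels have the same first entry $\lambda_G(g_{p-1},g_p)$, and monotonicity follows from $P_H$ being temporal. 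Third, at the junction the last Phase-A edge and the first Phase-B edge both have first entry $\lambda_G(g_{p-1},g_p)$, and their second entries satisfy $\lambda_H(h_0,h_1)\le\lambda_H(h_1,h_2)$ precisely because $P_H$ is temporal, so the concatenation stays (non-strictly) temporal. Hence $\mathbf a$ reaches $\mathbf b$ within $S$, and $S$ is a bi-spanner of size at most $|S_G|\cdot|S_H|$.

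The main obstacle is the interleaving idea itself: one must notice that since the label of an edge of $\exGraph{G}{H}$ is the lexicographic pair of a $G$-label and an $H$-label, ``freezing'' one coordinate onto a fixed edge makes that coordinate's label constant, so the other coordinate alone controls temporal feasibility. This is what lets us splice all of $P_G$ and then all of $P_H$ into one monotone label sequence, and it is also what makes the junction work, because the frozen $G$-label used throughout Phase~B is exactly the largest $G$-label used anywhere in Phase~A. Everything else is bipartite-parity bookkeeping; the resulting non-simple walk may optionally be shortcut to a genuine temporal path as noted in the preliminaries, but this is not needed for a bi-spanner.
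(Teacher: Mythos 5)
Your proof is correct, but it takes a genuinely different route from the paper's. The paper builds an asymmetric spanner: it fixes a representative edge of minimal label $\undirEdge{a_H^*}{b_H^*}$ in $S_H$ and, for each $b_G \in B_G$, the $A_G$-endpoint of the latest $S_G$-edge at $b_G$; it then includes, for each edge of $S_G$, one copy per vertex of $A_H$ (landing on the representative $b_H^*$), and for each edge of $S_H$, one copy per bag of $B_G$ (anchored at that latest partner). A path is then routed by first traversing $P_G$ with the $H$-coordinate pinned to the representative edge and then traversing $P_H$ inside the destination bag, with a short case distinction at the hand-off. This yields the sharper additive bound $\abs{S_G}n_H + \abs{S_H}n_G$, which the paper only mentions as an improvement in the short version. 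You instead take the full product set $\set{\undirEdge{(g,h)}{(g',h')}}{\undirEdge{g}{g'}\in S_G,\ \undirEdge{h}{h'}\in S_H}$ and prove connectivity by the oscillation trick: run $P_G$ while bouncing on the first edge of $P_H$, then run $P_H$ while bouncing on the last edge of $P_G$, with the junction working because the frozen $G$-label in the second phase is the maximum $G$-label of the first phase and lexicographic ordering lets the $H$-labels take over. Your construction is symmetric, needs no choice of representatives and no case distinction, and exactly matches the stated multiplicative bound $\abs{S_G}\cdot\abs{S_H}$; the paper's buys a better constant (both are linear in $n_G n_H$ when $S_G, S_H$ are linear, which is all the application needs). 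Your parity bookkeeping and the temporality checks in both phases and at the junction are all sound.
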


\newcommand{\lateBag}[1]{\lateMatchingG{S_G}({#1})}
\iflong
\begin{proof}
\newcommand{\rb}[1]{b_H^*} 
\newcommand{\ra}[1]{a_H^*} 

  Let $b_G \in B_G$. Define $\lateBag{b_G}$ to be the $a_G \in A_G$, such that
  $\undirEdge{a_G}{b_G}$ is the latest edge in $S_G$ that is incident to $b_G$.
  Note that since $S_G$ is a spanner, every $b_G$ must be incident to at least
  one edge. Furthermore, choose $b_H^* \in B_H$ and $a_H^* \in A_H$ such that
  the label of $\undirEdge{a_H^*}{b_H^*}$ is minimal. 
  
  We construct a temporal spanner $S$ for $\exGraph{G}{H}$ as follows. For every
  $\undirEdge{a_G}{ b_G} \in S_G$ and $a_H \in A_H$, add $\undirEdge{(a_G,
  a_H)}{(b_G, \rb{b_G})}$ to $S$. Additionally, for every $\undirEdge{a_H}{b_H}
  \in S_H$ and $b_G \in B_G$, add $\undirEdge{(\lateBag{b_G}, a_H)}{(b_G, b_H)}$
  to $S$.

  We can immediately verify that this results in a set of size $|S| \leq
  |S_G|n_H + |S_H|n_G$. To show temporal connectivity, let $(a_G, a_H) \in A_G
  \times A_H$ and $(b_G, b_H) \in B_G \times B_H$, we aim to construct a path
  $P$ from $(a_G, a_H)$ to $(b_G, b_H)$. Since $S_G$ is a bi-spanner for $G$,
  there is a temporal path $P_G$ from $a_G$ to $b_G$ in $S_G$. For every edge
  $\undirEdge{a_G^P}{b_G^P}$ in $P_G$ except for the last one, we append
  $\undirEdge{(a_G^P, a_H)}{(b_G^P, \rb{b_G^P})}$ to $P$. Note that these edges
  are in $S$, since $\undirEdge{a_G^P}{b_G^P} \in S_G$. Furthermore, it is easy
  to verify that two neighboring edges share an endpoint and since $P_G$ is
  temporal, path $P$ is temporal in $\exGraph{G}{H}$. Since we left out exactly
  the last edge of $P_G$, the path $P$ ends in some vertex $(a_G', a_H)$. We
  distinguish two cases. 
  
  Case 1: $a_G' = \lateBag{b_G}$. As $S_H$ is a bi-spanner for $H$, there exists
  a temporal path $P_H$ from $a_H$ to $b_H$ in $S_H$. For an edge
  $\undirEdge{a_H^P}{b_H^P}$ in $P_H$, append $\undirEdge{(a_G', a_H^P)}{(b_G,
  b_H^P)}$ to $P$. Since $a_G' = \lateBag{b_G}$, this edge is in $S$ by
  construction. Furthermore, these edges have a higher label than all previous
  edges in $P$, since $\undirEdge{a_G'}{b_G}$ must be the last edge of $P_G$. As
  $P_H$ is temporal, path $P$ remains temporal and it is easy to verify that $P$
  is indeed a path from $(a_G, a_H)$ to $(b_G, b_H)$.

  Case 2: $a_G' \ne \lateBag{b_G}$. We append $\undirEdge{(a_G', a_H)}{(b_G,
  \rb{b_G})}$, which is in $S$ because $\undirEdge{a_G'}{b_G}$ is the last edge
  of $P_G$. Further, we can append $\undirEdge{(\lateBag{b_G}, \ra{b_G})}{(b_G,
  \rb{b_G}}$ to $P$, since $\undirEdge{\lateBag{b_G}}{b_G} \in S_G$. Again, as
  $S_H$ is a bi-spanner for $H$, there exists a temporal path $P'_H$ from
  $\ra{b_G}$ to $b_H$ in $S_H$. In the same fashion as in Case 1, for every edge
  $\undirEdge{a_H^P}{b_H^P}$ in $P'_H$, we append $\undirEdge{(\lateBag{b_G},
  a_H^P)}{(b_G, b_H^P)}$ to $P$. See \Cref{fig:product-solution} for a
  visualization of the constructed path $P$. As $\undirEdge{\ra{b_G}}{\rb{b_G}}$
  has the minimal label in $S_H$ and $a_G' \ne \lateBag{b_G}$, our construction
  yields a temporal path which is clearly included in $S$.
\end{proof} 
\fi

\ifshort
With more careful analysis, this construction can be improved to a bound of $|S_G|n_H + |S_H|n_G$.
\fi

If $G$ and $H$ each have linear-size bi-spanners, this proves the existence of a linear-size bi-spanner for $\exGraph{G}{H}$.
Since we know from \Cref{sec:non-pivotable} how to construct a linear-size bi-spanner for $\ringShift{m}$, we can now construct a linear-size bi-spanner for $\smsmbg{m}{k}$.

Additionally, we can augment \Cref{lem:general-product-spanner} to also deal with different graphs $H_e$ per edge~$e$ in $G$. 
To do this, we have to change the definition of our representatives from $a_H^*$ and $b_H^*$ to functions $r_A, r_B$, such that for $b_G \in B_G$ 
the label of $\undirEdge{(\lateBag{b_G}, r_A(b_G))}{(b_G, r_B(b_G))}$ is minimal.
This enables us not only to construct bi-spanners for $G \times H$ but any bi-cliques that can be decomposed into bags of vertices.


\section{Conclusion}
\begin{figure}[t]
    \centering
    \includegraphics[width=56mm]{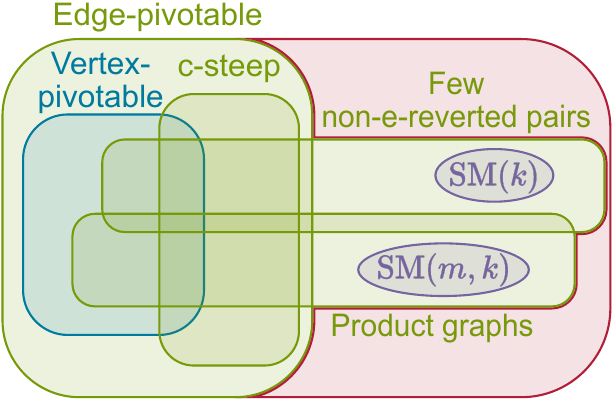}
    \caption{Overview of techniques. Previously proposed techniques are blue, new ones are green. Example families of graphs are purple. The red area contains instances that are not solved by any known techniques.}
    \label{fig:overview}
\end{figure}

We present a significant step forwards towards answering whether temporal
cliques admit linear-size temporal spanners, see \Cref{fig:overview} for an
illustration. The obvious problem that stems from this paper, is whether classes
of graphs exist that are not solved by our techniques. We conjecture that this
is the case but the classes of graphs remaining should be quite artificial.
Another possible avenue is to generalize some of our techniques to include a
wider variety of graphs. For example, we could consider investigating the
temporal ordering of edges and try to generalize the concept of reverted edges.
Also, product graphs and their solution suggest a way in which temporal
bi-cliques may be composed and decomposed. Future research could generalize this
idea and more closely analyze the cases in which a temporal spanner can be
obtained by combining temporal spanners on subinstances. Towards the opposite
direction, it is now easier to find a $\Omega(n \log n)$ counterexample, if it
exists at all, since we have provided many properties that any counterexample must avoid.

\bibliography{paper}

\end{document}
https://www.overleaf.com/project/64f1b7b935189c50f58b9169